\newtheorem{proposition}{Proposition} 
\newtheorem{definition}{Definition}
\newtheorem{lemma}{Lemma}
\newtheorem{theorem}{\textbf{\textsc{Theorem}}}
\begin{document}
%
\title{Dynamic Federated Learning-Based Economic Framework for Internet-of-Vehicles}
\author{Yuris Mulya Saputra, Dinh Thai Hoang, Diep N. Nguyen, Le-Nam Tran, Shimin Gong, and Eryk Dutkiewicz
	\IEEEcompsocitemizethanks{\IEEEcompsocthanksitem Y.~M.~Saputra is with the
		School of Electrical and Data Engineering, University of Technology Sydney, Sydney, NSW 2007, Australia (e-mail: yurismulya.saputra@student.uts.edu.au) and Department of Electrical Engineering and Informatics, Vocational College, Universitas Gadjah Mada, Yogyakarta 55281, Indonesia.
		\IEEEcompsocthanksitem D.~T.~Hoang, D.~N.~Nguyen, and E.~Dutkiewicz are with the
		School of Electrical and Data Engineering, University of Technology Sydney, Sydney, NSW 2007, Australia (e-mail: \{hoang.dinh, diep.nguyen, eryk.dutkiewicz\}@uts.edu.au).
		\IEEEcompsocthanksitem L.~N.~Tran is with the School of Electrical and Electronic Engineering, University College Dublin, Belfield, Dublin 4, Ireland (e-mail: nam.tran@ucd.ie)
		\IEEEcompsocthanksitem S. Gong is with the Sun Yat-sen University, China (e-mail: gongshm5@mail.sysu.edu.cn).
	}
	\thanks{}}

\IEEEtitleabstractindextext{%
\begin{abstract}

Federated learning (FL) can empower Internet-of-Vehicles (IoV) networks by leveraging smart vehicles (SVs) to participate in the learning process with minimum data exchanges and privacy disclosure. The collected data and learned knowledge can help the vehicular service provider (VSP) improve the global model accuracy, e.g., for road safety as well as better profits for both VSP and participating SVs. Nonetheless, there exist major challenges when implementing the FL in IoV networks, such as dynamic activities and diverse quality-of-information (QoI) from a large number of SVs, VSP's limited payment budget, and profit competition among SVs. In this paper, we propose a novel dynamic FL-based economic framework for an IoV network to address these challenges. Specifically, the VSP first implements an SV selection method to determine a set of the best SVs for the FL process according to the significance of their current locations and information history at each learning round. Then, each selected SV can collect on-road information and offer a payment contract to the VSP based on its collected QoI. For that, we develop a multi-principal one-agent contract-based policy to maximize the profits of the VSP and learning SVs under the VSP's limited payment budget and asymmetric information between the VSP and SVs. Through experimental results using real-world on-road datasets, we show that our framework can converge 57\% faster (even with only 10\% of active SVs in the network) and obtain much higher social welfare of the network (up to 27.2 times) compared with those of other baseline FL methods.

\end{abstract}

\begin{IEEEkeywords}
	Federated learning, IoV, quality-of-information, contract theory, profit optimization, vehicular networks.
\end{IEEEkeywords}}

\maketitle

\IEEEdisplaynontitleabstractindextext

\IEEEpeerreviewmaketitle

\IEEEraisesectionheading{\section{Introduction}\label{sec:Int}}

\IEEEPARstart{A}ccording to the recent Allied Market Research's outlook~\cite{AMR:2018}, the global market value of the Internet-of-Vehicles (IoV) is expected to grow by more than 215\% in 2024 due to significant demands of road safety from smart vehicles (SVs), e.g., autonomous cars and electric vehicles. Through the IoV, the vehicular service provider (VSP) can build on-road services to enhance the driving safety for the SV users. In particular, the VSP can first obtain on-road data from active SVs e.g., road conditions, location, and driving activities. Using the centralized learning process, the VSP can then generate the meaningful on-road information with high quality~\cite{Nie:2018}, and share the updated on-road information to requesting customers (e.g., SVs, public transportations, local governments, and mobile users) in its coverage areas. In fact, the aforementioned mechanism has been widely adopted by several commercial on-road applications, e.g., the Placemeter (www.placemeter.com), Google Maps (www.google.com/maps), and Waze (www.waze.com). However, sharing raw on-road data to the VSP in a centralized manner may face the huge computation and storage costs for the VSP, privacy disclosure for the SV users, and severe network congestion in the IoV (when a huge amount of data is transferred over the network). Therefore, federated learning (FL) as a highly-effective distributed machine learning approach has a great potential to address these issues~\cite{Lim:2020}. Specifically, instead of sharing raw on-road data, each SV in the FL-based IoV network can train its local on-road data to generate the local trained model. This local model is then collected by the VSP to update the global on-road model, aiming at enhancing the on-road model accuracy. By doing so, we can mitigate the inherent problems that occur in the centralized learning systems.  

Although the applications of FL-based IoV networks have been studied in a few works, e.g.,~\cite{Samarakoon:2018,Pokhrel:2020,Saputra:2020}, the conventional FL model is inefficient or even impractical to be implemented in real-world IoV networks due to several reasons. First, the behaviors of SVs in the networks are very dynamic in practice, e.g., some SVs move frequently, while some other SVs are occasionally disconnected from the network (e.g., due to unreliable communication channels, multiple handovers, and Internet unvailability at particular periods). Furthermore, training local datasets and collecting local trained models from all the SVs in the networks for the learning process are costly and impractical due to huge communication overheads, especially with a large number of moving SVs. Second, the quality-of-information (QoI), that includes on-road data in various locations at different times~\cite{Song:2014}, obtained by the SVs is remarkably diverse. That makes the concurrent learning process from all SVs even worse when some SVs have low QoI. Third, due to the VSP's limited payment budget and the inherent profit competition among SVs, how to optimally incentivize the ``right'' SVs for their contributions to the FL process while maximizing the profit of VSP still remains as an open issue, especially under the information asymmetry between the VSP and SVs.

To address the first challenge due to dynamic activities of a large number of SVs, solutions to select/schedule only a set of FL local learners, e.g., smartphones, IoT devices, and vehicles, in the wireless networks can be adopted. For example, in~\cite{Yang:2019}, the authors introduce random scheduling, round-robin, and proportional fair methods to choose FL learners. Other random selection methods using asynchronous FL and heterogeneous networks are also discussed in~\cite{Xie:2019} and~\cite{Li:2020}, respectively. 
However, these works assume that the selected FL learners can implement local learning process equally without considering their QoI (the second aforementioned challenge). In fact, the QoI obtained by SVs in the IoV networks varies dramatically~\cite{Song:2014}, e.g., some SVs are not moving for a certain period, and thus they may provide low QoI. Consequently, the low QoI can degrade the prediction accuracy or cause FL performance instability (when the low-quality trained local models are aggregated together with other local models during the learning process)~\cite{Amiri:2020}, leading to a lower profit for the VSP. Furthermore, all the above works do not consider any incentive mechanism to compensate the selected FL learners for the learning process, and thus discourage the FL learners to participate in and contribute high QoI for the next FL processes~\cite{Lim:2020}.  Note that a ``right'' incentive mechanism can also in turn help to ``recruit'' the ``right'' SVs. Designing such an incentive model is challenging, especially under the limited budget of the VSP, the competition among the SVs, as well as the information asymmetry between the VSP and selected SVs.  

In this paper, we aim to address the aforementioned challenges through proposing a novel dynamic FL-based economic framework that selectively engages a subset of best SVs for the FL process at each learning round. 
Specifically, the VSP can first select active SVs whose current locations are within significant areas. These significant areas are determined based on the Average Annual Daily Flow (AADF), i.e., the average number of vehicles passing through pre-defined roads in a particular area on one day~\cite{UK:2020}. 
From the set of active SVs based on the location significance, the VSP can further choose the best SVs based on their information significance (or QoI) history, e.g., due to the VSP's limited payment budget for the learning SVs~\cite{Zheng:2017}. Through combining both location and information significance, the VSP can obtain useful on-road information~\cite{Hussain:2019} and reliable trained model updates~\cite{Lim:2020} from the selected SVs, aiming at enhancing the learning quality of the FL. From this SV selection, the VSP can provide an incentive mechanism for the best SVs which share their current information significance for the FL process. Nonetheless, due to the the VSP's limited payment budget and the existence of multiple learning SVs at each round, each selected SV may  compete with other selected SVs in obtaining the corresponding payment as the incentive of its local training process. To address this issue, the authors in~\cite{Zhan:2020, Khan:2020, Sarikaya:2020} propose a non-collaborative Stackelberg game model where the VSP and mobile devices act as a leader and followers, respectively. However, these methods are only applicable when the mobile devices know the full information of the VSP's payment budget (referred to as \emph{information symmetry}). In practice, the VSP usually keeps its payment budget confidential~(referred to as \emph{information asymmetry}) and the best SVs may not want to follow controls by the VSP completely (due to the conflict of economic interests between them)~\cite{Bolton:2005}. Hence, the above approaches are ineffective to implement in our problem.


Given the above, we develop a \emph{multi-principal one-agent (MPOA)} contract-based economic model~\cite{Bernheim:1986}, where the best SVs (as \emph{principals}) non-collaboratively offer contract agreements (containing information significance and offered payment) to the VSP. The VSP (as \emph{the agent}) is then in charge of optimizing the offered contracts. In this case, we formulate the contract model as a non-collaborative learning contract problem to maximize the profits of the VSP and the best SVs at each round under the VSP's information-asymmetry and common constraints, i.e., the individual rationality (IR) and incentive compatibility (IC). The IR constraints ensure that the VSP always joins the FL process. Meanwhile, the IC constraints guarantee that the VSP always obtains the maximum profit when the best contract determined for the VSP is applied. 
To find the optimal contracts for the SVs, we first transform the contract optimization problem into an equivalent low-complexity problem. We then develop an iterative algorithm which can quickly find the equilibrium solution, i.e., optimal contracts, for the transformed problem. After that, the VSP and selected SVs can execute the FL algorithm to improve the global on-road model accuracy at each learning round. Moreover, at the end of each round, the VSP can update its net profit considering the current round's global model accuracy and freshness (i.e., how up-to-date the global model in the FL process is). We then conduct experiments using real-world on-road datasets from all major roads of 190 areas/local districts and 1.5M traffic accidents in the United Kingdom (UK) between 2000 and 2016. The experiment results demonstrate that our framework can improve the social welfare of the network up to 27.2 times and speed up the learning convergence up to 57\% compared with those of the baseline FL methods~\cite{Yang:2019}.

Although a few recent works, e.g.,~\cite{Kang:2019,Ye:2020}, have studied the contract theory in FL, their proposed approaches have some limitations and are inapplicable in IoV networks. First, they only select the FL learners based on the QoI without considering location significance which is one of the most important factors in IoV networks. Second, both works assume that the FL learners' data with certain quality have been collected 
prior to the FL learner selection. This is impractical in IoV networks where the SVs are dynamically moving, and thus the QoI may vary over time. Third, they are limited to the \emph{one-principal multi-agent}-based contract policy, i.e., not accounting for the competition among FL learners. Moreover, the work in~\cite{Kang:2019} implements the contract optimization only once and utilizes the same FL learners for the entire FL process until reaching a convergence. Meanwhile, the FL learner selection in~\cite{Ye:2020} is determined by the vehicular users (not the central server). As such, there is a possibility that some candidate FL learners with high QoI may not be able to join in the FL process, and thus the overall prediction model accuracy can be degraded. To the end, the major contributions of our work are as follows:

\begin{itemize}
	
	\item Propose a novel dynamic FL-based economic framework for the IoV network, aiming at selecting the best SVs and maximizing profits for the VSP as well as learning SVs to quickly achieve the converged global prediction model for road safety. 
	
	\item Design the dynamic SV selection method taking both location and QoI of SVs into account. As such, we can select the best quality SVs for the FL process, thereby improving the overall learning process and yielding more profits for the VSP. 
	
	\item Develop an MPOA-based contract problem which can incorporate the learning payment competition among the selected SVs under the VSP's limited and unknown payment budget.
	
	\item Propose a transformation method to reduce the complexity of the original optimization problem, and then develop a light-weight iterative algorithm which can quickly find the optimal contracts for SVs. 
	
	\item Theoretically analyze the convergence of the proposed dynamic FL and then provide detailed analysis to show the impacts of the global model accuracy and its freshness on the net profit of the VSP. 
	
	\item Conduct extensive experiments to evaluate the proposed framework using two real-world on-road datasets in the UK. These results provide useful information to help the VSP in designing the effective FL with SV selection method in the IoV network.
	
\end{itemize}
The rest of this paper is organized as follows. Section~\ref{sec:EV_scheme} describes the proposed economic framework for the IoV. Section~\ref{sec:SVS} introduces the proposed SV selection method, and Section~\ref{sec:PF} presents the proposed MPOA contract-based problem and solution. 
Section~\ref{sec:FLS} describes the FL algorithm with the selected SVs. Moreover, the VSP's profit analysis based on the global model accuracy and freshness is provided in Section~\ref{sec:VSP_A}. The performance evaluation is given in Section~\ref{sec:PE}, and then the conclusion is drawn in Section~\ref{sec:Conc}.




%
%

\begin{figure*}[!t]
	\centering
	\includegraphics[scale=0.398]{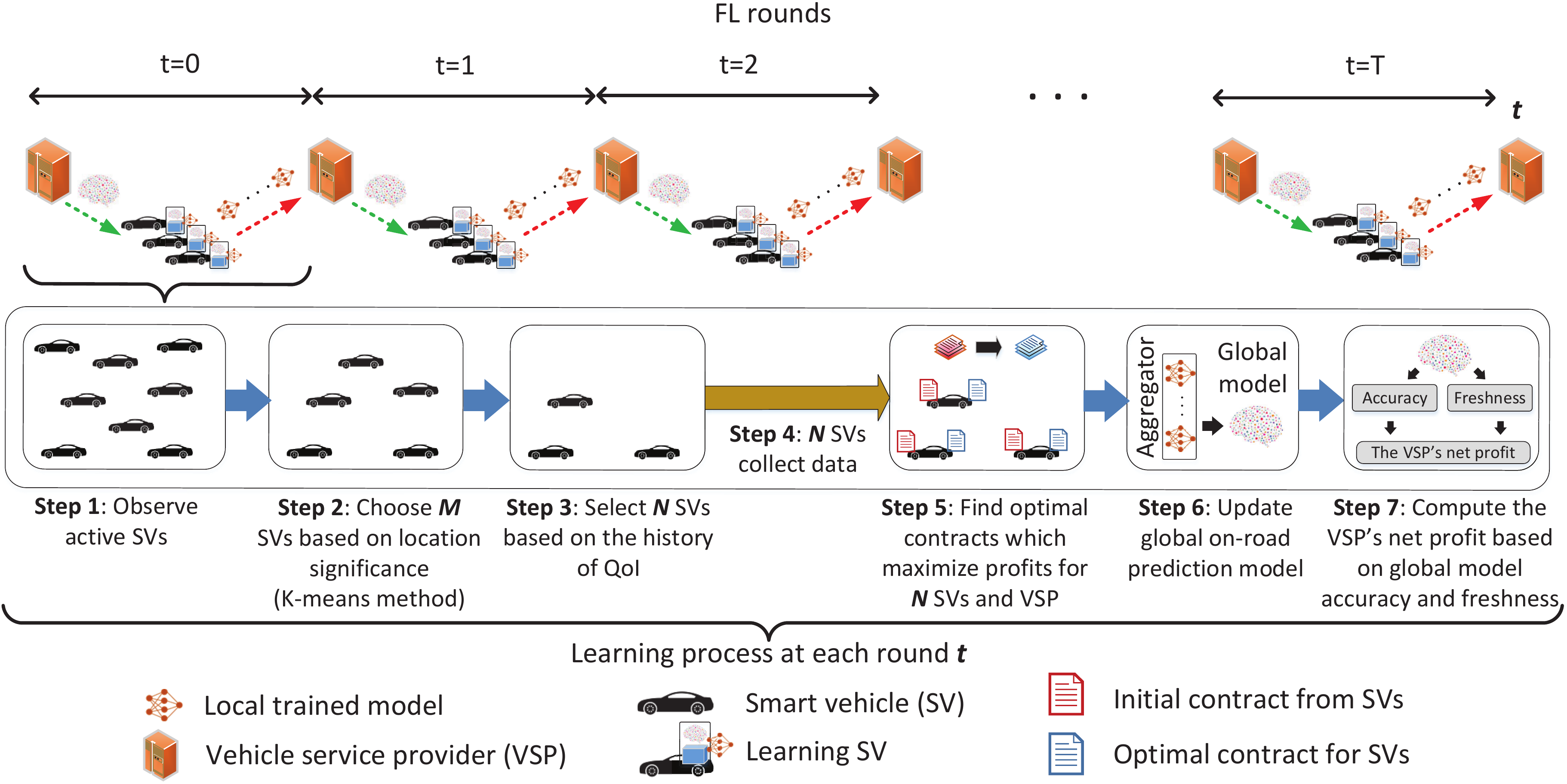}
	\caption{The proposed dynamic FL-based economic framework for the IoV.}
	\vspace{-0.5em}
	\label{fig:SM2}
\end{figure*}

\section{Dynamic Federated Learning-Based Economic Framework}
\label{sec:EV_scheme}


Consider an IoV network including one VSP and multiple active SVs who can participate in the dynamic FL process. Let $\mathcal{I} = \{1,\dots,i,\ldots,I\}$ denote the set of active SVs in the IoV network. Periodically, each SV-$i$ can monitor its current on-road status through its embedded sensor devices, e.g., the camera, weather, and global positioning system (GPS) sensors. From this status, the SV can pre-process and extract meaningful on-road information and store it in a log file at the SV's local storage. In particular, the information may include the SV's visited location history, date, time, light condition (e.g., dark or light), weather condition (e.g., windy, snowy, or rainy), and road surface condition (e.g., dry or wet). To collaborate with the VSP in the FL process, the interested SV users can first deploy an on-road service application developed by the VSP at their SVs' built-in Android or iOS platforms, e.g., Live Traffic NSW (www.livetraffic.com) and Live Traffic Info (www.highwaysengland.co.uk). Then, the SV users can enable the application's access permission to allow the VSP accessing basic information including the SVs' current location and information significance in a secure manner~\cite{Corser:2016}. In this way, the VSP can observe useful information to select the best SVs (which act as the FL learners at each learning round, e.g., every hour or day) within its coverage (e.g., via road side units~\cite{ITS:2017}) with minimum privacy disclosure of the SVs' collected on-road data. Note that the accessed current location information is only used to help the VSP select the best SVs for each learning round (as similarly implemented in location-based services such as Google Maps, Placemeter, and Waze), and is totally different from on-road sensing data that the selected SVs collect when they run on the road. Specifically, let $\mathcal{T} = \{0,1,\ldots,t,\ldots,T\}$ denote the set of rounds of an FL process and $\mathcal{M}(t) = \{1,\dots,m,\ldots,M\}$, where $\mathcal{M}(t) \subset \mathcal{I}$, be the set of selected SVs based on their location significance at round $t$. The selection of $M$ active SVs guarantees that they will provide useful on-road information due to their locations in the significant areas~\cite{Hussain:2019}. $\mathcal{N}(t) = \{1,\dots,n,\ldots,N\}$, where $\mathcal{N}(t) \subset \mathcal{M}(t)$, denotes the set of selected best quality SVs based on their information significance at round $t$. The selection of $N$ best SVs can ensure that the local trained model updates are reliable to provide high learning accuracy with faster convergence~\cite{Kang:2019}. Alternatively, we denote the collected data size, i.e., number of samples, at SV-$n$ for each learning round $t$ as $\eta_n(t)$. 

To compensate the selected SVs in $\mathcal{N}(t)$, the VSP allocates a maximum payment budget of $B_{max}(t)$ at each round $t$. Due to its economic benefit, the VSP practically keeps its willingness to pay (in regards to the received information significance from the FL learners) as a private information for the SVs, and it can be expressed as the type of the VSP~\cite{Bolton:2005}. This type is influenced by the VSP's current payment budget: a higher type indicates the willingness to pay more for the SVs in $\mathcal{N}(t)$ due to its higher payment budget. For that, we can define a finite set of the VSP's types as $\Theta = \{\theta_1,\ldots,\theta_j,\ldots,\theta_{J}\}$, where $j, j \in \mathcal{J} = \{1,\ldots,j,\ldots,J\}$, indicates the type index and we have $\theta_J > \theta_{J-1} > \ldots > \theta_j > \ldots > \theta_2 > \theta_1$ without loss of generality. 
Correspondingly, we denote $B_j(t)$ to be the payment budget for the VSP with type $\theta_j$, where $B_j(t) = \frac{\theta_j}{\theta_J}B_{max}(t)$, $j \in \mathcal{J}$. As the type of the VSP is private, the SVs only can observe the VSP's type distribution, i.e., $\rho_j(t)$, where $\sum_{j=1}^{J} \rho_j(t) = 1, \forall j \in \mathcal{J}, \forall t \in \mathcal{T}$~\cite{Xu:2015}. According to $B_j(t)$, the VSP requires to determine the payment proportion for each SV-$n$.  Particularly, given the VSP with type $\theta_j$, we define payment proportion vector of all $N$ SVs as $\boldsymbol{\varrho}(t)=[\boldsymbol{\varrho}_1(t),\ldots,\boldsymbol{\varrho}_j(t),\ldots,\boldsymbol{\varrho}_J(t)]$, where $\boldsymbol{\varrho}_j(t)=[\varrho_j^1(t),\ldots,\varrho_j^n(t),\ldots,\varrho_j^{N}(t)]$, and $0 \leq \varrho_j^n(t) \leq 1, \forall j \in \mathcal{J}, \forall n \in \mathcal{N}(t)$. 
Moreover, for each learning round $t$, we denote $\boldsymbol{\zeta}(t) = [\boldsymbol{\zeta}_1(t),\ldots,\boldsymbol{\zeta}_j(t),\ldots,\boldsymbol{\zeta}_J(t)]$ and $\boldsymbol{\varphi}(t) = [\boldsymbol{\varphi}_1(t),\ldots,\boldsymbol{\varphi}_j(t),\ldots,\boldsymbol{\varphi}_J(t)]$ to be the vectors of information significance and corresponding payments of all participating SVs for all types, respectively. In this case, $\boldsymbol{\zeta}_j(t) = \big[\zeta_j^1(t),\ldots,\zeta_j^n(t),\ldots,\zeta_j^{N}(t)\big]$, and $\boldsymbol{\varphi}_j(t) = \big[\varphi_j^1(t),\ldots,\varphi_j^n(t),\ldots,\varphi_j^{N}(t)\big]$. As the VSP's type 
increases, each SV-$n$ can offer higher information significance to the VSP since it can obtain higher payment from the VSP. Overall, the proposed dynamic FL-based economic framework for the IoV can be illustrated in Fig.~\ref{fig:SM2}. For each learning round $t$, the VSP performs the following processes:
\begin{itemize}
	\item Step 1: The VSP observes active SVs in the network.
	
	\item Step 2: The VSP selects $M$ active SVs based on their location significance using the K-means algorithm.
	
	\item Step 3: From $M$ SVs, the VSP selects $N$ best SVs which have the highest QoI (defined in Section~\ref{sec:SVS}).
	
	\item Step 4: The $N$ selected SVs collect on-road information data for the current learning round.
	
	\item Step 5: In the end of each learning round, these $N$ SVs send initial contracts based on the collected on-road data to the VSP. Then, the VSP will determine and send the optimal contracts to the SVs.
	
	\item Step 6: Upon receiving the optimal contracts, all $N$ SVs train their local datasets to obtain local trained models. These local models are then collected by the VSP to update the current global model. 
	
	\item Step 7: The VSP updates its net profit based on the current round's global model accuracy and freshness.
\end{itemize}
At each learning round, the VSP may choose a different set of best SVs because of their diverse mobility behaviors. The above processes are repeated until the global on-road prediction model converges or after a pre-defined number of learning rounds is achieved. More details can be found in the following sections.

%

\section{Location and Information Significance-Based SV Selection}\label{sec:SVS}

In this section, we discuss on how the VSP can choose a set of best active SVs based on the location and information significance prior to the FL process at each learning round $t$. 

\subsection{Location Significance-Based SV Selection}
\label{subsec:LASV}

In this method, the VSP can monitor the current location of the active SVs in the IoV using the GPS information stored in their on-road service applications locally. This current location information sharing can be securely protected, e.g., using anonymization or obfuscation-based approaches~\cite{Corser:2016,Lim:2017}.
From the active SVs observed in each learning round, the VSP can select $M$ SVs, where $M \leq I$, whose locations are in the significant areas (i.e., the areas are of high interests by the VSP and/or the areas with high vehicle traffic volume based on their total AADFs for specific periods), e.g., big cities, central business districts, or tourist attractions. Let $\mathcal{D} = \{1,\ldots,d,\ldots,D\}$ denote the set of considered areas, the fixed total AADF for each area-$d$ prior to the learning process can be expressed by~\cite{UK:2020} 
\begin{equation}
\label{eqn:for15a}
\begin{aligned}
V_{d} = \sum_{\varpi_1=1}^{\varpi_1^{max}}\sum_{\varpi_2=1}^{\varpi_2^{max}} \frac{v_d^{\varpi_1,\varpi_2}}{365},
\end{aligned}
\end{equation}
where $\varpi_1 \in \{1,2,\ldots,\varpi_1^{max}\}$ and $\varpi_2 \{1,2,\ldots,\varpi_2^{max}\}$ specify the indices of considered year periods and days for a year (i.e., $\varpi_2^{max} = 365$), respectively. In addition, $v_d^{\varpi_1,\varpi_2}$ represents the vehicle traffic volume in area-$d$ for day $\varpi_2$ of year $\varpi_1$. The area with higher $V_d$ value implies the area with higher significance. The reason is that many vehicles tend to visit the centralized areas where profitable employment market, desirable lifestyle, better educational occasion, and tourist attractions exist. In this case, the vehicles in those areas typically accommodate more useful and meaningful on-road information, e.g., traffic jams and/or accidents, due to their high vehicle traffic volume~\cite{Hussain:2019}. 

Using $V_d$ from all areas in $\mathcal{D}$, the VSP can categorize those areas into insignificant and significant areas. For that, we implement a binary classification scheme using the K-means algorithm~\cite{Bradley:2000} in which groups `0' and `1' represent insignificant and significant areas, respectively. Consider that the VSP has a road traffic dataset containing area IDs, their major road locations, and their total AADFs of the vehicles $V_d, \forall d \in \mathcal{D}$. To attain the sets of insignificant and significant areas, the total AADF centroids ${\hat V}^0$ and ${\hat V}^1$ that minimize the overall squared distance between each $V_d$ and the centroids can be determined using the following: 
\begin{equation}
\label{eqn3j}
\begin{aligned}
\min_{\{\boldsymbol{\sigma}, \mathbf{\hat V}\}}\sum_{d \in \mathcal{D}}\sigma_d^0 (V_d - {\hat V}^0)^2 + \sigma_d^1 (V_d - {\hat V}^1)^2,
\end{aligned}
\end{equation}
\begin{align}
\text{s.t.} \quad &\sigma_d^0 + \sigma_d^1 = 1, \forall d \in \mathcal{D}, \label{eqn3j2} \\
&\sigma_d^0, \sigma_d^1  \in \{0,1\}, \forall d \in \mathcal{D}, \label{eqn3j3} 
\end{align}
where $\sigma_d^0$ and $\sigma_d^1$ are binary variables which indicate the total AADF closeness of area-$d$ to the centroids ${\hat V}^0$ and ${\hat V}^1$, respectively. The constraints (\ref{eqn3j2}) imply that the total AADF of each area-$d$ can be classified into one group only. 
To find the optimal sets of insignificant and significant areas, the centroids ${\hat V}^0$ and ${\hat V}^1$ are first updated at each K-means iteration $\iota$, i.e., ${\hat V}^0_{(\iota+1)}=\frac{\underset{d \in \mathcal{D}}{\sum}\sigma^{0}_{d,(\iota)}V_d}{\underset{d \in \mathcal{D}}{\sum}\sigma^{0}_{d,(\iota)}}$
and ${\hat V}^1_{(\iota+1)}=\frac{\underset{d \in \mathcal{D}}{\sum}\sigma^{1}_{d,(\iota)}V_d}{\underset{d \in \mathcal{D}}{\sum}\sigma^{1}_{d,(\iota)}}$, respectively, until ${\hat V}^0_{(\iota+1)} = {\hat V}^0_{(\iota)}$ and ${\hat V}^1_{(\iota+1)} = {\hat V}^1_{(\iota)}$. Hence, the VSP can obtain the optimal sets of insignificant areas, i.e., $\mathcal{D}^0 \subset \mathcal{D}$, and significant areas, i.e., $\mathcal{D}^1 \subset \mathcal{D}$, based on the value `1' of both $\sigma_d^0$ and $\sigma_d^1$, $\forall d \in \mathcal{D}$, for all learning rounds. Then, the VSP can dynamically select $M$ SVs whose current locations are within the set of significant areas $\mathcal{D}^1$ at each learning round $t$.

\subsection{Information Significance-Based SV Selection}

From the selected $M$ SVs, the VSP can further choose $N$ SVs as the FL learners, where $N \leq M$, to efficiently execute the FL algorithm for each round $t$. To this end, we introduce the information significance-based SV selection method to obtain $N$ SVs which have the highest QoI. In particular, adopted from~\cite{Song:2014}, suppose that $\mathcal{K} = \{1,\ldots,k,\ldots,K\}$ and $\mathcal{L} = \{1,\ldots,l,\ldots,L\}$ to be the set of captured on-road information timespans, e.g, between day 1 and day 7 for a week, and location IDs, e.g., between location 1 and location 100, from the SVs in $\mathcal{M}(t)$, respectively. Utilizing the timespan-$k$ and location-$l$, each SV-$m$ can calculate a spatio-temporal variability $\eta_m^{k,l}(t)$, i.e., the data size in location-$l$ for timespan-$k$, from its total data size, i.e., $\eta_m(t)$, where $\eta_m(t) = \sum_{k=1}^K\sum_{l=1}^L\eta_m^{k,l}(t)$.
This calculation can be conducted by the on-road service application provided by the VSP and stored at the SVs.

To implement the FL process at each round $t$, the VSP can determine the required spatio-temporal variability at timespan-$k$ and location-$l$ from SV-$m$, i.e., $y_m^{k,l}(t)$. However, due to the different mobile activities of the SV-$m$, $\forall m \in \mathcal{M}(t)$, on the roads in diverse locations for different times, each SV-$m$ may have various actual spatio-temporal variability for each timespan-$k$ and location-$l$, i.e.,~\cite{Song:2014} 
\begin{equation}
\label{eqn:for30c}
\begin{aligned}
x_m^{k,l}(t) = \quad
\left\{	\begin{array}{ll}
\eta_m^{k,l}(t), &\text{ if $\eta_m^{k,l}(t) \leq  y_m^{k,l}(t)$}, \\
y_m^{k,l}(t), &\text{ otherwise}.
\end{array}	\right.\\
\end{aligned}
\end{equation}
Equation (\ref{eqn:for30c}) implies that the actual spatio-temporal variabilities cannot exceed the required ones.
From $x_m^{k,l}(t)$ and $y_m^{k,l}(t)$, we can define two matrices of actual and required spatio-temporal variabilities for each SV-$m$ as follows:
\begin{equation}
\label{eqn:for30d}
\begin{aligned}
\mathbf{X}_m(t) &= 
\begin{pmatrix}
x_m^{1,1}(t) & x_m^{1,2}(t) & \cdots & x_m^{1,L}(t) \\
x_m^{2,1}(t) & x_m^{2,2}(t) & \cdots & x_m^{2,L}(t) \\
\vdots  & \vdots  & \ddots & \vdots  \\
x_m^{K,1}(t) & x_m^{K,2}(t) & \cdots & x_m^{K,L}(t) 
\end{pmatrix} \text{, and } \\
\mathbf{Y}_m(t) &= 
\begin{pmatrix}
y_m^{1,1}(t) & y_m^{1,2}(t) & \cdots & y_m^{1,L}(t) \\
y_m^{2,1}(t) & y_m^{2,2}(t) & \cdots & y_m^{2,L}(t) \\
\vdots  & \vdots  & \ddots & \vdots  \\
y_m^{K,1}(t) & y_m^{K,2}(t) & \cdots & y_m^{K,L}(t) 
\end{pmatrix}.
\end{aligned}
\end{equation}
Then, we can derive the information significance metric as the normalized spatial length of a matrix~\cite{Custodio:2010} in Frobenius norm using (\ref{eqn:for30d}), which is
\begin{equation}
\label{eqn:for30f}
\begin{aligned}
\zeta^m(t) &= 1 - \frac{\|\mathbf{Y}_m(t) - \mathbf{X}_m(t)\|_F}{\|\mathbf{Y}_m(t)\|_F}\\
&= 1 - \frac{\sqrt{\overset{K}{\underset{k=1}{\sum}}\overset{L}{\underset{l=1}{\sum}}\Big(y_m^{k,l}(t) - x_m^{k,l}(t)\Big)^2}}{\sqrt{\overset{K}{\underset{k=1}{\sum}}\overset{L}{\underset{l=1}{\sum}}\Big(y_m^{k,l}(t)\Big)^2}}.
\end{aligned}
\end{equation}
From (\ref{eqn:for30f}), we can observe that the information significance for SV-$m$, $\forall m \in \mathcal{M}(t)$, is between 0 and 1, i.e., $0 \leq \zeta^m(t) \leq 1$. Specifically, $\zeta^m(t) = 0$ implies that no actual spatio-temporal variability is collected by the SV-$m$. Meanwhile, $\zeta^m(t) = 1$ specifies that the required spatio-temporal variabilities for all timespans and location IDs of the SV-$m$ are fully satisfied. Using this information significance metric, the VSP can determine the dataset quality of the SVs in $\mathcal{M}(t)$ without disclosing any sensitive information of the SVs' actual datasets. Then, the VSP can select $N$ SVs that will implement the FL process based on their largest $\zeta^m(t)$ values for each round $t$, i.e.,
\begin{equation}
\label{eqn:for4}
\begin{aligned}
\mathcal{N}(t) = \max_{[N]}\{\zeta^1(t),\ldots,\zeta^M(t)\}.
\end{aligned}
\end{equation}
Equation (\ref{eqn:for4}) specifies that only $N$ SVs will exchange their local on-road models to update the global on-road prediction model at each round $t$ in the FL process.

%
%
%
%
%
%
%
%
%
%
%
%
%
%
%
%
%
%
%
%
%

\section{MPOA-Based Learning Contract Optimization Problem and Solution}\label{sec:PF}

Based on the list of selected SVs, the VSP can inform the selected SVs in $\mathcal{N}(t)$ to first collect on-road data for a particular period. Then, the SVs can send initial contracts at each learning round $t$ including the information significance $\zeta_j^n(t) = \frac{\theta_j}{\theta_J}\zeta^n(t), \forall n \in \mathcal{N}(t), \forall j \in \mathcal{J}$, based on the collected data, and payment $\varphi_j^n(t) = \upsilon\zeta_j^n(t), \forall n \in \mathcal{N}(t), \forall j \in \mathcal{J}$~\cite{Saputra:2020}, where $\upsilon$ is the information significance price unit. These initial contracts will be used to optimize the non-collaborative contract problem at the VSP. In this section, we first describe the MPOA-based contract optimization problem for the VSP (as the agent) and learning SVs in $\mathcal{N}(t)$ (as the principals).  
After that, a transformation method is introduced to reduce the problem complexity and an iterative algorithm is developed to find the equilibrium contract for the selected SVs.

\subsection{Profit Optimization for the VSP}

Considering the payment proportion vector for the VSP with type $\theta_j$, i.e., $\boldsymbol{\varrho}_j(t)$, information significance vector for the VSP with type $\theta_j$, i.e,  $\boldsymbol{\zeta}_j(t)$, and payment vector for the VSP with type $\theta_j$, i.e., $\boldsymbol{\varphi}_j(t)$, the profit that the VSP with type $\theta_j$ can obtain from performing the learning process together with SVs in $\mathcal{N}(t)$ at round $t$ can be formulated by
\begin{equation}
\setlength{\belowdisplayskip}{5pt} 
\label{eqn:for30}
\begin{aligned}
{\mu}^j_{VSP}(t) = \theta_jS(\boldsymbol{\varrho}_j(t),\boldsymbol{\zeta}_j(t)) - C(\boldsymbol{\varrho}_j(t), \boldsymbol{\varphi}_j(t)).
\end{aligned}
\end{equation}
Specifically, $S(\boldsymbol{\varrho}_j(t),\boldsymbol{\zeta}_j(t))$ and $C(\boldsymbol{\varrho}_j(t), \boldsymbol{\varphi}_j(t))$ indicate the satisfaction and cost functions of the VSP, respectively. Moreover, the $\theta_j$ describes the weight of $S(\boldsymbol{\varrho}_j(t),\boldsymbol{\zeta}_j(t))$ for the VSP with type $\theta_j$. As such, the VSP with a higher type has a higher weight due to its willingness to pay more towards the FL process. For the satisfaction function, we use a squared-root function as similarly implemented in~\cite{Xu:2015}. The reason is that the satisfaction increases when the QoI, i.e., information significance, gets better. However, the VSP may have less interest to further increase the satisfaction when the higher data quality triggers less on-road model accuracy improvement~\cite{Samuelson:2005}. Thus, the satisfaction function can be expressed by
\begin{equation}
\label{eqn:for30a}
\begin{aligned}
S(\boldsymbol{\varrho}_j(t),\boldsymbol{\zeta}_j(t)) = \lambda\sqrt{\sum_{n \in \mathcal{N}(t)}\varrho_j^n(t)\zeta_j^n(t)},
\end{aligned}
\end{equation} 
where $\lambda > 0$ is a conversion parameter corresponding to the monetary unit of obtaining a certain information significance based on the existing data trading market~\cite{Xu:2015}. 
For the cost function of the VSP, we can formulate it as the total payment transfer to the partipating SVs in $\mathcal{N}(t)$ with respect to their information significance values used in the FL process, which is
\begin{equation}
\label{eqn:for30a2}
\begin{aligned}
C(\boldsymbol{\varrho}_j(t), \boldsymbol{\varphi}_j(t)) = \sum_{n \in \mathcal{N}(t)}\varrho_j^n(t)\varphi_j^n(t).
\end{aligned}
\end{equation} 
From (\ref{eqn:for30})-(\ref{eqn:for30a2}), we can devise an optimization problem to maximize the profit of VSP with type $\theta_j$ by
\begin{equation}
\label{eqn:for3b}
\begin{aligned}
(\mathbf{P}_1) \phantom{10} & \max_{\boldsymbol{\varrho}_j(t)} \phantom{5} \theta_j S(\boldsymbol{\varrho}_j(t),\boldsymbol{\zeta}_j(t)) -  C(\boldsymbol{\varrho}_j(t),\boldsymbol{\varphi}_j(t)).
\end{aligned}
\end{equation}
\begin{align}
&\text{ s.t. } \quad C(\boldsymbol{\varrho}_j(t),\boldsymbol{\varphi}_j(t) \leq B_j(t), \label{eqn:for3c}  \\
&0 \leq \varrho_j^n(t) \leq 1, \forall n \in \mathcal{N}(t), \label{eqn:for3c2} 
\end{align}
where the constraint (\ref{eqn:for3c}) implies that the total payment for SVs in the FL process cannot exceed the maximum payment budget of the VSP with type $\theta_j$, i.e., $B_j(t)$. Due to the convexity of objective function (\ref{eqn:for3b}), i.e., the satisfaction and cost functions are concave and linear functions, respectively, and the linearity of the constraints~(\ref{eqn:for3c})-(\ref{eqn:for3c2}) in $(\mathbf{P}_1)$, it can be guaranteed that we can find the optimal $\boldsymbol{\hat \varrho}_j(t) = [{\hat \varrho}_j^1(t),\ldots,{\hat \varrho}_j^n(t),\ldots,{\hat \varrho}_j^{N}], \forall j \in \mathcal{J}$, straightforwardly.

\subsection{Profit Optimization for Learning SVs}

In this section, we formulate the FL contract optimization problem to maximize the expected profit of each SV-$n, n \in \mathcal{N}(t)$, individually under the constraints of the VSP. Specifically, from the optimal payment proportion $\boldsymbol{\hat \varrho}_j(t), \forall j \in \mathcal{J}$, each SV-$n$ can calculate its expected profit through taking all possible types of the VSP into account such that
\begin{equation}
\label{eqn:for6}
\begin{aligned}
&\mu_n(\boldsymbol{\zeta}(t), \boldsymbol{\varphi}(t)) = \sum_{j=1}^{J}\Big({\hat \varrho}_j^n(t)\varphi_j^n(t) - {\hat \varrho}_j^n(t)\zeta_j^n(t)\xi_n(t)\Big) \rho_j(t),
\end{aligned}
\end{equation}
where $\xi_n(t)$ indicates the computation and memory costs of SV-$n$ in training the local dataset with certain information significance at round $t$. Additionally, the utilization of $\sum(.)$ specifies that the expected profit of each SV-$n$ at round $t$ depends on the VSP's type distribution $\rho_j(t), \forall j \in \mathcal{J}$.

In this optimization problem, we can obtain the optimal contracts which satisfy the individual rationality (IR) and incentive compatibility (IC) constraints from the VSP. The IR constraints guarantee that the VSP with a certain type will always obtain a positive profit as described in Definition~\ref{DefV1}.
\begin{definition}{}\label{DefV1}
	IR constraints: The VSP must achieve a non-negative profit, i.e.,
	\begin{equation}
	\label{eqn:for7a}
	\begin{aligned}
	\theta_jS(\boldsymbol{\hat \varrho}_j(t),\boldsymbol{\zeta}_j(t)) - C(\boldsymbol{\hat \varrho}_j(t), \boldsymbol{\varphi}_j(t)) \geq 0, \forall j \in \mathcal{J},
	\end{aligned}
	\end{equation}
	for each $t \in \mathcal{T}$ to participate in the FL contract optimization.
\end{definition}

Due to the information asymmetry between the VSP and SVs, in addition to the IR constraints, the optimal contracts must also satisfy the IC constraints which ensure the feasibility of the contracts. In particular, the IC constraints guarantee that the VSP can always obtain the maximum profit if an appropriate contract, i.e., the contract designed for that current type of the VSP, is utilized. This condition can be defined in the following Definition~\ref{DefV2}. 

\begin{definition}{}\label{DefV2}
	IC constraints: The VSP with current type $\theta_j$ will likely to choose a contract designed for its current type $\theta_j$ rather than with another type ${\theta_{j^*}}$, i.e.,
	\begin{align}
	&\theta_jS(\boldsymbol{\hat \varrho}_j(t),\boldsymbol{\zeta}_j(t)) - C(\boldsymbol{\hat \varrho}_j(t), \boldsymbol{\varphi}_j(t)) \geq \label{eqn:for7b} \\
	&\theta_jS(\boldsymbol{\hat \varrho}_{j^*}(t),\boldsymbol{\zeta}_{j^*}(t)) - C(\boldsymbol{\hat \varrho}_{j^*}(t), \boldsymbol{\varphi}_{j^*}(t)), j \neq {j^*}, \forall j, {j^*} \in \mathcal{J}. \nonumber 
	\end{align}
\end{definition}

Given the expected profit of each SV-$n$ in~(\ref{eqn:for6}) as well as IR and IC constraints of the VSP in~(\ref{eqn:for7a}) and (\ref{eqn:for7b}), respectively, we can formulate a non-collaborative learning contract optimization 
due to the selfish nature of participating SVs. As such, the FL contract optimization problem $(\mathbf{P}_2)$ to maximize the expected profit for SV-$n$ independently at the VSP for each $t$ can be expressed by
\begin{equation}
\label{eqn:for9}
\begin{aligned}
(\mathbf{P}_2) \phantom{10} & \underset{\boldsymbol{\zeta}(t), \boldsymbol{\varphi}(t)}{\text{max}} \phantom{5} \mu_{n}\big(\boldsymbol{\zeta}(t), \boldsymbol{\varphi}(t)\big), \forall n \in \mathcal{N}(t),
\end{aligned}
\end{equation}
\begin{align}
&\text{ s.t. } \quad C(\boldsymbol{\hat \varrho}_j(t),\boldsymbol{\varphi}_j(t) \leq B_j(t), \forall j \in \mathcal{J}, \label{eqn:for9a} \\
&\theta_jS(\boldsymbol{\hat \varrho}_j(t),\boldsymbol{\zeta}_j(t)) - C(\boldsymbol{\hat \varrho}_j(t), \boldsymbol{\varphi}_j(t)) \geq 0, \forall j \in \mathcal{J}, \label{eqn:for9b} \\
&\theta_jS(\boldsymbol{\hat \varrho}_j(t),\boldsymbol{\zeta}_j(t)) - C(\boldsymbol{\hat \varrho}_j(t), \boldsymbol{\varphi}_j(t)) \geq  \label{eqn:for9c} \\
&\theta_jS(\boldsymbol{\hat \varrho}_{j^*}(t),\boldsymbol{\zeta}_{j^*}(t)) - C(\boldsymbol{\hat \varrho}_{j^*}(t), \boldsymbol{\varphi}_{j^*}(t)), j \neq {j^*}, \forall j, {j^*} \in \mathcal{J}. \nonumber
\end{align}
Using the optimization problem $(\mathbf{P}_2)$, each SV-$n$ via the VSP can maximize its profit under the competition from other SVs in $\mathcal{N}(t)$ and the limited payment budget of the VSP.


\subsection{Social Welfare of the Internet-of-Vehicles}

To further derive the social welfare of the IoV, i.e, the total actual profits of the VSP and all learning SVs in $\mathcal{N}(t)$, we can find the actual profit of each SV-$n$ when the VSP has the type $\theta_j$ as ${\hat \varrho}_j^n(t)\varphi_j^n(t) - {\hat \varrho}_j^n(t)\zeta_j^n(t)\xi_n(t)$.
Then, given the VSP has type $\theta_j$, we can calculate the total actual profit of all SVs in $\mathcal{N}(t)$ by
\begin{equation}
\label{eqn:for6b}
\begin{aligned}
\sum_{n \in \mathcal{N}(t)}\Big({\hat \varrho}_j^n(t)\varphi_j^n(t) - {\hat \varrho}_j^n(t)\zeta_j^n(t)\xi_n(t)\Big).
\end{aligned}
\end{equation}
As a result, we can calculate the social welfare at round $t$ for type $\theta_j$ according to~(\ref{eqn:for30})-(\ref{eqn:for30a2}) and (\ref{eqn:for6b}) by
\begin{align}
&\mu^j_{SW}(t) = \theta_j\lambda\sqrt{\sum_{n \in \mathcal{N}(t)}{\hat \varrho}_j^n(t)\zeta_j^n(t)} - \sum_{n \in \mathcal{N}(t)}{\hat \varrho}_j^n(t)\varphi_j^n(t) \nonumber \\ &+ \sum_{n \in \mathcal{N}(t)}\Big({\hat \varrho}_j^n(t)\varphi_j^n(t) - {\hat \varrho}_j^n(t)\zeta_j^n(t)\xi_n(t)\Big) \label{eqn:for6c} \\
&= \theta_j\lambda\sqrt{\sum_{n \in \mathcal{N}(t)}{\hat \varrho}_j^n(t)\zeta_j^n(t)} - \sum_{n \in \mathcal{N}(t)}{\hat \varrho}_j^n(t)\zeta_j^n(t)\xi_n(t), \forall j \in \mathcal{J}. \nonumber
\end{align}


\subsection{Contract Problem Transformation}

In Proposition~\ref{lemma0a}, we can observe that solving $(\mathbf{P}_2)$ requires computational complexity $O(J^2)$. As a result, this will incur a large of computational resources, especially when the number of VSP's types and the number of learning rounds are high.

\begin{proposition}
	\label{lemma0a}
	The computational complexity of solving $(\mathbf{P}_2)$ is $O(J^2)$. 
\end{proposition}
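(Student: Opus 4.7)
The plan is to prove the $O(J^2)$ bound by a careful accounting of the structural components of problem $(\mathbf{P}_2)$, since the complexity of solving a structured optimization problem of this form is dominated by the number of its constraints (each of which involves an $O(1)$ evaluation once the payment proportions $\hat{\boldsymbol{\varrho}}_j(t)$ are known from $(\mathbf{P}_1)$). I would open the argument by stating that complexity here is measured by the number of scalar constraints that must be checked or enforced during the optimization, since the satisfaction and cost functions $S(\cdot)$ and $C(\cdot)$ admit closed-form evaluation in $O(N)$ time independent of $J$.

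The key step is a direct enumeration of constraint classes in $(\mathbf{P}_2)$. First, the budget constraints in (\ref{eqn:for9a}) contribute exactly $J$ inequalities, one per VSP type $\theta_j$. Second, the IR constraints in (\ref{eqn:for9b}) also contribute $J$ inequalities, again one per type. Third, and critically, the IC constraints in (\ref{eqn:for9c}) must hold for every ordered pair $(j, j^{*})$ with $j \neq j^{*}$, and there are exactly $J(J-1)$ such pairs. Summing these three groups gives $J + J + J(J-1) = J^2 + J$ constraints in total.

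Since the dominant term is $J^2$ and each constraint can be written and verified in $O(1)$ arithmetic operations (given the already-computed $\hat{\boldsymbol{\varrho}}_j(t)$ and the closed-form expressions for $S$ and $C$), the overall computational complexity of handling the constraint set in $(\mathbf{P}_2)$ is $O(J^2 + J) = O(J^2)$, yielding the proposition. I would close by noting that this quadratic scaling in $J$ motivates the subsequent transformation method promised in the paper, whose purpose is to prune redundant IC constraints and reduce the effective complexity below $O(J^2)$.

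The main obstacle, though minor, is making precise what is meant by ``computational complexity of solving $(\mathbf{P}_2)$'' since for a generic nonlinear program the cost would depend on the chosen solver. I would sidestep this by interpreting the claim, as is standard in the contract-theoretic literature (cf.\ the discussion surrounding the MPOA formulation), as the size of the constraint system that any solver must process — this is the quantity that the subsequent transformation is designed to reduce — so the enumeration argument above suffices.
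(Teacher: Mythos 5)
Your proposal is correct and follows essentially the same argument as the paper: counting $J$ budget constraints, $J$ IR constraints, and $J(J-1)$ IC constraints from (\ref{eqn:for9a})--(\ref{eqn:for9c}) to obtain $O(J^2)$. The additional remarks on what ``complexity'' means here are a reasonable clarification but do not change the substance of the proof.
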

\begin{proof}
	From (\ref{eqn:for9a})-(\ref{eqn:for9c}), we can observe that there are $J$ payment budget constraints, $J$ IR constraints, and $J(J-1)$ IC constraints. 
	Thus, the computational complexity of solving problem $(\mathbf{P}_2)$ is $O(J^2)$.
\end{proof}
Then, using the following transformation method, i.e., transforming the IR and IC constraints, we can reduce the computational complexity of solving $(\mathbf{P}_2)$ from $O(J^2)$ into $O(J)$. Specifically, we first demonstrate that when the VSP's type is higher (i.e., the willingness to pay more due to higher payment budget), the learning SV-$n, \forall n \in \mathcal{N}(t)$, will offer higher information significance values to the VSP. This aims to obtain higher payments from the VSP. This condition can be formally written in Lemma~\ref{lemma1}.
\begin{lemma}
	\label{lemma1}
	Let $\big(\boldsymbol{\zeta}(t), \boldsymbol{\varphi}(t)\big)$ denote any feasible contracts from the SVs to the VSP such that $\boldsymbol{\zeta}_j(t) \geq \boldsymbol{\zeta}_{j^*}(t)$ if and only if $\theta_j \geq \theta_{j^*}$, where $j \neq {j^*}, j, {j^*} \in \mathcal{J}$.
\end{lemma}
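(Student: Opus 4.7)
The plan is to invoke the standard contract-theoretic monotonicity argument: combine the two IC inequalities (with roles of $j$ and $j^*$ swapped) so that the payment-transfer terms cancel, leaving an inequality that couples the difference of types to the difference of satisfactions. Since the satisfaction function $S$ in (\ref{eqn:for30a}) is strictly increasing in each $\zeta_j^n(t)$, this scalar comparison will translate back to the componentwise ordering of $\boldsymbol{\zeta}_j(t)$ and $\boldsymbol{\zeta}_{j^*}(t)$. I would prove both directions (``if'' and ``only if'') simultaneously by keeping both IC inequalities in hand.

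First, I would write out the IC constraint (\ref{eqn:for9c}) instantiated at types $(\theta_j,\theta_{j^*})$, stating that the VSP with type $\theta_j$ weakly prefers the contract designed for type $\theta_j$ over the one designed for type $\theta_{j^*}$:
\begin{equation*}
\theta_j S(\boldsymbol{\hat\varrho}_j(t),\boldsymbol{\zeta}_j(t)) - C(\boldsymbol{\hat\varrho}_j(t),\boldsymbol{\varphi}_j(t)) \;\geq\; \theta_j S(\boldsymbol{\hat\varrho}_{j^*}(t),\boldsymbol{\zeta}_{j^*}(t)) - C(\boldsymbol{\hat\varrho}_{j^*}(t),\boldsymbol{\varphi}_{j^*}(t)),
\end{equation*}
and the symmetric version with $j$ and $j^*$ interchanged. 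Adding these two inequalities cancels both $C(\cdot)$ terms and the mixed $\theta$-with-wrong-$S$ terms, yielding
\begin{equation*}
(\theta_j - \theta_{j^*})\bigl[S(\boldsymbol{\hat\varrho}_j(t),\boldsymbol{\zeta}_j(t)) - S(\boldsymbol{\hat\varrho}_{j^*}(t),\boldsymbol{\zeta}_{j^*}(t))\bigr] \;\geq\; 0.
\end{equation*}
This directly gives $S_j \geq S_{j^*}$ iff $\theta_j \geq \theta_{j^*}$, which is the aggregated form of the claim.

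The remaining step, which I anticipate to be the main obstacle, is lifting this scalar monotonicity of $S$ to the componentwise vector ordering of $\boldsymbol{\zeta}_j(t)$. My plan is to exploit the separability of the learning contract problem $(\mathbf{P}_2)$ across SVs: since each SV-$n$ independently maximizes its own expected profit $\mu_n$ in (\ref{eqn:for6}), the per-SV component $\zeta_j^n(t)$ is chosen without cross-subsidy from other SVs; hence I can repeat the IC-adding argument at the level of each individual SV-$n$ (treating $\varphi_j^n(t)$ and $\zeta_j^n(t)$ as that SV's decision variables and the other SVs' contracts as fixed), which together with the strict monotonicity of $S$ in each $\zeta_j^n(t)$ forces $\zeta_j^n(t) \geq \zeta_{j^*}^n(t)$ for every $n \in \mathcal{N}(t)$. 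I would also verify the converse direction by contraposition, noting that if $\theta_j < \theta_{j^*}$ then the same chain of inequalities (now reversed) yields $\boldsymbol{\zeta}_j(t) \leq \boldsymbol{\zeta}_{j^*}(t)$, thereby closing the ``if and only if''.
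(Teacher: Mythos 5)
Your proposal follows essentially the same route as the paper's proof: you write the two symmetric IC inequalities, add them so that the cost terms cancel, and arrive at $(\theta_j-\theta_{j^*})\bigl[S(\boldsymbol{\hat\varrho}_j(t),\boldsymbol{\zeta}_j(t))-S(\boldsymbol{\hat\varrho}_{j^*}(t),\boldsymbol{\zeta}_{j^*}(t))\bigr]\geq 0$, from which both directions of the equivalence follow exactly as in the paper. The only divergence is at the final step: the paper simply asserts the componentwise ordering $\boldsymbol{\zeta}_j(t)\geq\boldsymbol{\zeta}_{j^*}(t)$ from the monotonicity of $S$ under the side condition $\boldsymbol{\hat\varrho}_j=\boldsymbol{\hat\varrho}_{j^*}$, whereas you correctly flag that lifting a scalar inequality on $S$ to a vector ordering needs justification and sketch a per-SV separability argument --- a reasonable extra precaution, though it goes beyond (and is not needed to match) what the paper itself does.
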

\begin{proof}
	See Appendix~\ref{appx:lemma1}.
\end{proof}

Since $\boldsymbol{\zeta}_j(t) \geq \boldsymbol{\zeta}_{j^*}(t)$, in the following Proposition~\ref{prop1}, we show that the SVs will request higher payment to the VSP correspondingly when the VSP has type $\theta_j$ compared with when it occupies type $\theta_{j^*}$.
\begin{proposition}
	\label{prop1}
	If $\boldsymbol{\zeta}_j(t) \geq \boldsymbol{\zeta}_{j^*}(t)$, then $\boldsymbol{\varphi}_j(t) \geq \boldsymbol{\varphi}_{j^*}(t)$,	where $j \neq {j^*}, j, {j^*} \in \mathcal{J}$.
\end{proposition}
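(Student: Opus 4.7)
My strategy is to combine the two IC constraints of the VSP that appear in $(\mathbf{P}_2)$, extract an aggregate monotonicity of the total payment in the VSP's type, and then translate that to the entry-wise inequality by exploiting the non-cooperative MPOA structure together with Lemma~\ref{lemma1}.

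First, I would write the IC constraint~(\ref{eqn:for9c}) in the two symmetric directions---the VSP of type $\theta_j$ prefers the contract designed for its own type to the one designed for $\theta_{j^*}$, and vice versa:
\begin{align*}
\theta_j S(\hat{\boldsymbol{\varrho}}_j,\boldsymbol{\zeta}_j) - C(\hat{\boldsymbol{\varrho}}_j,\boldsymbol{\varphi}_j) &\geq \theta_j S(\hat{\boldsymbol{\varrho}}_{j^*},\boldsymbol{\zeta}_{j^*}) - C(\hat{\boldsymbol{\varrho}}_{j^*},\boldsymbol{\varphi}_{j^*}), \\
\theta_{j^*} S(\hat{\boldsymbol{\varrho}}_{j^*},\boldsymbol{\zeta}_{j^*}) - C(\hat{\boldsymbol{\varrho}}_{j^*},\boldsymbol{\varphi}_{j^*}) &\geq \theta_{j^*} S(\hat{\boldsymbol{\varrho}}_j,\boldsymbol{\zeta}_j) - C(\hat{\boldsymbol{\varrho}}_j,\boldsymbol{\varphi}_j).
\end{align*}
Adding these, the cost terms cancel and I obtain $(\theta_j-\theta_{j^*})\bigl(S(\hat{\boldsymbol{\varrho}}_j,\boldsymbol{\zeta}_j)-S(\hat{\boldsymbol{\varrho}}_{j^*},\boldsymbol{\zeta}_{j^*})\bigr)\geq 0$. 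By Lemma~\ref{lemma1}, the hypothesis $\boldsymbol{\zeta}_j\geq\boldsymbol{\zeta}_{j^*}$ is equivalent to $\theta_j\geq\theta_{j^*}$, and because $S(\cdot,\cdot)$ in~(\ref{eqn:for30a}) is monotonically increasing in each $\zeta_j^n$, it follows that $S(\hat{\boldsymbol{\varrho}}_j,\boldsymbol{\zeta}_j)\geq S(\hat{\boldsymbol{\varrho}}_{j^*},\boldsymbol{\zeta}_{j^*})$.

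Next, rearranging the second IC inequality above yields
\[
C(\hat{\boldsymbol{\varrho}}_j,\boldsymbol{\varphi}_j) - C(\hat{\boldsymbol{\varrho}}_{j^*},\boldsymbol{\varphi}_{j^*}) \;\geq\; \theta_{j^*}\bigl(S(\hat{\boldsymbol{\varrho}}_j,\boldsymbol{\zeta}_j) - S(\hat{\boldsymbol{\varrho}}_{j^*},\boldsymbol{\zeta}_{j^*})\bigr)\;\geq\; 0,
\]
so the aggregate payment transferred by the VSP is non-decreasing in its type. To promote this to the entry-wise assertion $\varphi_j^n\geq\varphi_{j^*}^n$ for every $n\in\mathcal{N}(t)$, I would invoke the MPOA structure of $(\mathbf{P}_2)$: each principal SV-$n$ chooses $(\zeta_j^n,\varphi_j^n)$ independently while taking the other SVs' offers at their equilibrium values. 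Reapplying the same pair of IC inequalities coordinate by coordinate then forces monotonicity on each $\varphi_j^n$. Equivalently, by contradiction, if some SV-$n$ had $\varphi_j^n<\varphi_{j^*}^n$ while $\zeta_j^n\geq\zeta_{j^*}^n$, the VSP of type $\theta_j$ could strictly improve its profit by accepting SV-$n$'s $\theta_{j^*}$-tailored offer (holding all other SVs' contracts fixed), which would violate~(\ref{eqn:for9c}).

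The main obstacle I anticipate is precisely this last step. The aggregate monotonicity $C_j\geq C_{j^*}$ drops out cleanly from adding the two IC constraints, but turning it into an entry-wise statement requires a careful per-SV decoupling that leans on the MPOA equilibrium: one has to verify that perturbing only the single coordinate $(\zeta_j^n,\varphi_j^n)$ preserves feasibility of the remaining IR and IC constraints and does not disturb the other SVs' best responses. The scalar manipulation sets the direction, but the vector conclusion depends essentially on the non-cooperative equilibrium notion used in $(\mathbf{P}_2)$.
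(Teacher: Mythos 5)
Your proposal follows essentially the same route as the paper's own proof: rearrange the two IC inequalities to sandwich $C(\boldsymbol{\hat\varrho}_j(t),\boldsymbol{\varphi}_j(t))-C(\boldsymbol{\hat\varrho}_{j^*}(t),\boldsymbol{\varphi}_{j^*}(t))$ between $\theta_{j^*}$ and $\theta_j$ times the satisfaction difference, use the hypothesis $\boldsymbol{\zeta}_j(t)\geq\boldsymbol{\zeta}_{j^*}(t)$ to obtain $S(\boldsymbol{\hat\varrho}_j(t),\boldsymbol{\zeta}_j(t))\geq S(\boldsymbol{\hat\varrho}_{j^*}(t),\boldsymbol{\zeta}_{j^*}(t))$, and conclude that the cost, and hence the payment, is monotone. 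The aggregate-to-entrywise step you flag as the main obstacle is exactly the point the paper itself passes over with a one-line appeal to the monotonicity of the cost function in $\boldsymbol{\varphi}_j(t)$, so your extra caution there is warranted but does not constitute a departure from the paper's argument.
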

\begin{proof}
	See Appendix~\ref{appx:prop1}.
\end{proof}

According to Lemma~\ref{lemma1} and Proposition~\ref{prop1}, we can observe in the following Proposition~\ref{prop2} that the profit of the VSP is a monotonic increasing function of its type.
\begin{proposition}
	\label{prop2}
	For any feasible contract $\big(\boldsymbol{\zeta}(t), \boldsymbol{\varphi}(t)\big)$, the profit of the VSP must satisfy
	\begin{align}
	&\theta_jS(\boldsymbol{\hat \varrho}_j(t),\boldsymbol{\zeta}_j(t)) - C(\boldsymbol{\hat \varrho}_j(t), \boldsymbol{\varphi}_j(t)) \geq \nonumber \\
	&\theta_{j^*}S(\boldsymbol{\hat \varrho}_{j^*}(t),\boldsymbol{\zeta}_{j^*}(t)) - C(\boldsymbol{\hat \varrho}_{j^*}(t), \boldsymbol{\varphi}_{j^*}(t)) \label{eqn:for10a},
	\end{align}
	where $\theta_j \geq \theta_{j^*}$, $j \neq {j^*}, j, {j^*} \in \mathcal{J}$.
\end{proposition}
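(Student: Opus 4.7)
The plan is to derive the monotonicity of the VSP's profit in its type directly from the incentive compatibility constraint, together with the non-negativity of the satisfaction function $S$. The key observation is that the IC constraint in Definition~\ref{DefV2} says the VSP with type $\theta_j$ weakly prefers the contract designed for $\theta_j$ over any other contract, in particular the one designed for $\theta_{j^*}$. This preference is expressed using the \emph{same} type weight $\theta_j$ on both sides, so we only need to bridge the gap between $\theta_j$ and $\theta_{j^*}$ on the right-hand side.

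First, I would invoke the IC constraint~(\ref{eqn:for7b}) with the pair $(j,j^*)$ to get
\begin{equation*}
\theta_j S(\boldsymbol{\hat \varrho}_j(t),\boldsymbol{\zeta}_j(t)) - C(\boldsymbol{\hat \varrho}_j(t), \boldsymbol{\varphi}_j(t)) \geq \theta_j S(\boldsymbol{\hat \varrho}_{j^*}(t),\boldsymbol{\zeta}_{j^*}(t)) - C(\boldsymbol{\hat \varrho}_{j^*}(t), \boldsymbol{\varphi}_{j^*}(t)).
\end{equation*}
Second, I would note that by the definition of $S$ in~(\ref{eqn:for30a}) (a square root of a sum of non-negative quantities, with $\lambda>0$ and $\varrho_j^n(t),\zeta_j^n(t)\geq 0$), we always have $S(\boldsymbol{\hat \varrho}_{j^*}(t),\boldsymbol{\zeta}_{j^*}(t)) \geq 0$. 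Combining this with $\theta_j \geq \theta_{j^*} > 0$ gives $\theta_j S(\boldsymbol{\hat \varrho}_{j^*}(t),\boldsymbol{\zeta}_{j^*}(t)) \geq \theta_{j^*} S(\boldsymbol{\hat \varrho}_{j^*}(t),\boldsymbol{\zeta}_{j^*}(t))$. Substituting this bound into the right-hand side of the IC inequality above yields exactly~(\ref{eqn:for10a}).

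I do not expect a genuine obstacle here: the result is a standard monotonicity corollary of IC plus positive satisfaction, and the ingredients (Lemma~\ref{lemma1} and Proposition~\ref{prop1}) that were proved earlier are not strictly needed for this particular chain of inequalities — they only reinforce the intuition that higher-type VSPs face larger $\boldsymbol{\zeta}_j(t)$ and pay larger $\boldsymbol{\varphi}_j(t)$. The only mild subtlety is confirming non-negativity of $S$ from its definition, which is immediate. Thus the proof should fit in two short lines: one invocation of IC, and one application of the bound $(\theta_j-\theta_{j^*})S(\cdot)\geq 0$.
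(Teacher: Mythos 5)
Your proof is correct and follows essentially the same route as the paper's: one application of the IC constraint~(\ref{eqn:for7b}) for the pair $(j,j^*)$, followed by the bound $(\theta_j-\theta_{j^*})S(\boldsymbol{\hat \varrho}_{j^*}(t),\boldsymbol{\zeta}_{j^*}(t))\geq 0$. The paper's proof additionally cites Lemma~\ref{lemma1} and Proposition~\ref{prop1} up front and appends the IR constraint to close the chain at zero, but, as you correctly observe, neither is needed for the inequality actually being claimed.
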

\begin{proof}
	See Appendix~\ref{appx:prop2}.
\end{proof}
From Proposition~\ref{prop2}, we can reduce the number of IR constraints by using the VSP's minimum type, i.e., $\theta_1$. As such, we can derive that
\begin{align}
&\theta_jS(\boldsymbol{\hat \varrho}_j(t),\boldsymbol{\zeta}_j(t)) - C(\boldsymbol{\hat \varrho}_j(t), \boldsymbol{\varphi}_j(t)) \geq \nonumber \\ 
&\theta_jS(\boldsymbol{\hat \varrho}_1(t),\boldsymbol{\zeta}_1(t)) - C(\boldsymbol{\hat \varrho}_1(t), \boldsymbol{\varphi}_1(t)) \label{eqn:for12}   \\
& \geq \theta_{1}S(\boldsymbol{\hat \varrho}_1(t),\boldsymbol{\zeta}_{1}(t)) - C(\boldsymbol{\hat \varrho}_1(t), \boldsymbol{\varphi}_{1}(t)) \geq 0. \nonumber
\end{align}
In other words, the IR constraints for other types $\theta_j$, $j > 1$, will automatically hold if and only if we can satisfy the IR constraint for $\theta_1$. To this end, we can modify the IR constraints in~(\ref{eqn:for9b}) into
\begin{equation}
\label{eqn:for13}
\begin{aligned}
\theta_{1}S(\boldsymbol{\hat \varrho}_1(t),\boldsymbol{\zeta}_{1}(t)) - C(\boldsymbol{\hat \varrho}_1(t), \boldsymbol{\varphi}_{1}(t)) \geq 0.
\end{aligned}
\end{equation}

For the IC constraints in~(\ref{eqn:for9c}), we can also reduce the number of constraints using the following transformation stated in Lemma~\ref{lemma2}.
\begin{lemma}
	\label{lemma2}
	The IC constraints in~(\ref{eqn:for9c}) of $(\mathbf{P}_2)$ can be modified into the following local downward incentive constraints (LDIC), i.e.,
	\begin{align}
	&\theta_jS(\boldsymbol{\hat \varrho}_j(t),\boldsymbol{\zeta}_j(t)) - C(\boldsymbol{\hat \varrho}_j(t), \boldsymbol{\varphi}_j(t)) \geq \nonumber \\
	&\theta_jS(\boldsymbol{\hat \varrho}_{j-1}(t),\boldsymbol{\zeta}_{j-1}(t)) - C(\boldsymbol{\hat \varrho}_{j-1}(t), \boldsymbol{\varphi}_{j-1}(t)), \label{eqn:for14} \\
	&\forall j \in \{2, \ldots, J\}, \nonumber
	\end{align}
	where $\boldsymbol{\zeta}_j(t) \geq \boldsymbol{\zeta}_{j-1}(t), \forall j \in \{2, \ldots, J\}$.
	
\end{lemma}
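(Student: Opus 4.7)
The plan is to use the standard two-step contract-theoretic reduction: first show that the full IC set in~(\ref{eqn:for9c}) implies LDIC (trivial, since LDIC is a subset obtained by taking $j^{*} = j-1$), and then show the converse, namely that LDIC together with the monotonicity result $\boldsymbol{\zeta}_{j}(t) \geq \boldsymbol{\zeta}_{j-1}(t)$ from Lemma~\ref{lemma1} already implies every IC constraint in~(\ref{eqn:for9c}). Equivalence then justifies replacing the $J(J-1)$ IC inequalities by the $J-1$ LDIC inequalities. I will split the non-trivial direction into a \emph{downward} part (recovering IC against any $\theta_{j^{*}} < \theta_{j}$) and an \emph{upward} part (recovering IC against any $\theta_{j^{*}} > \theta_{j}$).

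For the downward part, I would argue by induction on the gap $j - j^{*}$. The base case $j^{*} = j-1$ is exactly LDIC. For the step, I would take LDIC at level $j-1$, namely $\theta_{j-1} S(\hat{\boldsymbol{\varrho}}_{j-1},\boldsymbol{\zeta}_{j-1}) - C(\hat{\boldsymbol{\varrho}}_{j-1},\boldsymbol{\varphi}_{j-1}) \geq \theta_{j-1} S(\hat{\boldsymbol{\varrho}}_{j-2},\boldsymbol{\zeta}_{j-2}) - C(\hat{\boldsymbol{\varrho}}_{j-2},\boldsymbol{\varphi}_{j-2})$, rearrange it to isolate $C_{j-1} - C_{j-2} \leq \theta_{j-1}\bigl(S(\hat{\boldsymbol{\varrho}}_{j-1},\boldsymbol{\zeta}_{j-1}) - S(\hat{\boldsymbol{\varrho}}_{j-2},\boldsymbol{\zeta}_{j-2})\bigr)$, and then use $\theta_{j} \geq \theta_{j-1}$ together with the fact that the bracket is non-negative (which follows from Lemma~\ref{lemma1} and the monotone-increasing form of $S$ in~(\ref{eqn:for30a})) to upgrade the inequality to $\theta_{j} S_{j-1} - C_{j-1} \geq \theta_{j} S_{j-2} - C_{j-2}$. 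Chaining this with LDIC at level $j$ gives IC between types $j$ and $j-2$; iterating the argument covers every $j^{*} < j$.

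The upward part is symmetric: I would telescope LDIC at levels $j+1, j+2, \ldots, j^{*}$. At each level $k+1$, rearranging LDIC into $C_{k+1} - C_{k} \geq \theta_{k+1}\bigl(S(\hat{\boldsymbol{\varrho}}_{k+1},\boldsymbol{\zeta}_{k+1}) - S(\hat{\boldsymbol{\varrho}}_{k},\boldsymbol{\zeta}_{k})\bigr)$ and applying $\theta_{k+1} \geq \theta_{j}$ together with the non-negativity of the bracket yields $\theta_{j} S_{k} - C_{k} \geq \theta_{j} S_{k+1} - C_{k+1}$. Telescoping these from $k=j$ up to $k=j^{*}-1$ gives the desired upward IC constraint.

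The main obstacle will be the sign bookkeeping in the inductive step: the replacement of $\theta_{j-1}$ by $\theta_{j}$ (downward case) or $\theta_{k+1}$ by $\theta_{j}$ (upward case) only preserves the direction of the inequality because the satisfaction gap is non-negative, which in turn rests on both the monotonicity of $\boldsymbol{\zeta}$ in the type index (Lemma~\ref{lemma1}) and the monotonicity of $S(\hat{\boldsymbol{\varrho}},\cdot)$ built into~(\ref{eqn:for30a}). I would therefore state these two monotonicity facts explicitly at the start of the proof so that each telescoping step can cite them cleanly, and I would be careful to note that the optimal payment proportions $\hat{\boldsymbol{\varrho}}_{j}(t)$ are taken as given from $(\mathbf{P}_{1})$ so that the only design variables in the reduction are $\boldsymbol{\zeta}_{j}(t)$ and $\boldsymbol{\varphi}_{j}(t)$.
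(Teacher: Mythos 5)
Your downward induction is sound and is essentially the paper's own argument: writing $S_j$ for $S(\boldsymbol{\hat\varrho}_j(t),\boldsymbol{\zeta}_j(t))$ and $C_j$ for $C(\boldsymbol{\hat\varrho}_j(t),\boldsymbol{\varphi}_j(t))$, you rearrange the LDIC at level $j-1$ into $C_{j-1}-C_{j-2}\leq \theta_{j-1}\big(S_{j-1}-S_{j-2}\big)$, upgrade $\theta_{j-1}$ to $\theta_j$ using the non-negativity of the satisfaction gap (which follows from $\boldsymbol{\zeta}_{j-1}(t)\geq\boldsymbol{\zeta}_{j-2}(t)$ and the monotonicity of $S$ in (\ref{eqn:for30a})), and chain with the LDIC at level $j$; this is precisely the telescoping in the paper's (\ref{eqn:for7c4})--(\ref{eqn:for7c7}).

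The upward half, however, contains a sign error that breaks the argument. The LDIC at level $k+1$ reads $\theta_{k+1}S_{k+1}-C_{k+1}\geq\theta_{k+1}S_k-C_k$, which rearranges to $C_{k+1}-C_k\leq\theta_{k+1}\big(S_{k+1}-S_k\big)$, an \emph{upper} bound on the cost increment --- not the lower bound $C_{k+1}-C_k\geq\theta_{k+1}\big(S_{k+1}-S_k\big)$ that you wrote. To obtain the upward IC $\theta_jS_k-C_k\geq\theta_jS_{k+1}-C_{k+1}$ for $j\leq k$ you need the \emph{lower} bound $C_{k+1}-C_k\geq\theta_j\big(S_{k+1}-S_k\big)$, and knowing only that $C_{k+1}-C_k$ and $\theta_j\big(S_{k+1}-S_k\big)$ are both dominated by $\theta_{k+1}\big(S_{k+1}-S_k\big)$ establishes no relation between them. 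In fact LDIC plus monotonicity does not imply the upward ICs at all: take $C_{k+1}=C_k$ with $S_{k+1}>S_k$; every LDIC then holds strictly, yet type $\theta_k$ strictly prefers the contract intended for $\theta_{k+1}$. The paper does not attempt your derivation --- it handles the upward family by a symmetric reduction to the \emph{local upward} incentive constraints $\theta_kS_k-C_k\geq\theta_kS_{k+1}-C_{k+1}$ (equivalently $C_{k+1}-C_k\geq\theta_k\big(S_{k+1}-S_k\big)$), which is a separate family of hypotheses, not a consequence of the LDICs. Your proof therefore needs either to invoke the local upward constraints explicitly in the telescoping, or to supply a separate justification for why the upward constraints may be discarded from $(\mathbf{P}_3)$; as written, the claimed equivalence between the full IC set and the LDICs alone is not established.
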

\begin{proof}
	See Appendix~\ref{appx:lemma2}.
\end{proof}
The conditions in~(\ref{eqn:for14}) imply that if the IC constraint regarding the type that is one type lower than type $\theta_j$ holds, i.e., $\theta_{j-1}$, then all other IC constraints are also satisfied as long as the conditions in Lemma~\ref{lemma1} hold. From~(\ref{eqn:for13})-(\ref{eqn:for14}), we can update the optimization problem $(\mathbf{P}_2)$ into the problem $(\mathbf{P}_3)$ as follows:
\begin{equation}
\label{eqn:for9rev}
\begin{aligned}
(\mathbf{P}_3) \phantom{10} & \underset{\boldsymbol{\zeta}(t), \boldsymbol{\varphi}(t)}{\text{max}} \phantom{5} \mu_{n}\big(\boldsymbol{\zeta}(t), \boldsymbol{\varphi}(t)\big), \forall n \in \mathcal{N}(t),
\end{aligned}
\end{equation}
\begin{align}
&\text{ s.t. } \quad \text{ 
	(\ref{eqn:for9a}), and}, \nonumber \\ 
&\theta_{1}S(\boldsymbol{\hat \varrho}_1(t),\boldsymbol{\zeta}_{1}(t)) - C(\boldsymbol{\hat \varrho}_1(t), \boldsymbol{\varphi}_{1}(t)) \geq 0,  \label{eqn:for9arev} \\
&\theta_jS(\boldsymbol{\hat \varrho}_j(t),\boldsymbol{\zeta}_j(t)) - C(\boldsymbol{\hat \varrho}_j(t), \boldsymbol{\varphi}_j(t)) \geq \nonumber\\
&\theta_jS(\boldsymbol{\hat \varrho}_{j-1}(t),\boldsymbol{\zeta}_{j-1}(t)) - C(\boldsymbol{\hat \varrho}_{j-1}(t), \boldsymbol{\varphi}_{j-1}(t)), \nonumber \\  
&\forall j \in \{2, \ldots, J\}, \label{eqn:for9brev} \\
&\boldsymbol{\zeta}_j(t) \geq \boldsymbol{\zeta}_{j-1}(t), \forall j \in \{2, \ldots, J\}. \label{eqn:for9crev}
\end{align}
From problem $(\mathbf{P}_3)$, we demonstrate that the problem now can be solved with computational complexity $O(J)$ as stated in the following Proposition~\ref{lemma0b}.
\begin{proposition}
	\label{lemma0b}
	The computational complexity of solving $(\mathbf{P}_3)$ is $O(J)$. 
\end{proposition}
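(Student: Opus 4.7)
The plan is to argue the $O(J)$ bound by direct constraint counting, mirroring the proof of Proposition~\ref{lemma0a} but now applied to the transformed problem. The complexity of solving $(\mathbf{P}_3)$ is driven by the number of constraints (since the objective is a single scalar function of $\big(\boldsymbol{\zeta}(t),\boldsymbol{\varphi}(t)\big)$ and the budget inequality, the single IR inequality, the LDIC chain, and the monotonicity chain are all affine or separable in $j$), so showing that the constraint count grows linearly in $J$ immediately delivers the claim.

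First, I would enumerate the constraints of $(\mathbf{P}_3)$ group by group. From (\ref{eqn:for9a}), there are exactly $J$ payment-budget inequalities, one per type $\theta_j$, $j \in \mathcal{J}$. From (\ref{eqn:for9arev}), the IR requirement has been reduced via Proposition~\ref{prop2} and the chain in (\ref{eqn:for12}) to a single inequality at $\theta_1$, contributing $1$ constraint. From (\ref{eqn:for9brev}), the LDIC replacement obtained in Lemma~\ref{lemma2} is indexed by $j \in \{2,\ldots,J\}$ and therefore contributes $J-1$ constraints. Finally, (\ref{eqn:for9crev}) supplies the monotonicity conditions $\boldsymbol{\zeta}_j(t) \geq \boldsymbol{\zeta}_{j-1}(t)$ for $j \in \{2,\ldots,J\}$, adding another $J-1$ constraints.

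Summing these yields $J + 1 + (J-1) + (J-1) = 3J - 1$ constraints, which is $O(J)$. I would then contrast this with the count in Proposition~\ref{lemma0a}, where the original IC conditions alone contribute $J(J-1)$ pairs and dominate the complexity at $O(J^2)$; the constraint reductions established in Lemma~\ref{lemma1}, Proposition~\ref{prop1}, Proposition~\ref{prop2}, and Lemma~\ref{lemma2} collapse the quadratic IR$+$IC block into a linear one (single IR anchor plus the LDIC chain plus the monotonicity chain), which is precisely the structural reason for the $O(J)$ bound.

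I do not anticipate a real obstacle here, as all the non-trivial content (the validity of the IR reduction to $\theta_1$ and the reduction of IC to LDIC under monotonicity) has already been proved earlier in the excerpt; the proposition itself is a bookkeeping corollary. The only subtlety to flag is that the equivalence between $(\mathbf{P}_2)$ and $(\mathbf{P}_3)$ must be invoked explicitly, so that counting constraints of $(\mathbf{P}_3)$ legitimately bounds the cost of solving the original contract problem. With that equivalence in hand, the $3J-1$ count closes the proof.
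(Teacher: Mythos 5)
Your proposal is correct and follows essentially the same route as the paper: counting the constraints of $(\mathbf{P}_3)$ group by group to arrive at $J + 1 + (J-1) + (J-1) = 3J-1$, hence $O(J)$. The extra remarks you add (contrasting with the $J(J-1)$ IC block of $(\mathbf{P}_2)$ and flagging the equivalence of the two problems) are sound but not part of the paper's own, more terse, bookkeeping argument.
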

\begin{proof}
	From the constraints (\ref{eqn:for9a}), (\ref{eqn:for9arev})-(\ref{eqn:for9crev}), there are $J + 1 + (J - 1) + (J - 1) = 3J - 1$ number of the constraints. Hence, the computational complexity of solving problem $(\mathbf{P}_3)$ becomes $O(J)$.
\end{proof}

\subsection{Learning Contract Iterative Algorithm}


Based on the problem $(\mathbf{P}_3)$, we can find the optimal contracts $\Big(\boldsymbol{\hat \zeta}(t), \boldsymbol{\hat \varphi}(t)\Big)$ by implementing an iterative algorithm 
as described in~Algorithm~\ref{ICEA}. Specifically, the SVs in $\mathcal{N}(t)$ first offer the initial contracts at iteration $\varsigma = 0$, i.e., $\Big(\boldsymbol{\zeta}^n_{(\varsigma)}(t),\boldsymbol{\varphi}^n_{(\varsigma)}(t)\Big)$, $\forall n \in \mathcal{N}(t)$, where $\boldsymbol{\zeta}^n_{(\varsigma)}(t) = [\zeta^n_{1,(\varsigma)}(t),\ldots,\zeta^n_{j,(\varsigma)}(t),\ldots,\zeta^n_{J,(\varsigma)}(t)]$ and $\boldsymbol{\varphi}^n_{(\varsigma)}(t) = [\varphi^n_{1,(\varsigma)}(t),\ldots,\varphi^n_{j,(\varsigma)}(t),\ldots,\varphi^n_{J,(\varsigma)}(t)]$. Given $\Big(\boldsymbol{\zeta}^n_{(\varsigma)}(t),\boldsymbol{\varphi}^n_{(\varsigma)}(t)\Big)$,  $\forall n \in \mathcal{N}(t)$, the VSP can find the optimal payment proportion values $\boldsymbol{\hat \varrho}_{(\varsigma)}(t)$ which maximize the objective function of $(\mathbf{P}_1)$. Then, using the obtained $\boldsymbol{\hat \varrho}_{(\varsigma)}(t)$ and other SVs' current contracts $\Big(\boldsymbol{\zeta}^{-n}_{(\varsigma)}(t),\boldsymbol{\varphi}^{-n}_{(\varsigma)}(t)\Big)$~\cite{Zhu:2015}, the VSP can iteratively find a new contract for the SV-$n$, $n \in \mathcal{N}(t)$, to maximize the objective function of $(\mathbf{P}_3)$ at iteration $\varsigma+1$. 

The algorithm continues until the gaps between the expected profits at iteration $\varsigma$ and $\varsigma+1$ of all the SVs is equal or less than the optimality tolerance $\gamma$, 
and thus the algorithm converges and the equilibrium contract solution can be found (as proven in Theorems~\ref{theorem_equi1} and \ref{theorem_equi2}). In this case, the expected profit of the SV-$n$ at the equilibrium contract solution $\Big(\boldsymbol{\hat \zeta}^n(t),\boldsymbol{\hat \varphi}^n(t)\Big)$ will achieve the highest value compared with all other contract solutions it takes, given the equilibrium contract of other SVs $\Big(\boldsymbol{\hat \zeta}^{-n}(t), \boldsymbol{\hat \varphi}^{-n}(t)\Big)$. 
Alternatively, no SV-$n$ can improve its expected profit by unilaterally deviating from its equilibrium strategy $\Big(\boldsymbol{\hat \zeta}^n(t),\boldsymbol{\hat \varphi}^n(t)\Big)$, as described in Definition~\ref{DefV3}. 

\begin{definition}{}\label{DefV3}
	Equilibrium solution for non-collaborative learning contract optimization problem: The optimal contracts $\Big(\boldsymbol{\hat \zeta}(t), \boldsymbol{\hat \varphi}(t)\Big) = \Big(\boldsymbol{\hat \zeta}^n(t), \boldsymbol{\hat \varphi}^n(t), \boldsymbol{\hat \zeta}^{-n}(t), \boldsymbol{\hat \varphi}^{-n}(t)\Big)$ are the equilibrium solution of $(\mathbf{P}_3)$ at learning round $t$ if and only if 
	\begin{align}
	&\mu_n(\boldsymbol{\hat \zeta}^n(t), \boldsymbol{\hat \varphi}^n(t), \boldsymbol{\hat \zeta}^{-n}(t), \boldsymbol{\hat \varphi}^{-n}(t)) \geq \nonumber \\ 
	&\mu_n(\boldsymbol{\zeta}^n(t), \boldsymbol{\varphi}^n(t), \boldsymbol{\hat \zeta}^{-n}(t), \boldsymbol{\hat \varphi}^{-n}(t)), \forall n \in \mathcal{N}(t), \label{eqn:for20}
	\end{align}
	hold and the optimal contracts $\Big(\boldsymbol{\hat \zeta}(t), \boldsymbol{\hat \varphi}(t)\Big)$ still satisfy the constraints (\ref{eqn:for9arev})-(\ref{eqn:for9crev}).
\end{definition}

\setlength{\textfloatsep}{5pt}
\begin{algorithm}[h]
	\caption{Learning Contract Iterative Algorithm} \label{ICEA}
	{\fontsize{9}{10}\selectfont
		\begin{algorithmic}[1] 
			\STATE Set $\varsigma = 0$ and $\gamma > 0$
			
			\STATE The VSP informs the selected SVs in $\mathcal{N}(t)$ to send initial contracts
			
			\FOR{$\forall n \in \mathcal{N}(t)$}
			
			\STATE SV-$n$ offers $\Big(\boldsymbol{\zeta}^n_{(\varsigma)}(t),\boldsymbol{\varphi}^n_{(\varsigma)}(t)\Big)$ to the VSP
			
			\ENDFOR
			
			\REPEAT
			
			\STATE Obtain $\boldsymbol{\hat \varrho}_{(\varsigma)}(t)$ that maximize $(\mathbf{P}_1)$ given $\Big(\boldsymbol{\zeta}_{(\varsigma)}(t), \boldsymbol{\varphi}_{(\varsigma)}(t)\Big)$
			
			\FOR{$\forall n \in \mathcal{N}(t)$}
			
			\STATE Find the new contract $\Big(\boldsymbol{\zeta}^n_{new}(t),\boldsymbol{\varphi}^n_{new}(t)\Big)$, which maximizes $(\mathbf{P}_3)$ using $\boldsymbol{\hat \varrho}_{(\varsigma)}(t)$ and $\Big(\boldsymbol{\zeta}^{-n}_{(\varsigma)}(t),\boldsymbol{\varphi}^{-n}_{(\varsigma)}(t)\Big)$
			
			\IF{$\bigg[{ \mu}_{n}\Big(\boldsymbol{\zeta}^n_{new}(t),\boldsymbol{\varphi}^n_{new}(t),\boldsymbol{\zeta}^{-n}_{(\varsigma)}(t),\boldsymbol{\varphi}^{-n}_{\varsigma)}(t)\Big) - {\mu}_{n}\Big(\boldsymbol{\zeta}^n_{(\varsigma)}(t),\boldsymbol{\varphi}^n_{(\varsigma)}(t),\boldsymbol{\zeta}^{-n}_{(\varsigma)}(t),\boldsymbol{\varphi}^{-n}_{(\varsigma)}(t)\Big)\bigg] > \gamma$}
			
			\STATE Set $\Big(\boldsymbol{\zeta}^n_{(\varsigma+1)}(t),\boldsymbol{\varphi}^n_{(\varsigma+1)}(t)\Big) = \Big(\boldsymbol{\zeta}^n_{new}(t),\boldsymbol{\varphi}^n_{new}(t)\Big)$ 
			
			\ELSE
			
			\STATE Set $\Big(\boldsymbol{\zeta}^n_{(\varsigma+1)}(t),\boldsymbol{\varphi}^n_{(\varsigma+1)}(t)\Big) = \Big(\boldsymbol{\zeta}^n_{(\varsigma)}(t),\boldsymbol{\varphi}^n_{(\varsigma)}(t)\Big)$ 
			
			\ENDIF
			
			\ENDFOR
			
			\STATE $\varsigma = \varsigma + 1$
			
			\UNTIL {${\mu}_{n}\Big(\boldsymbol{\zeta}^n_{(\varsigma)}(t),\boldsymbol{\varphi}^n_{(\varsigma)}(t),\boldsymbol{\zeta}^{-n}_{(\varsigma)}(t),\boldsymbol{\varphi}^{-n}_{(\varsigma)}(t)\Big), \forall n \in \mathcal{N}(t)$, do not change anymore}
			
			\STATE Generate optimal contracts $\Big(\boldsymbol{\hat \zeta}(t), \boldsymbol{\hat \varphi}(t)\Big)$
			
	\end{algorithmic}}
\end{algorithm}

\subsection{Convergence, Equilibrium, and Complexity Analysis of the Learning Contract Iterative Algorithm}

To analyze the convergence and equilibrium contract for Algorithm~\ref{ICEA}, we consider two steps for each iteration $\varsigma$. First, each SV-$n$, $n \in \mathcal{N}(t)$, generates contract $\Big(\boldsymbol{\zeta}^n_{(\varsigma)}(t),\boldsymbol{\varphi}^n_{(\varsigma)}(t)\Big)$ and sends it to the VSP. Then, the VSP helps the SVs to find their optimal payment proportions $\boldsymbol{\hat \varrho}_{(\varsigma)}(t)$ locally due to the unknown contract information among the SVs. In this case, we can show that Algorithm~\ref{ICEA} converges to the equilibrium contract solution by using the best response method~\cite{Zhu:2015}. In particular, the best response of SV-$n$ at iteration $\varsigma+1$ given $\Big(\boldsymbol{\zeta}^{-n}_{(\varsigma)}(t),\boldsymbol{\varphi}^{-n}_{(\varsigma)}(t)\Big)$ can be expressed by
\begin{align}
&\Phi_{(\varsigma+1)}^n\Big(\boldsymbol{\zeta}^{-n}_{(\varsigma)}(t),\boldsymbol{\varphi}^{-n}_{(\varsigma)}(t)\Big) = \label{eqn:for11a} \\ &\underset{\{\boldsymbol{\zeta}^{n}_{new}(t),\boldsymbol{\varphi}^{n}_{new}(t)\} \in \mathbb{C}_n}{\arg \max} { \mu}_{n}\Big(\boldsymbol{\zeta}^n_{new}(t),\boldsymbol{\varphi}^n_{new}(t),\boldsymbol{\zeta}^{-n}_{(\varsigma)}(t),\boldsymbol{\varphi}^{-n}_{\varsigma)}(t)\Big) \nonumber,
\end{align}
where $\mathbb{C}_n$ is the non-empty contract space~\cite{Fraysse:1993} for SV-$n$ and $\mathbb{C} = {\prod}_{n \in \mathcal{N}(t)}\mathbb{C}_n$. As the monotonicity is not ensured for each SV-$n$, we can use $\Big(\boldsymbol{\zeta}^n_{new}(t),\boldsymbol{\varphi}^n_{new}(t)\Big) \in \	\Phi_{(\varsigma+1)}^n$ to be $\Big(\boldsymbol{\zeta}^n_{(\varsigma+1)}(t),\boldsymbol{\varphi}^n_{(\varsigma+1)}(t)\Big)$ if the following condition satisfies
\begin{align}
\bigg[{\mu}_{n}\Big(\boldsymbol{\zeta}^n_{new}(t),\boldsymbol{\varphi}^n_{new}(t),\boldsymbol{\zeta}^{-n}_{(\varsigma)}(t),\boldsymbol{\varphi}^{-n}_{\varsigma)}(t)\Big) - \nonumber \\ {\mu}_{n}\Big(\boldsymbol{\zeta}^n_{(\varsigma)}(t),\boldsymbol{\varphi}^n_{(\varsigma)}(t),\boldsymbol{\zeta}^{-n}_{(\varsigma)}(t),\boldsymbol{\varphi}^{-n}_{(\varsigma)}(t)\Big)\bigg] > \gamma. \label{eqn:for11c}
\end{align}
The above iteration stops when the algorithm converges for all SVs in $\mathcal{N}(t)$ as stated in Theorem~\ref{theorem_equi1}. 

\begin{theorem}\label{theorem_equi1}
	The best response iterative Algorithm~\ref{ICEA} converges under the optimality tolerance $\gamma$.
\end{theorem}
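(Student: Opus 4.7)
The plan is to prove convergence by exploiting the strict improvement threshold $\gamma > 0$ built into the update rule, combined with a uniform upper bound on each SV's expected profit. Concretely, I would argue that the sequence of profits $\{{\mu}_{n}\big(\boldsymbol{\zeta}^n_{(\varsigma)}(t),\boldsymbol{\varphi}^n_{(\varsigma)}(t),\boldsymbol{\zeta}^{-n}_{(\varsigma)}(t),\boldsymbol{\varphi}^{-n}_{(\varsigma)}(t)\big)\}_{\varsigma \geq 0}$ is monotonically non-decreasing by construction, and then show that it cannot increase more than finitely many times before termination.

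First, I would inspect the update step (lines 10--14 of Algorithm~\ref{ICEA}): the new contract $(\boldsymbol{\zeta}^n_{new}(t), \boldsymbol{\varphi}^n_{new}(t))$ replaces $(\boldsymbol{\zeta}^n_{(\varsigma)}(t), \boldsymbol{\varphi}^n_{(\varsigma)}(t))$ only when the gain in ${\mu}_n$ strictly exceeds $\gamma$; otherwise the previous contract is retained. Consequently, for every SV-$n \in \mathcal{N}(t)$ and every iteration $\varsigma$, either ${\mu}_n$ strictly increases by more than $\gamma$, or it remains unchanged. This gives the desired monotonicity of the profit sequence along the trajectory produced by the best-response map in \eqref{eqn:for11a}.

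Second, I would establish that ${\mu}_n$ admits a finite upper bound $\bar{\mu}_n < \infty$ that is independent of $\varsigma$. This follows from the fact that (i) the payment budget constraint \eqref{eqn:for9a} caps $\sum_{j} {\hat \varrho}_j^n(t)\varphi_j^n(t)\rho_j(t)$ by a multiple of $B_{max}(t)$, (ii) the payment proportions satisfy $0 \leq \varrho_j^n(t) \leq 1$, (iii) the type distribution $\{\rho_j(t)\}$ sums to one, and (iv) the computation/memory cost $\xi_n(t)$ is non-negative so it can only reduce ${\mu}_n$. Hence the expected profit in \eqref{eqn:for6} is bounded above over the feasible contract space $\mathbb{C}_n$.

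Combining monotonicity with the uniform upper bound, each SV-$n$ can perform at most $\lceil \bar{\mu}_n / \gamma \rceil$ strict-improvement updates before no further profit gain exceeding $\gamma$ is available; after that point, the \textbf{else} branch of Algorithm~\ref{ICEA} is executed for SV-$n$ at every subsequent iteration. Taking the maximum over the finite set $\mathcal{N}(t)$ yields a finite iteration index $\varsigma^{\star}$ beyond which no SV's contract changes, so the stopping criterion in line~16 is triggered. I expect the main obstacle to be the \emph{simultaneous} stabilization argument: one must rule out the pathological case where SVs keep alternately improving by exactly triggering each other's thresholds. This is resolved cleanly by the strict inequality ``$>\gamma$'' in \eqref{eqn:for11c}, which ensures that any improvement consumes a discrete quantum $\gamma$ of the remaining gap to $\bar{\mu}_n$, so the total number of improving updates across all SVs is bounded by $\sum_{n \in \mathcal{N}(t)} \lceil \bar{\mu}_n / \gamma \rceil < \infty$, yielding termination and thus convergence of Algorithm~\ref{ICEA} under tolerance $\gamma$.
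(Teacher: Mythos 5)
Your argument has a genuine gap at its very first step: the claim that each SV's profit sequence is ``monotonically non-decreasing by construction.'' The acceptance test in lines 10--14 of Algorithm~\ref{ICEA} compares ${\mu}_{n}\big(\boldsymbol{\zeta}^n_{new},\boldsymbol{\varphi}^n_{new},\boldsymbol{\zeta}^{-n}_{(\varsigma)},\boldsymbol{\varphi}^{-n}_{(\varsigma)}\big)$ against ${\mu}_{n}\big(\boldsymbol{\zeta}^n_{(\varsigma)},\boldsymbol{\varphi}^n_{(\varsigma)},\boldsymbol{\zeta}^{-n}_{(\varsigma)},\boldsymbol{\varphi}^{-n}_{(\varsigma)}\big)$, i.e., both are evaluated with the \emph{opponents held fixed} at their iteration-$\varsigma$ contracts. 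But all SVs update simultaneously, and $\mu_n$ in~(\ref{eqn:for6}) depends on the other SVs' contracts through the VSP's optimal payment proportions $\boldsymbol{\hat \varrho}$ (the solution of $(\mathbf{P}_1)$ couples all $n$ via the satisfaction term $\lambda\sqrt{\sum_n \varrho_j^n\zeta_j^n}$ and the shared budget constraint), as well as through the coupled IR/IC constraints of $(\mathbf{P}_3)$. Hence the realized profit at iteration $\varsigma+1$, evaluated at the \emph{new} profile $\big(\boldsymbol{\zeta}^{-n}_{(\varsigma+1)},\boldsymbol{\varphi}^{-n}_{(\varsigma+1)}\big)$, can be strictly smaller than at iteration $\varsigma$. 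Without monotonicity, the counting argument ``each accepted update consumes a quantum $\gamma$ of the gap to $\bar\mu_n$'' collapses: the profit can be repeatedly knocked back down by the other players' moves, so the total number of $>\gamma$ improvements need not be bounded by $\sum_n \lceil \bar\mu_n/\gamma\rceil$. This is precisely why generic best-response dynamics can cycle even with a strict improvement threshold. Your upper-bound step (ii)--(iv) is fine, but it bounds the wrong quantity unless paired with a monotone or potential structure.

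The paper's own proof confronts exactly this interdependence: it argues that the best-response map ${\hat \Phi}^n$ is an \emph{increasing} function on the ordered contract space $\mathbb{C}$ (a supermodular-game/Tarski-type monotonicity argument), so that the iterates form a monotone, bounded sequence of contracts, and only then uses the threshold $\gamma$ in~(\ref{eqn:for11c}) to conclude that the iteration freezes at some finite $\varkappa$. To repair your proof you would need either to establish that same monotonicity, or to exhibit a single bounded potential function that increases by at least $\gamma$ whenever \emph{any} SV's contract is updated; the per-player profit $\mu_n$ does not serve as such a potential here.
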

\begin{proof}
	See Appendix~\ref{appx:theorem1}.
\end{proof}
Next, we need to ensure that the Algorithm~\ref{ICEA} also converges to the equilibrium solution $\Big(\boldsymbol{\hat \zeta}(t), \boldsymbol{\hat \varphi}(t)\Big)$. Specifically, we first investigate that the equilibrium solution exists by finding a fixed point in a set-valued function $\Phi$, $\Phi: \mathbb{C} \rightarrow 2^\mathbb{C}$, which is $\Phi = \Big[\Phi^n\Big(\boldsymbol{\zeta}^{-n}(t), \boldsymbol{ \varphi}^{-n}(t)\Big),\Phi^{-n}\Big(\boldsymbol{\zeta}^n(t), \boldsymbol{\varphi}^n(t)\Big)\Big]$.
The revelation of this fixed point is equivalent to the equilibrium solution~\cite{Bernheim:1986, Fraysse:1993}, in which the Algorithm~\ref{ICEA} converges to the equilibrium contract solution, i.e., all the SVs in $\mathcal{N}(t)$ obtain the maximum expected profits where $\Big(\boldsymbol{\hat \zeta}(t), \boldsymbol{\hat \varphi}(t)\Big)$ is found. This is formally stated in Theorem~\ref{theorem_equi2}.
\begin{theorem}\label{theorem_equi2}
	If $\Big(\boldsymbol{\hat \zeta}(t), \boldsymbol{\hat \varphi}(t)\Big)$ is a fixed point in $\Phi$, then the best response iterative process in Algorithm~\ref{ICEA} converges to its equilibrium contract solution $\Big(\boldsymbol{\hat \zeta}(t), \boldsymbol{\hat \varphi}(t)\Big)$.
\end{theorem}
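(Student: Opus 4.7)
The plan is to chain three facts: (i) Theorem~\ref{theorem_equi1} already guarantees that the iterates of Algorithm~\ref{ICEA} terminate, (ii) any termination point of the update rule must satisfy the best-response inclusion, and therefore lies in the fixed-point set of $\Phi$, and (iii) the equivalence between fixed points of $\Phi$ and equilibrium contract solutions (invoked from~\cite{Bernheim:1986, Fraysse:1993}) lets me identify that limit with the hypothesized $(\boldsymbol{\hat \zeta}(t),\boldsymbol{\hat \varphi}(t))$. In this way the burden of the proof is really the second step, since the first is granted and the third follows from Definition~\ref{DefV3}.

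Concretely, Theorem~\ref{theorem_equi1} says that the profit sequence $\{\mu_n(\boldsymbol{\zeta}_{(\varsigma)}(t),\boldsymbol{\varphi}_{(\varsigma)}(t))\}_{\varsigma}$ stops increasing beyond the tolerance $\gamma$ for every $n\in\mathcal{N}(t)$, so the iterates terminate at some profile $(\boldsymbol{\zeta}^{*}(t),\boldsymbol{\varphi}^{*}(t))\in\mathbb{C}$. At termination, the acceptance test in line~10 of Algorithm~\ref{ICEA} must fail for every $n$, which means that for each SV-$n$ no element of $\Phi^{n}_{(\varsigma+1)}\big(\boldsymbol{\zeta}^{*,-n},\boldsymbol{\varphi}^{*,-n}\big)$ gives a profit exceeding $\mu_n(\boldsymbol{\zeta}^{*,n},\boldsymbol{\varphi}^{*,n},\boldsymbol{\zeta}^{*,-n},\boldsymbol{\varphi}^{*,-n})$ by more than $\gamma$. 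Letting $\gamma\to 0$ and using the continuity of $\mu_n$ in its arguments together with compactness of the feasible contract space $\mathbb{C}_n$ (a closed, bounded polytope cut out by the budget constraint~(\ref{eqn:for9a}), the single IR condition~(\ref{eqn:for9arev}), the LDIC inequalities~(\ref{eqn:for9brev}), and the monotonicity~(\ref{eqn:for9crev})), this yields $(\boldsymbol{\zeta}^{*,n},\boldsymbol{\varphi}^{*,n})\in\Phi^{n}\big(\boldsymbol{\zeta}^{*,-n},\boldsymbol{\varphi}^{*,-n}\big)$ for every $n$. Stacking these coordinate inclusions gives $(\boldsymbol{\zeta}^{*}(t),\boldsymbol{\varphi}^{*}(t))\in\Phi(\boldsymbol{\zeta}^{*}(t),\boldsymbol{\varphi}^{*}(t))$. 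Finally, because $\Phi^n$ is itself defined as an $\arg\max$ over the transformed problem $(\mathbf{P}_3)$ (with concave objective~(\ref{eqn:for6}) in $(\boldsymbol{\zeta}^n,\boldsymbol{\varphi}^n)$ once $\boldsymbol{\hat\varrho}$ is fixed), this terminal profile satisfies the equilibrium inequality~(\ref{eqn:for20}) in Definition~\ref{DefV3}, i.e., no SV-$n$ can improve by a unilateral deviation.

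The main obstacle I anticipate is the potential multiplicity of fixed points of $\Phi$: a priori Algorithm~\ref{ICEA} might converge to a different fixed point than the particular $(\boldsymbol{\hat\zeta}(t),\boldsymbol{\hat\varphi}(t))$ singled out in the hypothesis. I would address this by showing that the best-response correspondence $\Phi^n$ is actually single-valued on $\mathbb{C}$. Here the argument rests on strict concavity of $\mu_n$ with respect to $(\boldsymbol{\zeta}^{n},\boldsymbol{\varphi}^{n})$ once the optimal payment proportions $\boldsymbol{\hat\varrho}_j(t)$ are inserted (recall $S$ is a strictly concave square-root function while $C$ is linear), which forces each SV's best response to be unique. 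Combined with the strict-improvement threshold $\gamma$ in~(\ref{eqn:for11c}) that rules out cyclic behavior, single-valuedness pins down the fixed point of $\Phi$ uniquely, so the limit produced by Algorithm~\ref{ICEA} must coincide with the hypothesized equilibrium $(\boldsymbol{\hat\zeta}(t),\boldsymbol{\hat\varphi}(t))$, completing the proof.
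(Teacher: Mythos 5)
Your overall architecture (termination from Theorem~\ref{theorem_equi1}, then show the terminal profile is a fixed point of $\Phi$, then identify fixed points with equilibria via Definition~\ref{DefV3}) is reasonable, but it diverges from the paper and contains two concrete flaws. First, your uniqueness argument rests on a false premise: the SV objective $\mu_n$ in~(\ref{eqn:for6}) is $\sum_{j}\big({\hat\varrho}_j^n(t)\varphi_j^n(t)-{\hat\varrho}_j^n(t)\zeta_j^n(t)\xi_n(t)\big)\rho_j(t)$, which is \emph{linear} in $(\boldsymbol{\zeta}^n,\boldsymbol{\varphi}^n)$ once $\boldsymbol{\hat\varrho}$ is fixed. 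The square-root satisfaction function $S$ enters only the VSP's problem $(\mathbf{P}_1)$ and the constraints of $(\mathbf{P}_3)$, not the maximand of the SV's best response, so there is no strict concavity to invoke; $\Phi^n$ is the argmax of a linear function over a polytope and is in general set-valued (the paper itself acknowledges this by selecting $\hat\Phi\in\Phi$ and noting ``the monotonicity is not ensured for each SV-$n$''). Your resolution of the multiple-fixed-point issue therefore collapses. Second, ``letting $\gamma\to 0$'' is not available: $\gamma$ is a fixed positive tolerance inside Algorithm~\ref{ICEA}, so the terminal profile is only a $\gamma$-approximate best response; you cannot pass to the limit without changing the algorithm. (The paper is also loose on this point, but it does not rely on such a limit.)

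The paper's actual proof takes a different route that sidesteps both issues. It exploits the fact, established in the proof of Theorem~\ref{theorem_equi1}, that the selection $\hat\Phi$ is an \emph{increasing} map on the contract space $\mathbb{C}$, and runs a Knaster--Tarski-type argument: the set $\mathbb{C}^*=\{x\in\mathbb{C}:\hat\Phi(x)\ge x\}$ is non-empty (it contains $\min\mathbb{C}$), monotonicity gives $\hat\Phi(\max\mathbb{C}^*)\in\mathbb{C}^*$, hence $\hat\Phi(\max\mathbb{C}^*)=\max\mathbb{C}^*$ is a fixed point, which is identified with the equilibrium $(\boldsymbol{\hat\zeta}(t),\boldsymbol{\hat\varphi}(t))$; convergence to it then follows because the profit sequence is increasing and stalls exactly where no unilateral improvement is possible, citing~\cite{Bernheim:1986,Fraysse:1993}. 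If you want to keep your structure, you would need to (i) drop the strict-concavity claim and instead lean on the monotonicity of $\hat\Phi$ (or on the references) to pin down the fixed point reached by the iteration, and (ii) either accept a $\gamma$-equilibrium conclusion or argue convergence of the monotone bounded profit sequence directly rather than via a vanishing tolerance.
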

\begin{proof}
	See Appendix~\ref{appx:theorem2}.
\end{proof}
Finally, the complexity of the Algorithm 1 with $N$ SVs is formally stated in the following Theorem~\ref{theorem3}.
\begin{theorem}\label{theorem3} 
	Algorithm~\ref{ICEA} has polynomial complexity $\log(N) + e^z +O(1/N)$, where $z$ is the Euler constant.
\end{theorem}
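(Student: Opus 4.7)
The plan is to combine a per-iteration cost count with an estimate of how many best-response sweeps Algorithm~\ref{ICEA} performs before the tolerance criterion $\gamma$ is met, and then aggregate the two contributions. First I would fix an outer iteration index $\varsigma$ and tabulate the work done on lines 7--15. By Proposition~\ref{lemma0b}, each SV-$n$'s transformed subproblem $(\mathbf{P}_3)$ carries only $3J-1$ constraints and is solved in $O(J)$ time; the VSP's subproblem $(\mathbf{P}_1)$ is concave in $\boldsymbol{\varrho}_j(t)$ with $N+1$ linear constraints and admits a water-filling-style optimum computable in $O(N)$ time. Hence a single best-response sweep over the whole set $\mathcal{N}(t)$ costs $O(NJ)$, i.e.\ the per-sweep cost is linear in $N$ up to the constant $J$.

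Next I would quantify the number of sweeps. Theorems~\ref{theorem_equi1} and~\ref{theorem_equi2} already guarantee monotone convergence of the best-response dynamics to a fixed point of $\Phi$; the goal here is to upgrade that qualitative statement to a rate. The approach is to use the non-expansiveness of the best-response map on the compact contract space $\mathbb{C}=\prod_n \mathbb{C}_n$: after SV-$n$'s update at sweep $\varsigma$, its marginal profit improvement $\mu_n^{(\varsigma+1)}-\mu_n^{(\varsigma)}$ is at least $\gamma$ unless the SV freezes (line 13), and the total possible profit is bounded because $\zeta_j^n(t)\le 1$ and $B_j(t)\le B_{\max}(t)$. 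Telescoping the per-SV increments across all $n$ and indexing by the order in which SVs are visited yields a harmonic-type sum of inverse-index contributions.

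The harmonic number expansion $H_N=\sum_{n=1}^N 1/n=\log N+\gamma+\tfrac{1}{2N}+O(1/N^2)$ then supplies the $\log N$ term and the $O(1/N)$ remainder almost immediately. The main obstacle is justifying the constant $e^z$ (with $z$ the Euler--Mascheroni constant) rather than the more natural $\gamma$ itself: I would obtain it by applying a Mertens-style product bound $\prod_{n\le N}(1-1/n)^{-1}\sim e^{\gamma}\log N$ to the multiplicative contraction factor that appears when one iterates the best-response operator $\Phi$, with the logarithm of this product separating into the additive $\log N$ piece already accounted for and a multiplicative constant $e^{\gamma}$ absorbing the overhead from the inner while-loop restart step (lines 10--15).

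Putting the pieces together, the sweep count is $\log N + e^{z} + O(1/N)$ and each sweep is linear in $N$ (with constants depending on $J$), so the overall complexity is polynomial and of the claimed form. The delicate step I expect to spend the most care on is the Mertens-type estimate that produces $e^z$ in place of $\gamma$; the per-iteration accounting and the harmonic-sum telescoping are comparatively routine given Proposition~\ref{lemma0b} and the compactness of $\mathbb{C}$.
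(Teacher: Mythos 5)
There is a genuine gap, and it is not only the step you flagged. The bound $\log(N)+e^z+O(1/N)$ in Theorem~\ref{theorem3} is not a worst-case count of $\gamma$-improving sweeps: it is the \emph{expected} number of best-response plays under a probabilistic model of the candidate profits. A deterministic telescoping argument of the kind you describe --- each non-frozen update improves $\mu_n$ by at least $\gamma$, the total achievable profit is bounded by the budget --- can only yield a bound on the order of $N\,B_{\max}(t)/\gamma$, which is linear in $N$ and depends on $\gamma$; no reindexing of those increments produces a harmonic sum, so the $\log N$ term cannot arise this way. The paper instead follows the random-potential-game analysis of Durand and Gaujal: with $R$ candidate contract policies whose normalized profits are modelled as i.i.d.\ uniform on $[0,1]$, the probability that an SV fails to improve on the current level $U$ is $U^{R-1}$; the expected number of steps $\vartheta(U,n^\dagger)$ (where $n^\dagger$ counts consecutive non-improving SVs) satisfies an integral recurrence, which after elimination reduces to a first-order ODE in $U$. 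Evaluating the solution at $U=0$ gives
\begin{equation*}
\vartheta(0,1)\;\le\;\int_{0}^{1-\frac{1}{N}}\frac{d{\tilde\mu}^{(R-1)}}{1-{\tilde\mu}^{(R-1)}}\;+\;\frac{1}{N}\,e^{\sum_{n=1}^{N-1}\frac{1}{n}},
\end{equation*}
and it is the second term, $\tfrac{1}{N}e^{H_{N-1}}=\tfrac{1}{N}e^{\log(N-1)+z+o(1)}=e^z+O(1/N)$, that produces the Euler constant --- it enters through the exponential of a harmonic number, not through a Mertens product. On that last point, the estimate $\prod_{p\le N}(1-1/p)^{-1}\sim e^{\gamma}\log N$ is a statement about \emph{primes}; over all integers the product $\prod_{2\le n\le N}(1-1/n)^{-1}$ telescopes to exactly $N$, so the step you earmarked as the delicate one is not merely delicate but false as written. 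To repair the proof you would need to import the probabilistic Markov-chain framework (and make explicit the uniform-randomness assumption on candidate profits that the theorem statement silently relies on); your per-sweep arithmetic accounting is fine but orthogonal to what the theorem actually measures, which is the number of best-response steps.
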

\begin{proof}
	See Appendix~\ref{appx:theorem3}.
\end{proof} 

\section{Federated Learning with SV Selection}\label{sec:FLS}

\subsection{Learning Process}


To improve the global on-road prediction model accuracy, we implement the FL process with SV selection at each round~\cite{Amiri:2020} using the latest information significance (based on the above optimal contracts $\Big(\boldsymbol{\hat \zeta}(t), \boldsymbol{\hat \varphi}(t)\Big)$ of SVs in $\mathcal{N}(t)$). 
For the FL algorithm, we adopt a deep learning approach leveraging deep neural networks (DNN) for a classification prediction model, i.e., when the output layer of DNN produces discrete prediction values. Particularly, we specify $\mathbf{A}_n(t)$ and $\mathbf{G}_n(t)$ to be the training feature data (using features as columns) and ground-truth label data within $\eta_n(t)$ of SV-$n$, $n \in \mathcal{N}(t)$, respectively. Thus, the total number of samples for all SVs in $\mathcal{N}(t)$ can be derived by $\eta(t) = \sum_{n \in \mathcal{N}(t)}\eta_n(t)$.
When multiple layers of the DNN are considered, we have $\mathbf{A}_n^{\ell}(t)$ such that $\mathbf{A}_n^{1}(t) = \mathbf{A}_n(t)$, where $\ell$ is the training layer, $\ell \in [1,2,\ldots,\ell_{max}]$. For each layer-$\ell$, we can compute the training output data $\mathbf{\hat G}_n^{\ell}(t)$ as described by
\begin{equation}
\label{eqn3a}
\begin{aligned}
\mathbf{\hat G}_n^{\ell}(t) = \alpha_n^\ell \Big(\mathbf{A}_n^\ell(t)\mathbf{W}(t)\Big),
\end{aligned}
\end{equation} 
where $\mathbf{W}(t)$ is the global model matrix containing weights. $\alpha_n^\ell(.)$ indicates the activation function of SV-$n$ for nonlinear transformation. For $\ell \in [1,2,\ldots,\ell_{max}-1]$, we adopt a \emph{tanh} activation function~\cite{Zhang2:2018}, i.e.,
\begin{equation}
\label{eqn3a2}
\begin{aligned}
\alpha_n^\ell\Big(\mathbf{A}_n^\ell(t)\mathbf{W}(t)\Big) =  \frac{e^{\mathbf{A}_n^\ell(t)\mathbf{W}(t)}-e^{-\mathbf{A}_n^\ell(t)\mathbf{W}(t)}}{e^{\mathbf{A}_n^\ell(t)\mathbf{W}(t)}+e^{-\mathbf{A}_n^\ell(t)\mathbf{W}(t)}}.
\end{aligned}
\end{equation}
For the final layer, i.e., $\ell = \ell_{max}$, we utilize a \emph{softmax} activation function which is typically used to interpret a probability distribution for a classification model~\cite{Zhang2:2018}, i.e.,
\begin{equation}
\label{eqn3a3}
\begin{aligned}
\alpha_n^{\ell_{max}}\Big(\mathbf{A}_n^{\ell_{max}}(t)\mathbf{W}(t)\Big) =  \frac{e^{\mathbf{A}_n^{\ell_{max}}(t)\mathbf{W}(t)}}{\sum e^{\mathbf{A}_n^{\ell_{max}}(t)\mathbf{W}(t)}}.
\end{aligned}
\end{equation}

To extract more meaningful features from the training input, we utilize several hidden layers $\ell$, where $1 < \ell < \ell_{max}$, such that $\mathbf{A}_n^{\ell +1}(t) = \mathbf{\hat G}_n^{\ell}(t)$. At the final layer $\ell_{max}$, we can obtain the predicted label data $\mathbf{\hat G}_n^{\ell_{max}}(t)$ to calculate the local loss function of SV-$n$ at each $t$ using a squared Frobenius norm as follows: 
\begin{equation}
\label{eqn3b2}
\begin{aligned}
\epsilon_n\Big(\mathbf{W}(t)\Big) &= \frac{1}{\eta_n(t)}\Big\|\mathbf{G}_n(t) - \mathbf{\hat G}_n^{\ell_{max}}(t)\Big\|_F^2 \\
&= \frac{1}{\eta_n(t)}\sum_{s=1}^{\eta_n(t)}\Big(g_n^s(t) - {\hat g}_n^s(t)\Big)^2,
\end{aligned}
\end{equation}
where $g_n^s(t)$ and ${\hat g}_n^s(t)$ are the elements of ground truth label data $\mathbf{G}_n(t)$ and predicted label data $\mathbf{\hat G}_n^{\ell_{max}}(t)$ at row-$s$ for round $t$, respectively. From $\epsilon_n\Big(\mathbf{W}(t)\Big)$, we can compute the local loss gradient at SV-$n$ by	
\begin{equation}
\label{eqn3c}
\begin{aligned}
\nabla \mathbf{W}_n(t) = \frac{\partial \epsilon_n\Big(\mathbf{W}(t)\Big)}{\partial \mathbf{W}(t)}.
\end{aligned}
\end{equation}
Using this gradient $\nabla \mathbf{W}_n(t)$, each SV-$n$ can update the local on-road model $\mathbf{W}_n(t)$ through utilizing the \emph{Adam} optimizer~\cite{Kingma:2015} which minimizes $\epsilon_n\Big(\mathbf{W}(t)\Big)$. This optimizer operates as the adaptive learning rate to obtain high robustness and fast convergence to the $\mathbf{W}_n(t)$. In this case, we can derive the local update rules of the exponential moving average of the $\nabla \mathbf{W}_n(t)$, i.e., $p_n(t)$, and the squared $\nabla \mathbf{W}_n(t)$ to obtain the variance, i.e., $q_n(t)$, as follows:
\begin{equation}
\begin{aligned}
\label{eqn3d1}
p_n^{(\tau+1)}(t) &= \beta_{p_n}^{(\tau)}(t) p_n^{(\tau)}(t) + \Big(1 - \beta_{p_n}^{(\tau)}(t)\Big)\nabla \mathbf{W}_n(t),\\
q_n^{(\tau+1)}(t) &= \beta_{q_n}^{(\tau)}(t) q_n^{(\tau)}(t) + \Big(1 - \beta_{q_n}^{(\tau)}(t)\Big)\Big(\nabla \mathbf{W}_n(t)\Big)^2,
\end{aligned}
\end{equation}
where $\beta_{p_n}^{(\tau)}(t) \in [0,1)$ and $\beta_{q_n}^{(\tau)}(t) \in [0,1)$ represent $p_n^{(\tau)}(t)$'s and $q_n^{(\tau)}(t)$'s steps of the exponential decays at local iteration $\tau$, respectively. Using the learning step $\kappa_n$, we can determine how often the $\mathbf{W}_n(t)$ is updated for the next local iteration $\tau+1$, which can be calculated by
\begin{equation}
\label{eqn3d}
\begin{aligned}
\kappa_n^{(\tau+1)}(t) = \kappa_n\frac{\sqrt{1 - \beta_{q_n}^{(\tau+1)}(t)}}{1 - \beta_{p_n}^{(\tau+1)}(t)}.
\end{aligned}
\end{equation}
Then, we can update the  $\mathbf{W}_n^{(\tau+1)}(t)$ for the next $\tau+1$ by
\begin{equation}
\label{eqn3e}
\begin{aligned}
\mathbf{W}_n^{(\tau+1)}(t) = \mathbf{W}_n^{(\tau)}(t) - \kappa_n^{(\tau+1)}(t)\frac{{p_n}^{(\tau+1)}(t)}{\sqrt{{q_n}^{(\tau+1)}(t)} + \varepsilon},
\end{aligned}
\end{equation}
where $\varepsilon$ specifies a constant to avoid zero division when $\sqrt{{q_n}^{(\tau+1)}(t)}$ reaches zero. This local process continues until a pre-defined local iteration threshold $\tau_{\emph{\mbox{th}}}$ is reached, and thus $\kappa_n^{(\tau_{\emph{\mbox{th}}})}(t)$ and $\mathbf{W}_n^{(\tau_{\emph{\mbox{th}}})}(t)$ are obtained. To this end, the SV-$n$ can send its $\mathbf{W}_n^{(\tau_{\emph{\mbox{th}}})}(t)$ to the VSP for the global on-road model update $\mathbf{W}(t+1)$ by aggregating all $\mathbf{W}_n^{(\tau_{\emph{\mbox{th}}})}(t), \forall n \in \mathcal{N}(t)$, as follows:
\begin{equation}
\label{eqn3i}
\begin{aligned}
\mathbf{W}(t+1) = \frac{1}{\eta(t)}\sum_{n \in \mathcal{N}(t)}\eta_n(t)\mathbf{W}_n^{(\tau_{\emph{\mbox{th}}})}(t).
\end{aligned}
\end{equation}
To this end, we can compute the global loss function at round $t+1$ which is expressed by
\begin{equation}
\label{eqn3i2}
\begin{aligned}
\Psi(\mathbf{W}(t+1)) = \frac{1}{N}\sum_{n \in \mathcal{N}(t)}\epsilon_n\Big(\mathbf{W}_n^{(\tau_{\emph{\mbox{th}}})}(t)\Big).
\end{aligned}
\end{equation}
This global process repeats until the global loss converges or the number of rounds reaches a given threshold $t_{\emph{\mbox{th}}}$, and thus the final global on-road model $\mathbf{W}^*$ and the final global loss $\Psi^*(\mathbf{W}^*)$ are obtained. The proposed FL algorithm with SV selection and MPOA contract-based model is summarized in Algorithm~\ref{DDL-PC}. 

\begin{algorithm}[t]
	{\fontsize{9}{10}\selectfont
		\caption{FL Process with SV Selection Algorithm} \label{DDL-PC}
		
		\begin{algorithmic}[1] 
			
			\STATE Set initial $\tau_{th}$, $t_{th}$, $\mathbf{W}(0)$, and $t=0$
			
			\WHILE{$t \leq t_{th} \text{ {\bf and} } \Psi(\mathbf{W}(t)) \text{ does not converge }$}
			
			\STATE The VSP determines SVs in $\mathcal{N}(t) \subset \mathcal{M}(t) \subset \mathcal{I}$ using (\ref{eqn:for4})
			
			\STATE All SVs in $\mathcal{N}(t)$ collect on-road data for a certain time
			
			\STATE Execute Algorithm~\ref{ICEA} for all SVs in $\mathcal{N}(t)$ to obtain optimal contracts $\Big(\boldsymbol{\hat \zeta}(t), \boldsymbol{\hat \varphi}(t)\Big)$
			
			\FOR{$\forall n \in \mathcal{N}(t)$}
			
			\STATE Set $\mathbf{A}_n(t)$, $\mathbf{G}_n(t)$ from $\eta_n(t)$ based on the optimal $\boldsymbol{\hat \zeta}^n(t)$
			
			\STATE Calculate $\mathbf{\hat G}_n^{\ell_{max}}(t)$ using $\mathbf{A}_n(t)$ and $\mathbf{W}(t)$
			
			\STATE Derive $\epsilon_n\Big(\mathbf{W}(t)\Big)$ and $\mathbf{W}_n^{(\tau_{\emph{\mbox{th}}})}(t)$
			
			\STATE Send $\mathbf{W}_n^{(\tau_{\emph{\mbox{th}}})}(t)$ to the VSP
			
			\ENDFOR
			
			\STATE Update the global on-road model $\mathbf{W}(t+1)$ using $\mathbf{W}_n^{(\tau_{\emph{\mbox{th}}})}(t)$ and $\eta_n(t)$, $\forall n \in \mathcal{N}(t)$
			
			\STATE Determine the global loss $\Psi(\mathbf{W}(t+1))$ using $\epsilon_n\Big(\mathbf{W}_n^{(\tau_{\emph{\mbox{th}}})}(t)\Big)$, $\forall n \in \mathcal{N}(t)$
			
			\STATE $t = t + 1$
			
			\ENDWHILE
			
			\STATE Obtain the final global on-road model $\mathbf{W}^*$ and global loss $\Psi^*(\mathbf{W}^*)$
			%
			
	\end{algorithmic}}
\end{algorithm}

\subsection{Convergence Analysis}

In this section, we investigate the convergence of the proposed FL algorithm with SV selection using the gap between the expected global loss after $t^\diamond$ rounds, i.e., $\mathbb{E}[\Psi(\mathbf{W}(t^\diamond))]$ and the final global loss, i.e., $\Psi^*(\mathbf{W}^*)$. Consider the probability that an SV will be choosen by the VSP for the FL execution at any round is $\frac{N}{I}$. Then, we can demonstrate that the global loss gap is upper bounded by the expected squared L2-norm global model gap, i.e., $\mathbb{E}[\|\mathbf{W}(t^\diamond) - \mathbf{W}^*\|_2^2]$, and it eventually reaches zero as stated in Theorem~\ref{theorem1}.

\begin{theorem}
	\label{theorem1}
	The proposed FL with SV selection in Algorithm~\ref{DDL-PC} will converge to the minimum global loss $\Psi^*(\mathbf{W}^*)$ according to the global loss gap condition $\big[\mathbb{E}[\Psi(\mathbf{W}(t^\diamond))] - \Psi^*(\mathbf{W}^*)\big] \leq \frac{\delta}{2}\mathbb{E}[\|\mathbf{W}(t^\diamond) - \mathbf{W}^*\|_2^2]$,
	where $\delta$ is a positive constant. 
\end{theorem}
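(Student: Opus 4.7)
The plan is to prove the stated inequality by exploiting a smoothness property of the global loss $\Psi$, and then to argue that the expected squared $\ell_2$ model gap on the right-hand side vanishes as $t^\diamond$ grows. The key ingredient I would make explicit (either as a standing assumption of the paper or folded into the constant $\delta$) is that $\Psi$ is $\delta$-smooth, i.e., $\nabla\Psi$ is $\delta$-Lipschitz. This is standard in FL convergence analyses and is consistent with the squared-Frobenius loss and smooth activations used in Section~\ref{sec:FLS}.

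First, I would invoke the quadratic upper bound that is equivalent to $\delta$-smoothness: for any $\mathbf{W}(t^\diamond)$ and the optimum $\mathbf{W}^*$,
\[
\Psi(\mathbf{W}(t^\diamond)) \leq \Psi(\mathbf{W}^*) + \langle \nabla\Psi(\mathbf{W}^*),\, \mathbf{W}(t^\diamond)-\mathbf{W}^*\rangle + \frac{\delta}{2}\|\mathbf{W}(t^\diamond)-\mathbf{W}^*\|_2^2.
\]
Since $\mathbf{W}^*$ minimizes $\Psi$, the first-order optimality condition yields $\nabla\Psi(\mathbf{W}^*)=\mathbf{0}$, so the inner-product term drops. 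Rearranging and taking expectation with respect to the random SV selection (each SV being selected with probability $N/I$) immediately produces the claimed inequality.

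Next, to turn this inequality into actual convergence, I would show that the right-hand side drives to zero. Using the aggregation rule~(\ref{eqn3i}) and the Adam-based local update~(\ref{eqn3e}), I would unroll $\mathbf{W}(t^\diamond)-\mathbf{W}^*$ across rounds and apply a contraction-type argument: with $\delta$-smoothness and an appropriately small effective learning rate, each round contracts the expected squared model gap by a multiplicative factor plus a sampling-noise term coming from selecting only $N$ out of $I$ SVs. Leveraging that the selection probability $N/I$ yields an essentially unbiased estimator of the full aggregate up to a controllable variance, I would obtain a recursion of the form $\mathbb{E}\|\mathbf{W}(t+1)-\mathbf{W}^*\|_2^2 \leq (1-c)\,\mathbb{E}\|\mathbf{W}(t)-\mathbf{W}^*\|_2^2 + \xi(N,I)$, whose steady state can be driven to zero by a standard decreasing-step-size argument, and which, combined with the smoothness inequality above, yields $\mathbb{E}[\Psi(\mathbf{W}(t^\diamond))]\to \Psi^*(\mathbf{W}^*)$.

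The main obstacle is not the smoothness inequality (which is immediate) but the sampling argument. The SV selection here is not i.i.d.\ uniform: it is driven by location significance, QoI, and the MPOA contract outcome, so $\mathcal{N}(t)$ depends on the system state. I would handle this by treating the significance-based pre-selection as a deterministic conditioning step, and modeling only the residual randomness (ties in $\zeta^m(t)$ and stochastic mini-batching inside each SV) as the noise source with probability $N/I$; alternatively I would bound the bias between the significance-weighted aggregate and its uniform counterpart separately. A secondary complication is Adam: its adaptive scaling can be absorbed into the constants of the recursion by uniform bounds on $p_n^{(\tau)}/(\sqrt{q_n^{(\tau)}}+\varepsilon)$ coming from the boundedness of $\nabla\mathbf{W}_n(t)$. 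With these two issues handled, both the inequality of Theorem~\ref{theorem1} and the convergence claim attached to it follow.
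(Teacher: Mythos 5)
Your proposal follows essentially the same route as the paper's proof: the stated inequality is obtained from the $\delta$-smoothness (descent-lemma) bound on $\Psi$ evaluated at the minimizer, where $\nabla\Psi(\mathbf{W}^*)=\mathbf{0}$ kills the inner-product term, and the right-hand side is then driven to zero by a per-round contraction recursion on $\mathbb{E}\big[\|\mathbf{W}(t)-\mathbf{W}^*\|_2^2\big]$ with a sampling-variance term proportional to $\tfrac{I-N}{N(I-1)}$ and a decreasing step size, which the paper imports in unrolled form from Amiri et al. The only ingredient you leave implicit that the paper states explicitly is the $\delta_2$-strong convexity of the local losses (smoothness alone does not yield a contraction factor strictly below one), while your concern that the significance-based selection is not uniform sampling with probability $N/I$ is legitimate and is not actually resolved in the paper either, which simply posits that probability.
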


\begin{proof}
	Given that the global loss function after $t^\diamond$ rounds, i.e., $\Psi(\mathbf{W}(t^\diamond))$, is $\delta$-smooth, we can show that $\mathbb{E}[\Psi(\mathbf{W}(t^\diamond))] - \Psi^*(\mathbf{W}^*) \leq \frac{\delta}{2}\mathbb{E}[\|\mathbf{W}(t^\diamond) - \mathbf{W}^*\|_2^2]$. More details are provided in Appendix~\ref{appx:theorem4}.
\end{proof}

To this end, it is intractable to provide the exact convergence rate theoretically due to the dynamic set of learning SVs with various QoI at each learning round. Nonetheless, in the simulation results of Section~\ref{sec:PE}, i.e., Fig.~\ref{fig:accuracy_iid} and Fig.~\ref{fig:accuracy_niid}, we can show that the proposed FL algorithm with SV selection can speed up the convergence to the $\Psi^*(\mathbf{W}^*)$, where $\big[\mathbb{E}[\Psi(\mathbf{W}(t^\diamond))] - \Psi^*(\mathbf{W}^*)\big] = 0$, at a certain learning round $t^\diamond$.

\section{VSP's Profit Analysis Based on the Global Model Accuracy and Freshness}
\label{sec:VSP_A}

When more active SVs with high information significance are recruited for the FL execution, the converged global on-road model with high accuracy can be achieved faster. In this way, the VSP can obtain more up-to-date global on-road model, and thus achieve a higher freshness value when selling this model to the potential customers, such as SVs, public transportations, and mobile users. The adoption of information freshness in FL has been recently investigated in the literature. For example, the authors in~\cite{Yang:2020} propose an FL scheduling strategy considering the freshness of FL learners' local trained models, aiming at computing the staleness level of the local models for the global model update. In~\cite{Lim3:2020}, the authors present an incentive mechanism for the FL process accounting for the freshness of collected data at each FL learner towards the completion of FL process. Different from the existing works, we consider the freshness of produced global model at each learning round and analyze its impact on the VSP's profit. 

In practice, the profit of the VSP is influenced by not only the decision of contract optimization, but also the obtained global model accuracy and freshness. To the best of our knowledge, this is the first analysis in the literature accounting for the influence of global model accuracy and freshness towards the economic aspect of the typical FL process. 
Specifically, we introduce a dynamic global model economic value considering the product of global model accuracy and freshness at each learning round $t$, which is expressed by
\begin{equation}
\label{eqn10}
\begin{aligned}
\omega(t) = a\chi(t) e^{-bt},
\end{aligned}
\end{equation}
where $a,b$ are control parameters which can be defined in advance based on the accuracy and freshness of global model, $\chi(t)$ is the global prediction model accuracy at learning round $t$, and $e^{-bt}$ is the global model freshness function with respect to the learning round $t$~\cite{Chen10:2020,Abdellatif:2020}. The relationship between the global model accuracy and freshness in (\ref{eqn10}) implies that there is a trade-off between them in the FL process. Specifically, to obtain a global model with high accuracy, the FL algorithm usually requires to be executed for multiple learning rounds, that leads to a lower freshness value, and vice versa. Furthermore, the proposed dynamic global model economic value can be used to control the VSP's net profit effectively. Specifically, if the $\omega(t)$ is higher, the satisfaction in (\ref{eqn:for30a}) will increase and vice versa. Thus, the net profit of the VSP and social welfare for type $\theta_j$ after the current learning round is completed can be respectively defined as follows:
\begin{equation}
\label{eqn11}
\begin{aligned}
{\hat \mu}^j_{VSP}(t) = \theta_j\omega(t)\lambda\sqrt{\sum_{n \in \mathcal{N}(t)}{\hat \varrho}_j^n(t)\zeta_j^n(t)} - \sum_{n \in \mathcal{N}(t)}{\hat \varrho}_j^n(t)\varphi_j^n(t),
\end{aligned}
\end{equation}
and
\begin{equation}
\label{eqn11b}
\begin{aligned}
{\hat \mu}^j_{SW}(t) &= \theta_j\omega(t)\lambda\sqrt{\sum_{n \in \mathcal{N}(t)}{\hat \varrho}_j^n(t)\zeta_j^n(t)} \\&- \sum_{n \in \mathcal{N}(t)}{\hat \varrho}_j^n(t)\zeta_j^n(t)\xi_n(t), \forall j \in \mathcal{J}.
\end{aligned}
\end{equation}

From (\ref{eqn11}), we can observe the relationship between the net profit of the VSP and the global model economic value (through the global model accuracy and freshness values). In particular, the earlier the VSP can obtain the high-accurate global prediction model, the more freshness and thus the higher economic value the global model can achieve (as illustrated in Fig.~\ref{fig:acc_freshness}). However, in practice, at the early learning rounds, the accuracy of the global prediction model is not really high due to insufficient number of trained samples in the FL process (as illustrated in Fig.~\ref{fig:acc_freshness}(c))~\cite{Lim:2020}. Moreover, as shown in Fig.~\ref{fig:acc_freshness}(d), the net profit of the VSP will first increase up to the first 15 learning rounds due to high freshness value and increasing accuracy of the global model. Although the global model accuracy gets improved for the rest of learning rounds, the existence of lower freshness value will trigger lower global model economic value, and thus the net profit of the VSP also decreases gradually. Therefore, this observation provides an insightful information for the VSP to determine the best global model economic value which improves its profit in the IoV accounting for the trade-off between the global model accuracy and freshness for each learning round.

\begin{figure}[h]
	\begin{center}
		$\begin{array}{cc} 
		\epsfxsize=1.61 in \epsffile{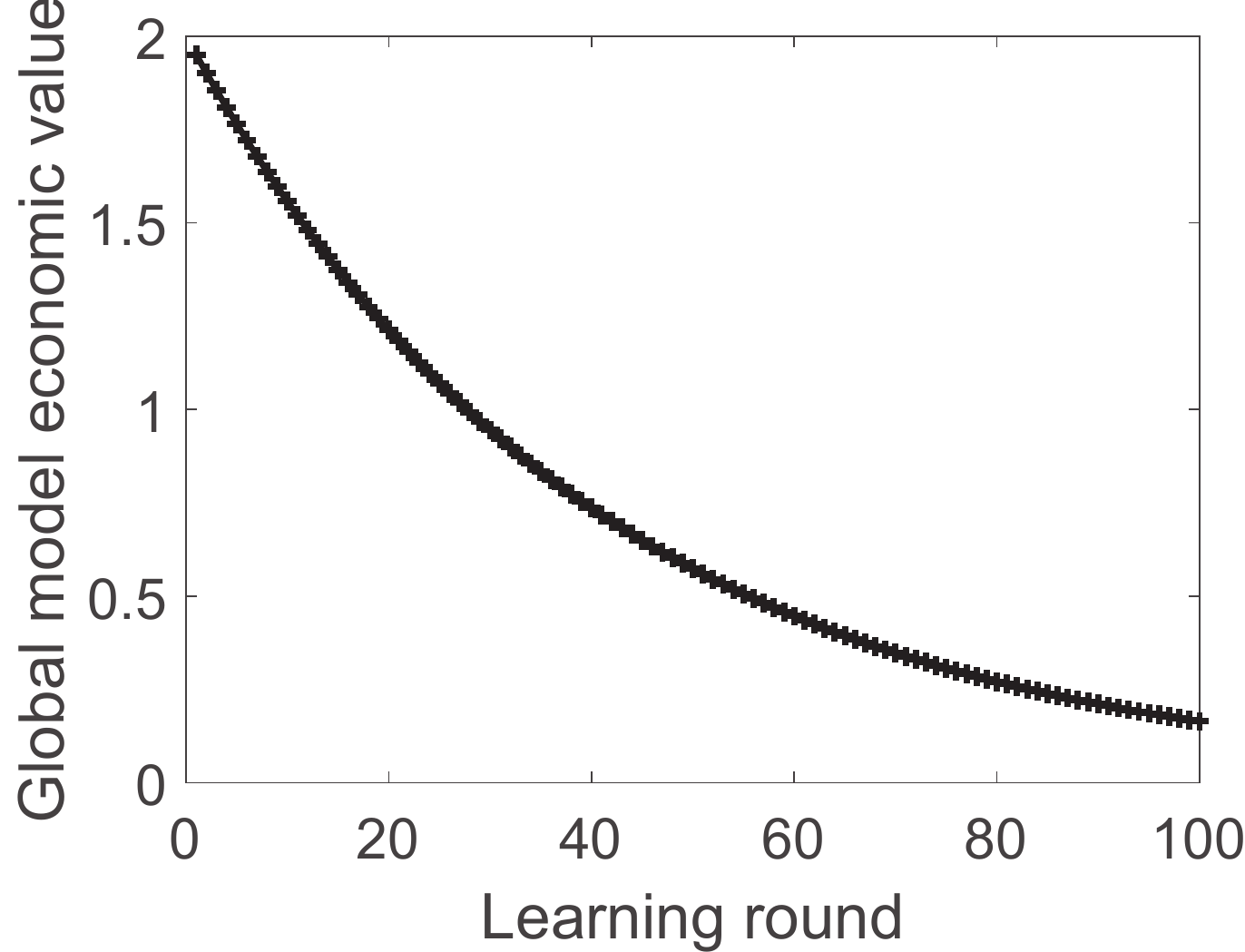} & 
		\hspace*{-.1cm}
		\epsfxsize=1.6 in \epsffile{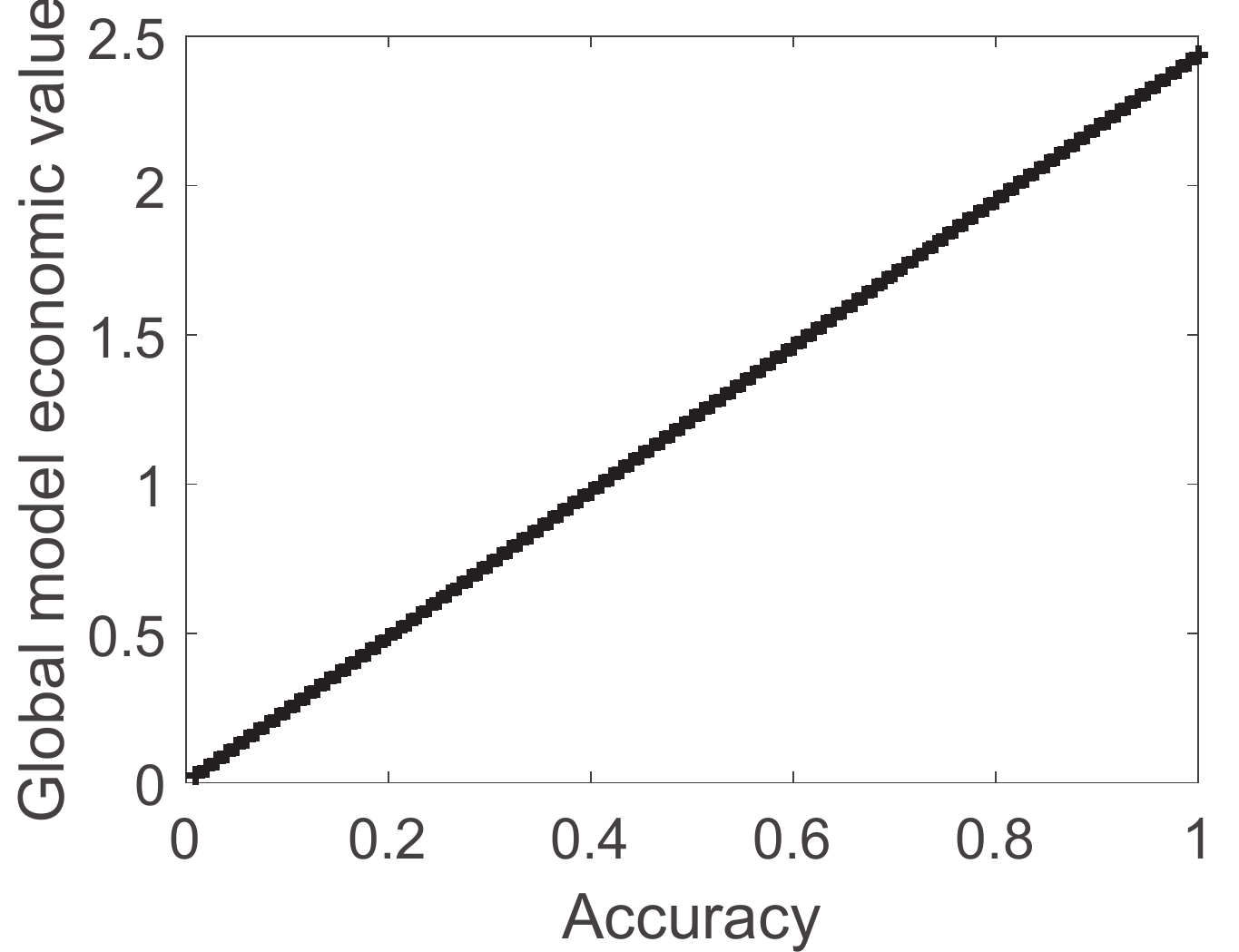} \\ [0.1cm]
		\text{\footnotesize (a) Different learning rounds} & \text{\footnotesize (b) Different accuracy values} \\ [0.2cm]
		\vspace*{-0cm}
		\epsfxsize=1.6 in \epsffile{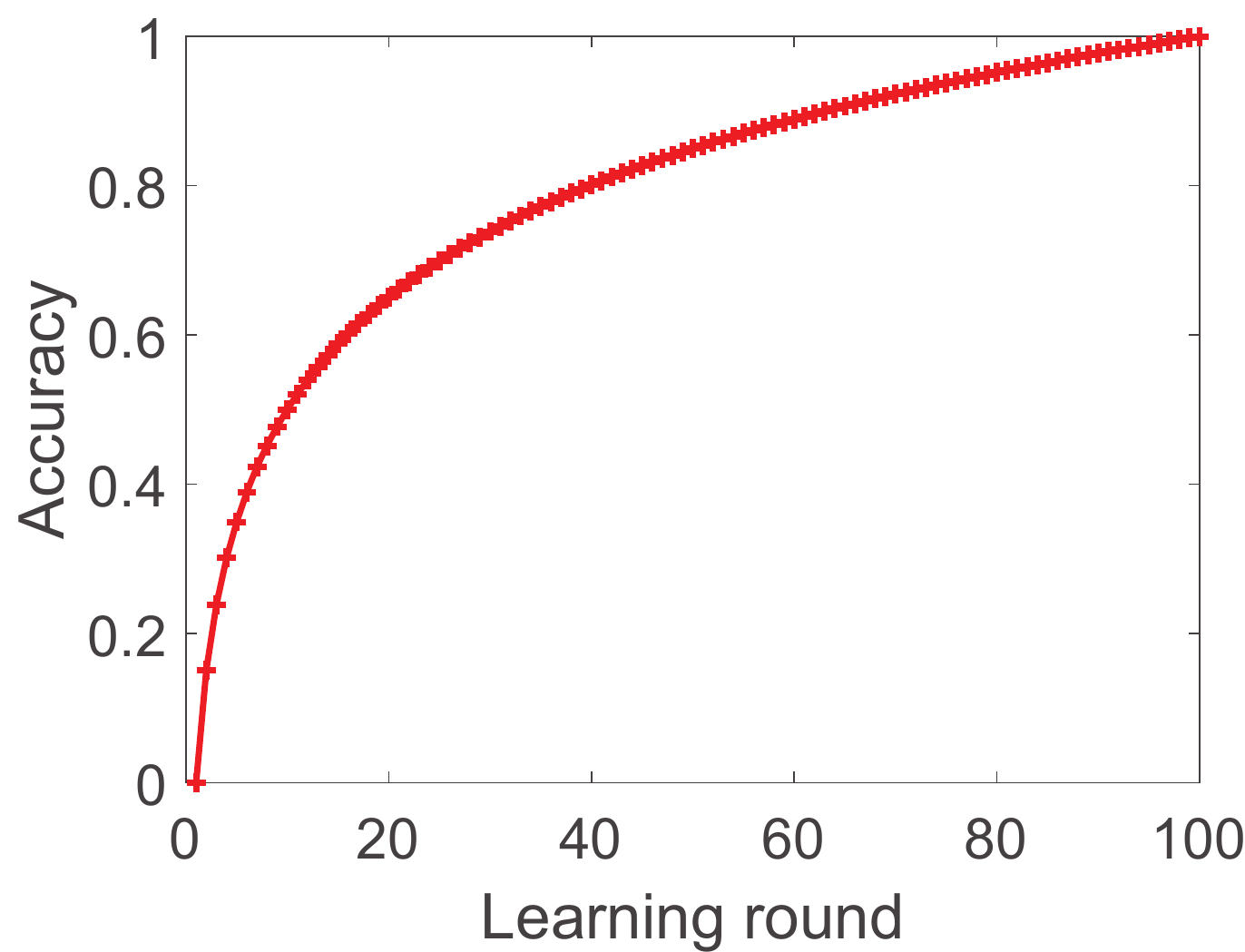} & 
		\hspace*{-.1cm}
		\epsfxsize=1.67 in \epsffile{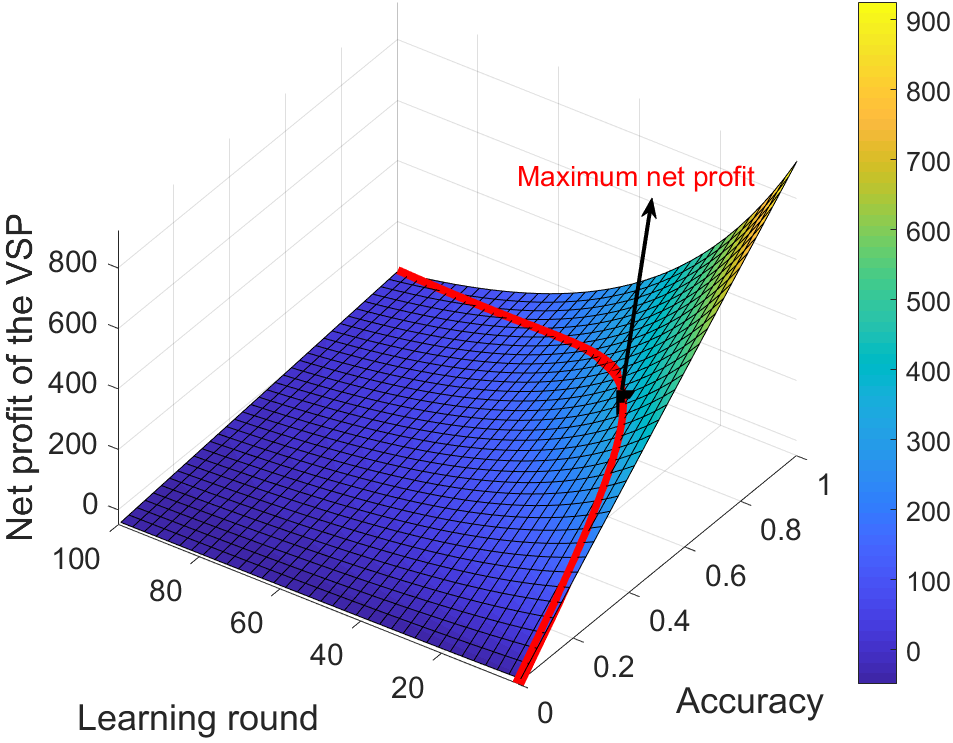} \\ [0.1cm]
		\text{\footnotesize (c) Accuracy vs learning rounds} & \text{\footnotesize (d) Net profit of the VSP} \\ [0.2cm]
		\end{array}$
		\vspace{-0.5\baselineskip}
		\caption{The impact of global model accuracy and freshness (represented by learning rounds) on the global model economic value and the VSP's net profit.}
		\label{fig:acc_freshness}
	\end{center}
\end{figure} 

\section{Performance Evaluation} 
\label{sec:PE}

\subsection{Dataset Preparation} 
\label{subsec:DP}

We carry our experiments using the actual two road traffic datasets in the UK between 2000 and 2016~\cite{UK_Accident:2017} to evaluate the performance of the proposed economic framework. For the first dataset, there are 275K traffic AADFs from all major roads of 190 areas/local districts in the UK that will be applied to obtain the set of significant areas. From this dataset, we divide AADFs from all major roads based on the area IDs such that each area is represented by the total AADF $V_d$ as defined in (\ref{eqn:for15a}). For the second dataset, it includes  1.5M traffic accidents of all major roads in the UK along with 34 features. Based on these features, we extract 7 useful features including accident location with 416 location IDs, accident date, accident time, accident additional conditions, i.e., light, weather, and road surface conditions (as \emph{training features}) and accident severity (as \emph{training label}). We also convert the accident date and time into categorical features, i.e., 7-day (i.e., $1,2,\ldots,7$) and 24-hour (i.e., $0,2,\ldots,23$) categories, respectively. 

\subsection{Experiment Setup} 
\label{subsec:ES}

We use \emph{TensorFlow CPU 2.2} containing \emph{TensorFlow Federated} in a shared cluster. 
We also consider 100 active SVs. i.e., $I=100$, to compare the performance of our proposed method, i.e., proposed-LIS, with other baseline FL methods, i.e., random scheduling and round-robin scheduling~\cite{Yang:2019}. For the random scheduling, the VSP selects $N$ SVs from $I$ active SVs randomly. For the round-robin scheduling, the VSP chooses $N$ SVs from $I$ active SVs sequentially in a round-robin manner. Both methods do not consider the SVs' location and information significance at each round. We also present a location-significance scheduling (referred to as proposed-LS), i.e., the VSP selects $N$ SVs from $M$ active SVs randomly without considering the SVs' information significance at each round, to show the impact of selected SVs' current location areas on the performance results. To this end, we do not consider an information-significance scheduling because this method may provide inaccurate and useless on-road information regardless its QoI. 
We then divide the active SVs into 3 categories, i.e., high-QoI SVs, i.e., \emph{h-SVs}, medium-QoI SVs, i.e., \emph{m-SVs}, and low-QoI SVs, i.e., \emph{l-SVs}. In particular, we generate their data sizes and spatio-temporal variabilities in the following order: \emph{h-SVs} $>$ \emph{m-SVs} $>$ \emph{l-SVs}. At each learning round, all SVs are randomly scattered in both significant and insignificant areas (which are determined by the K-means algorithm in Section.~\ref{subsec:LASV}) as illustrated in Fig.~\ref{k_means_result}.
We also account for various values of $N$, i.e., $5 \text{ and } 10$ SVs, for each round to evaluate the proposed framework efficiency with respect to various payment budgets.

In the contract optimization, we consider one agent, i.e., the VSP, and $N$ principals corresponding to $N$ learning SVs. We set $\lambda$ at 1.2 monetary unit (MU) per $\zeta^n(t)=0.1, \forall n \in \mathcal{N}(t), \forall t \in \mathcal{T}$~\cite{Xu:2015}. We also set $\upsilon$ and $\xi_n(t)$ at 2.1MU and 0.5MU per $\zeta^n(t)=0.1, \forall n \in \mathcal{N}(t), \forall t \in \mathcal{T}$, respectively. To demonstrate various performances of the VSP, we consider 10 types, i.e., $J=10$, with the same distribution of the types, i.e., $\rho_j(t)=0.1$ at each round. Then, we set $B_{max}(t)$ at 125MU and 250MU~\cite{Zhao:2016} at each round for the cases of 5 and 10 learning SVs, respectively.

Next, we split the accident dataset into 0.8 training set and 0.2 testing set. Additionally, we consider two data distribution scenarios, i.e., i.i.d and non-i.i.d scenarios. For i.i.d scenario, it occurs when most of the learning SVs in the IoV network have visited many locations at different times frequently, e.g., taxi or long-trip public transport, and thus they can collect on-road data with various accident severity levels. In this case, we distribute the training set randomly to the \emph{h-SVs}, \emph{m-SVs}, and \emph{l-SVs} based on their information significance. Meanwhile, for non-i.i.d scenario, it takes place when most of the learning SVs have only passed through specific limited locations most of the times, e.g., the school bus or city public transport, such that each learning SV has very-limited and fixed accident severity information. Specifically, we first divide the training set randomly into 3 training subsets for \emph{h-SVs}, \emph{m-SVs}, and \emph{l-SVs}. Then, we sort each training subset according to the training label. Each sorted training subset is distributed to the corresponding number of learning SVs based on their information significance. To implement the DNN, we use two hidden layers with 128 and 64 neurons per layer and one final layer with 3 neurons (which corresponding to 3 labels). We also use the Adam optimizer with initial step size 0.01.

\begin{figure}[!t]
	\centering
	\includegraphics[scale=0.55]{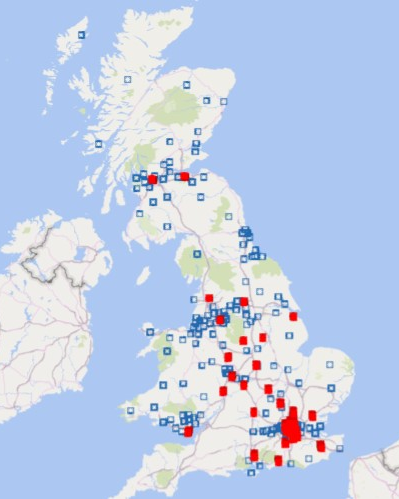}
	\caption{An illustration of significant (red) and insignificant (blue) areas in the UK from the dataset in~\cite{UK_Accident:2017}.}
	\vspace{-0em}
	\label{k_means_result}
\end{figure}
\subsection{Learning Contract Performance}

\subsubsection{Common constraint validity and profit of the VSP} 

\begin{figure}[!]
	\begin{center}
		$\begin{array}{cc} 
		\epsfxsize=1.65 in \epsffile{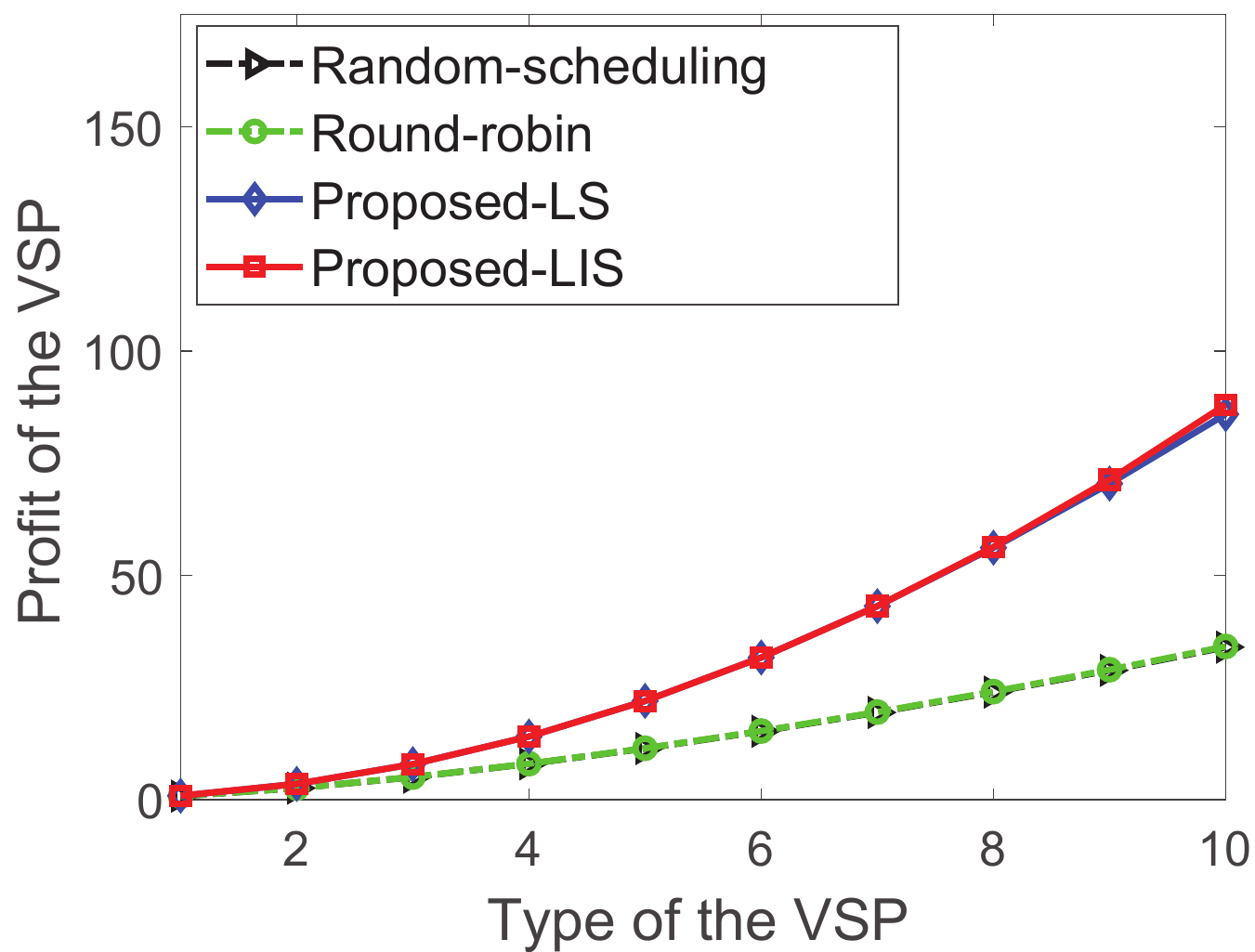} &
		\hspace*{-.3cm}
		\epsfxsize=1.65 in \epsffile{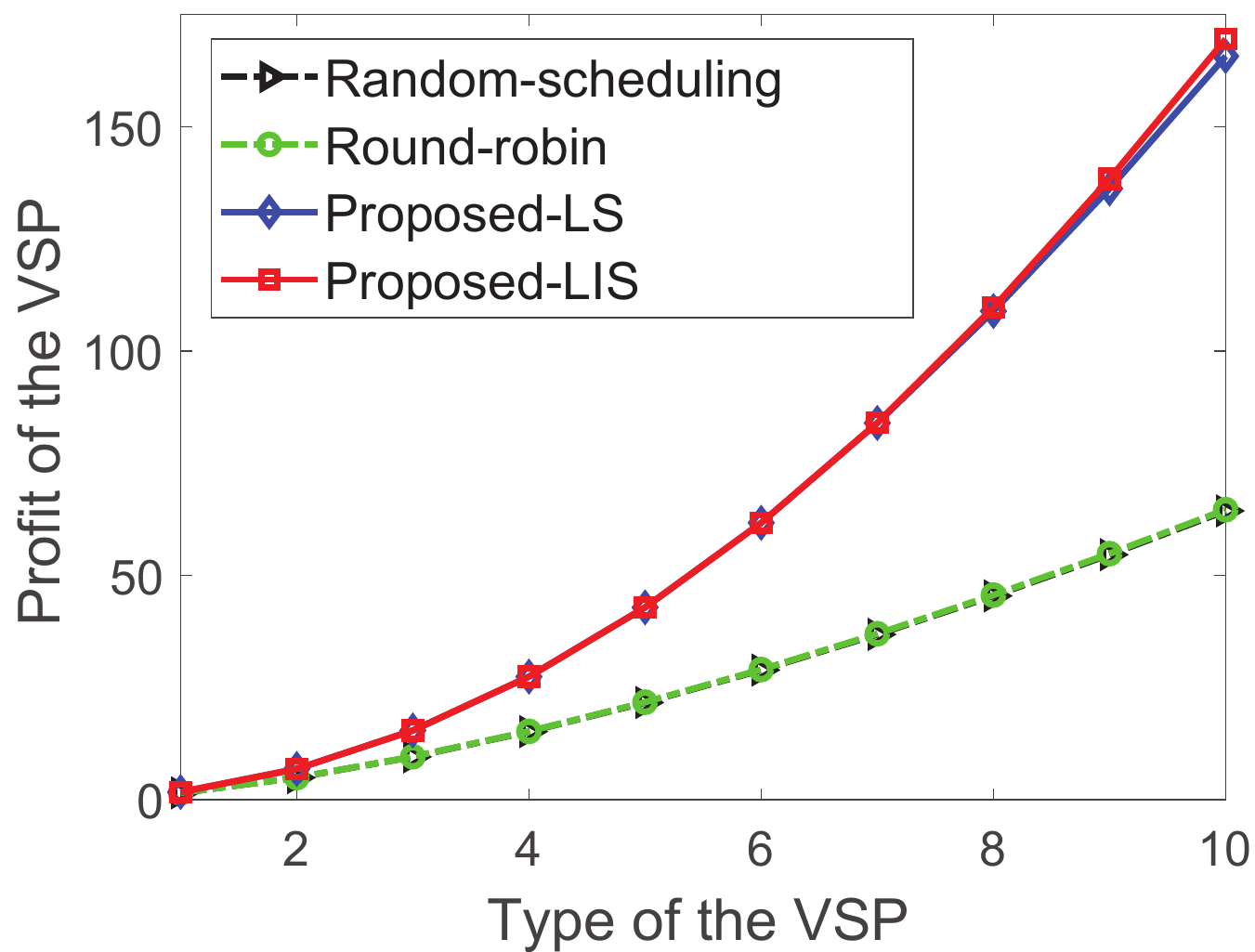} \\ [-0.1cm]
		\text{\footnotesize (a) IR validity of 5 SVs} & \text{\footnotesize (b) IR validity of 10 SVs} \\ [0.2cm]
		\vspace*{-0cm}
		\epsfxsize=1.65 in \epsffile{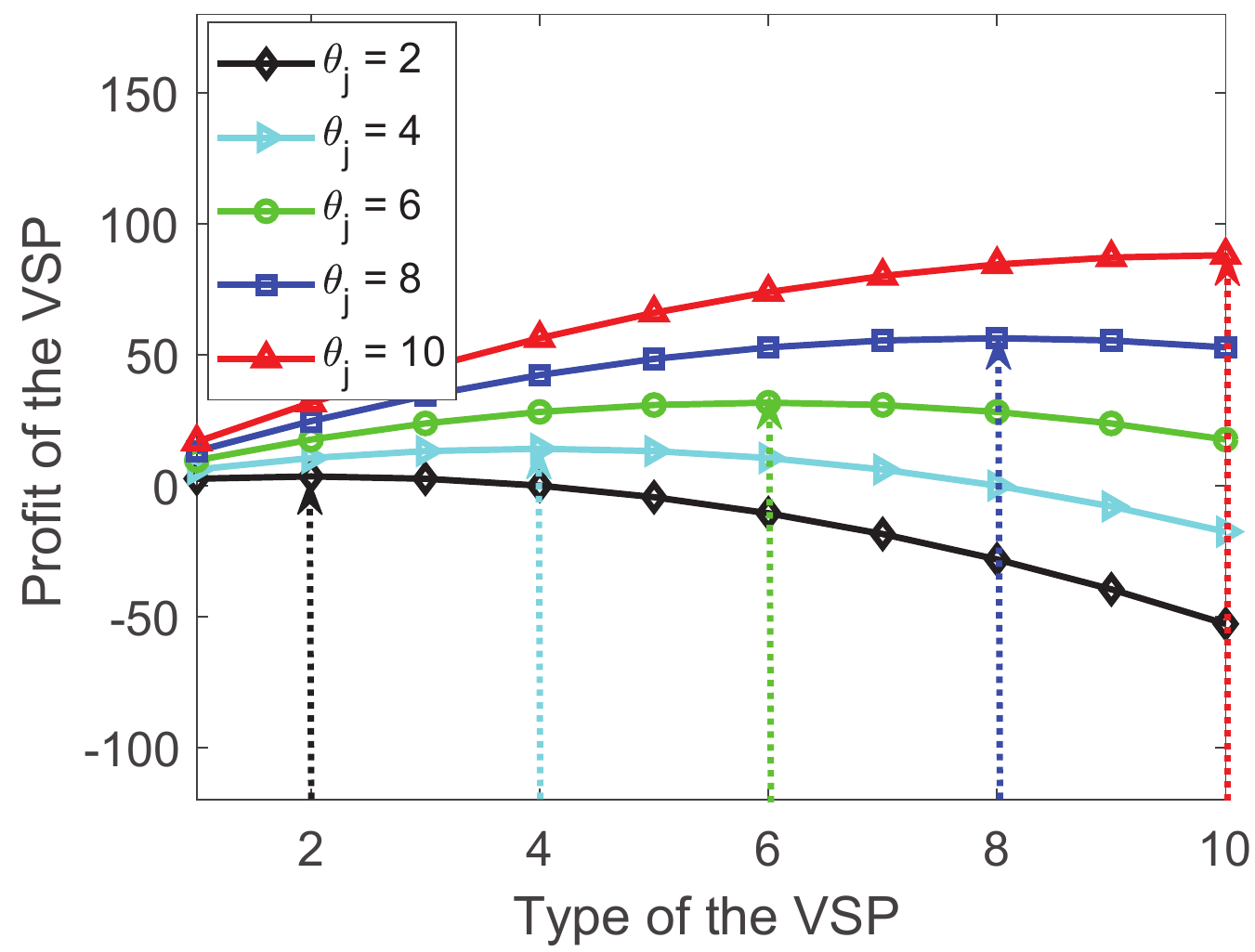} &
		\hspace*{-.3cm}
		\epsfxsize=1.65 in \epsffile{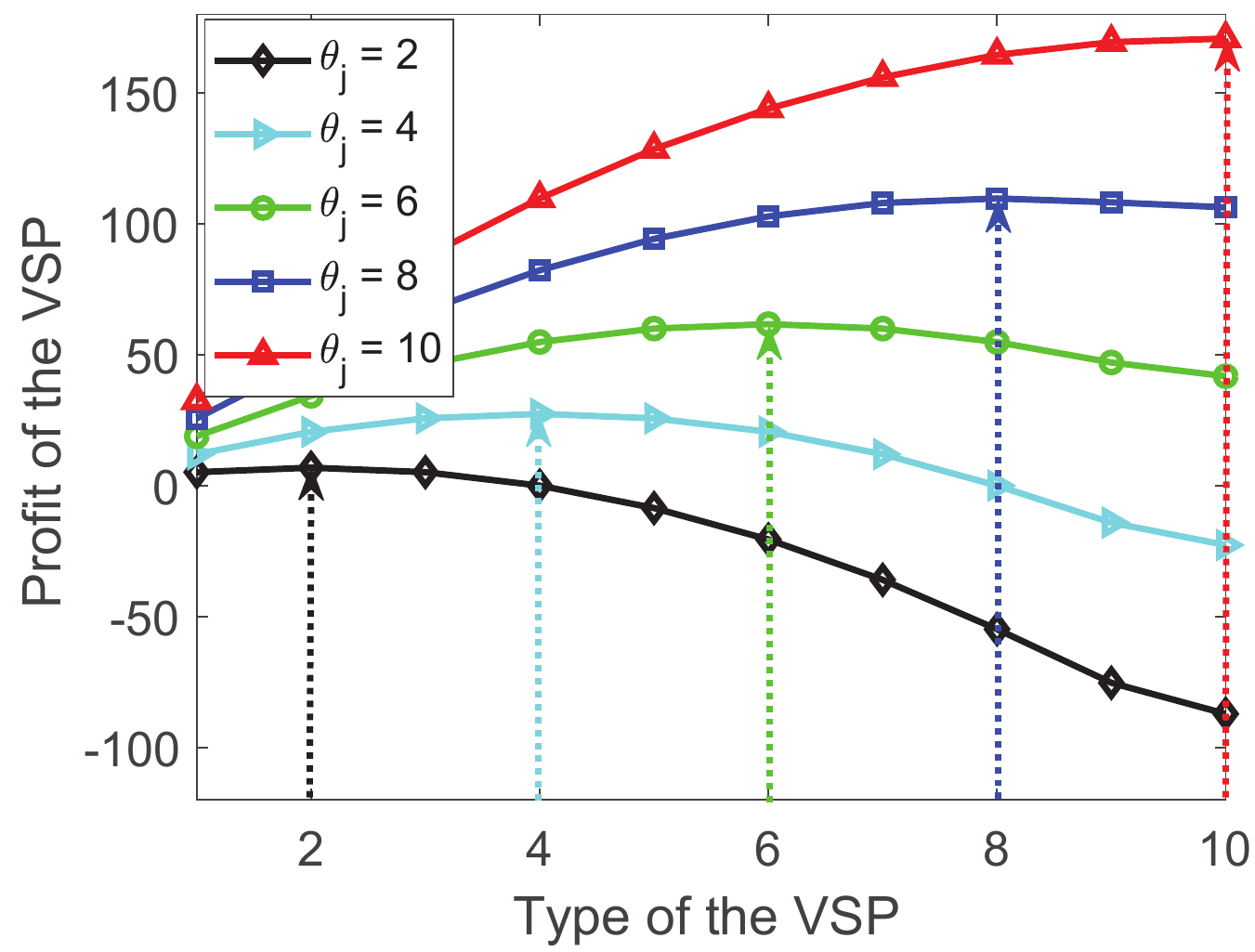} \\ [-0.1cm]
		\text{\footnotesize (c) IC validity of 5 SVs} & \text{\footnotesize (d) IC validity of 10 SVs} \\ [0.2cm]
		\end{array}$
		\vspace*{-0.3cm}
		\caption{The validity of IR and IC constraints for i.i.d scenario.}
		\vspace{-0.5em}
		\label{fig:IR_IC_iid}
	\end{center}
\end{figure}

\begin{figure}[!]
	\begin{center}
		$\begin{array}{cc} 
		\epsfxsize=1.65 in \epsffile{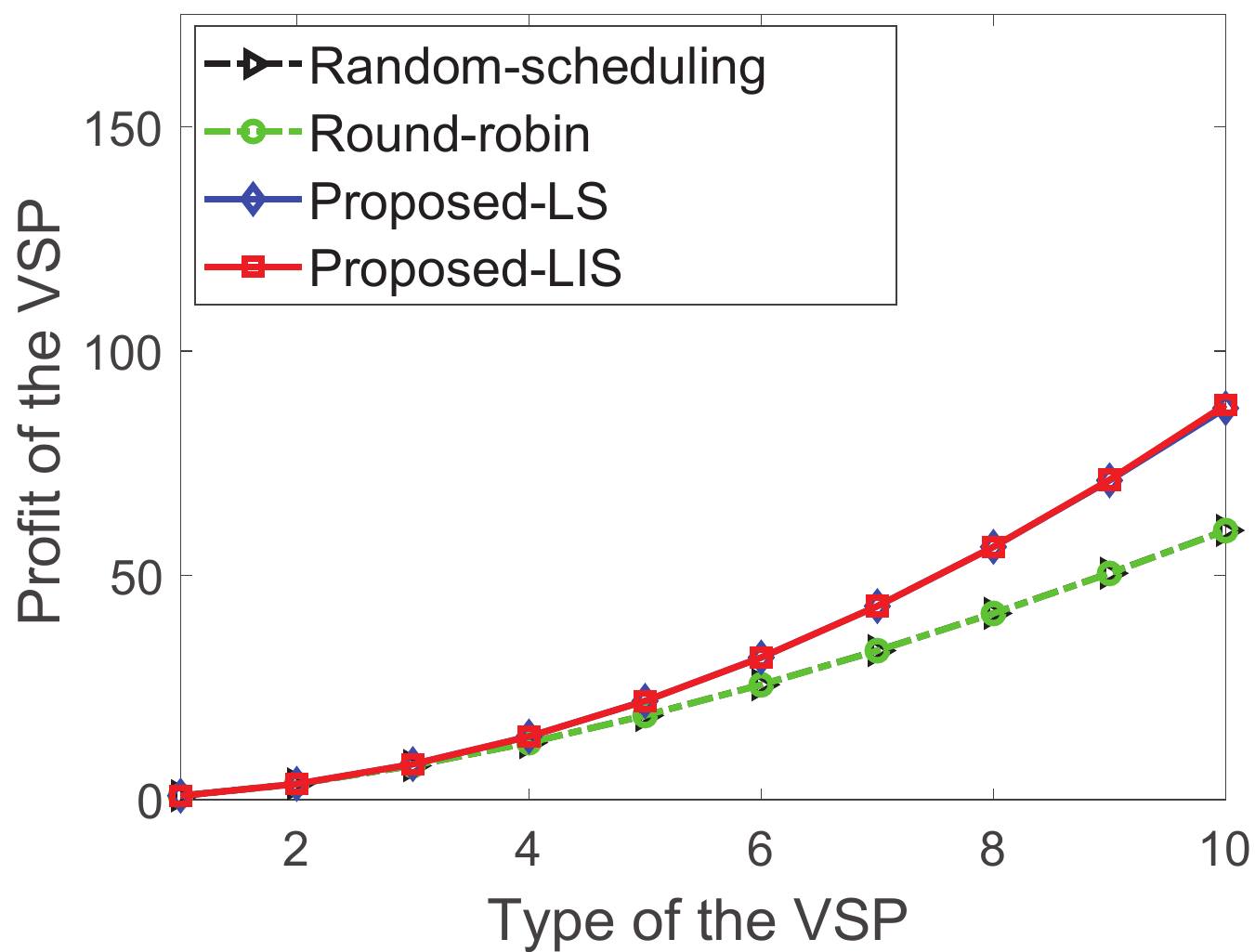} &
		\hspace*{-.3cm}
		\epsfxsize=1.65 in \epsffile{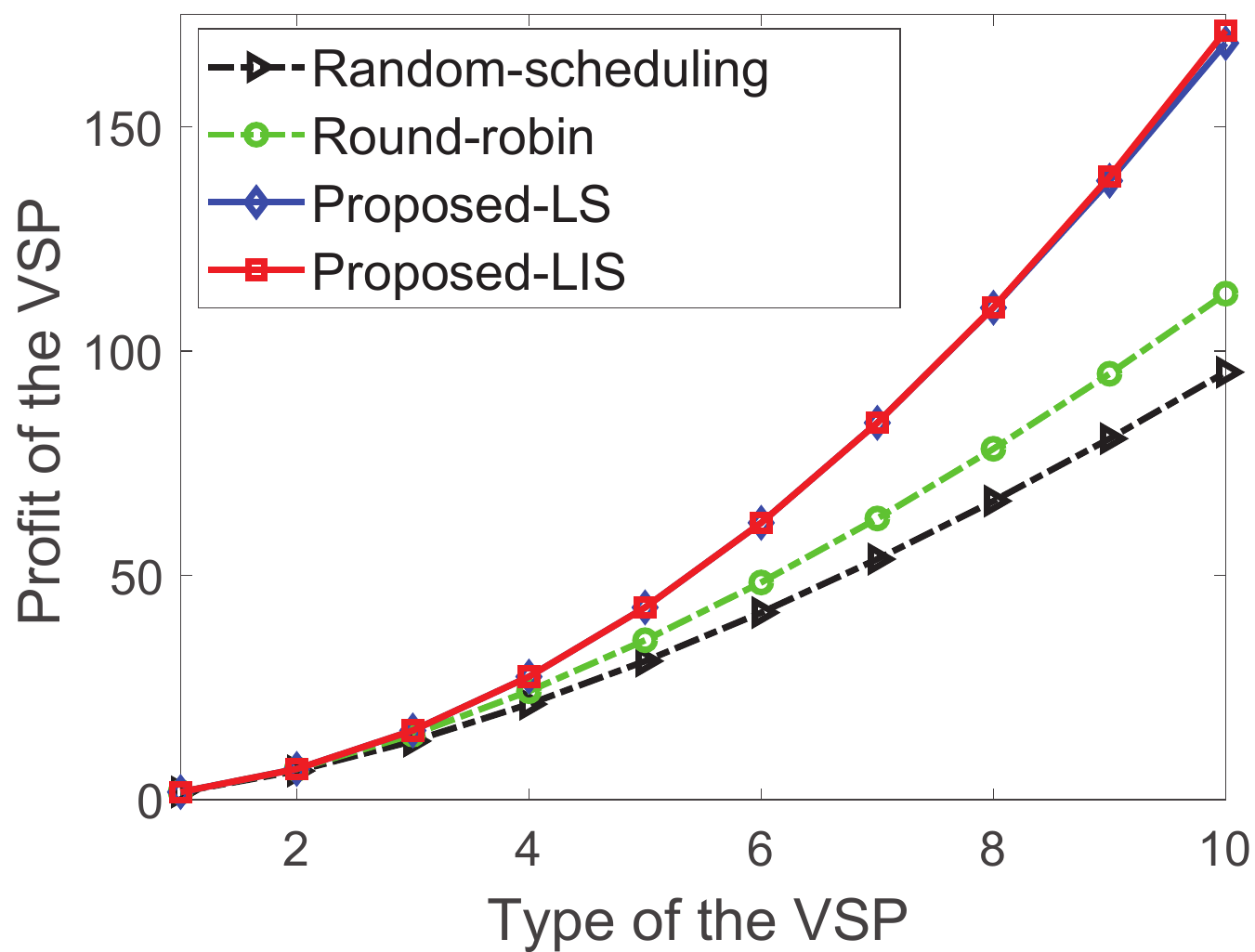} \\ [-0.1cm]
		\text{\footnotesize (a) IR validity of 5 SVs} & \text{\footnotesize (b) IR validity of 10 SVs} \\ [0.2cm]
		\vspace*{-0cm}
		\epsfxsize=1.65 in \epsffile{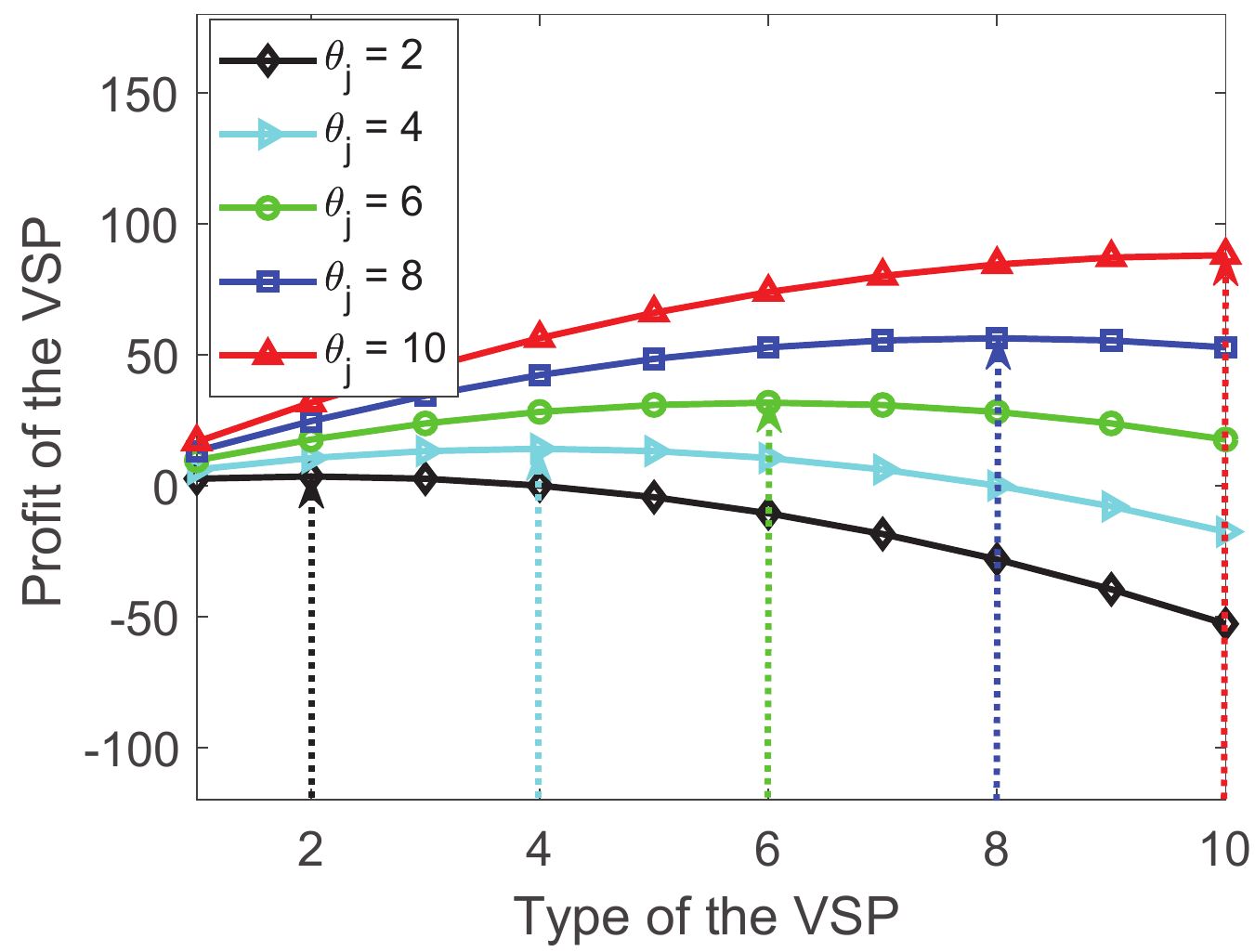} &
		\hspace*{-.3cm}
		\epsfxsize=1.65 in \epsffile{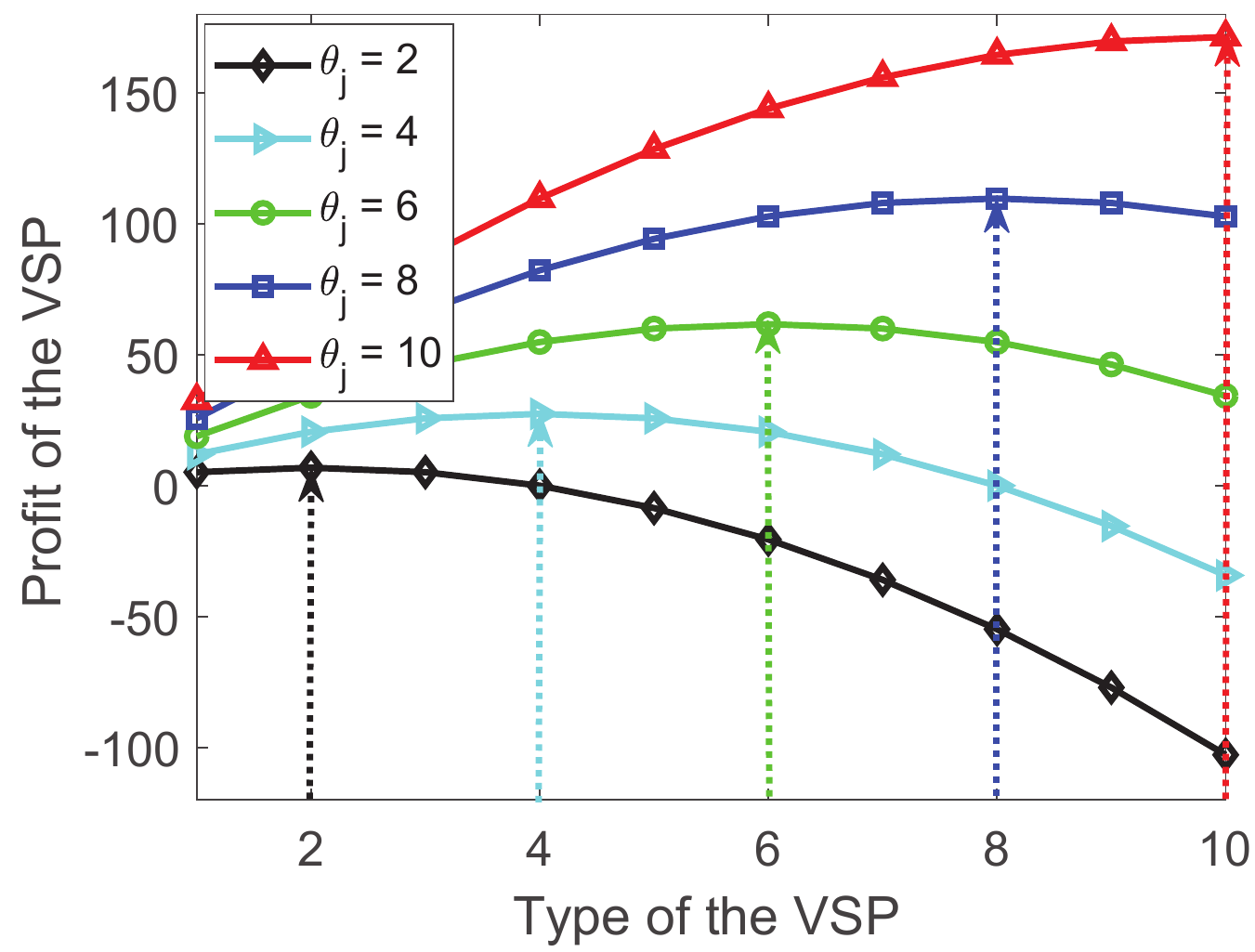} \\ [-0.1cm]
		\text{\footnotesize (c) IC validity of 5 SVs} & \text{\footnotesize (d) IC validity of 10 SVs} \\ [0.2cm]
		\end{array}$
		\vspace*{-0.3cm}
		\caption{The validity of IR and IC constraints for non-i.i.d scenario.}
		\vspace{-0.5em}
		\label{fig:IR_IC_niid}
	\end{center}
\end{figure}

We first evaluate the validity of the VSP's common constraints, i.e., IR and IC constraints, considered in the contract optimization at a specific learning round, e.g., the first round in this case. Specifically, we can observe the IR constraints of the VSP for all methods in Fig.~\ref{fig:IR_IC_iid}(a)-(b) and Fig.~\ref{fig:IR_IC_niid}(a)-(b) for i.i.d and non-i.i.d scenarios, respectively, with various numbers of learning SVs. In this case, the profits of proposed methods coincide with each other due to the VSP's limited budget constraint for the VSP's each type. From these figures, the VSP can always obtain positive profits for all types of the VSP, thereby ensuring the feasibility of the IR constraints as shown in~(\ref{eqn:for7a}). These profits monotonically increase with respect to the VSP's type such that the VSP with a higher type will obtain a higher profit. The reason is that the VSP is willing to pay more money to the learning SVs due to higher payment budget. Additionally, we can observe that more learning SVs, i.e., 10 learning SVs, will produce higher profits for the VSP up to 1.95 times, respectively, compared with that of 5 learning SVs. This is because more learning SVs can provide larger training data sizes and better information significance values for the FL process. Meanwhile, compared with other methods, the proposed-LIS can achieve higher VSP's profit up to 2.63 and 1.8 times for i.i.d and non-i.i.d scenarios, respectively.
For Fig.~\ref{fig:IR_IC_iid}(c)-(d) and Fig.~\ref{fig:IR_IC_niid}(c)-(d), we observe that the VSP can always obtain the highest profit when it utilizes the right contract for its true type. In this way, the IC constraints for all types of the VSP are also satisfied. Particularly, the VSP with types 2, 4, 6, 8, and 10 will produce the highest profit when the suitable contracts for those types are used. As both IR and IC constraints hold, we can find the feasible contracts for all learning SVs to maximize their profits at each round.

\subsubsection{The profits of SVs and social welfare of the IoV}

\begin{figure}[!]
	\begin{center}
		$\begin{array}{cc} 
		\epsfxsize=1.65 in \epsffile{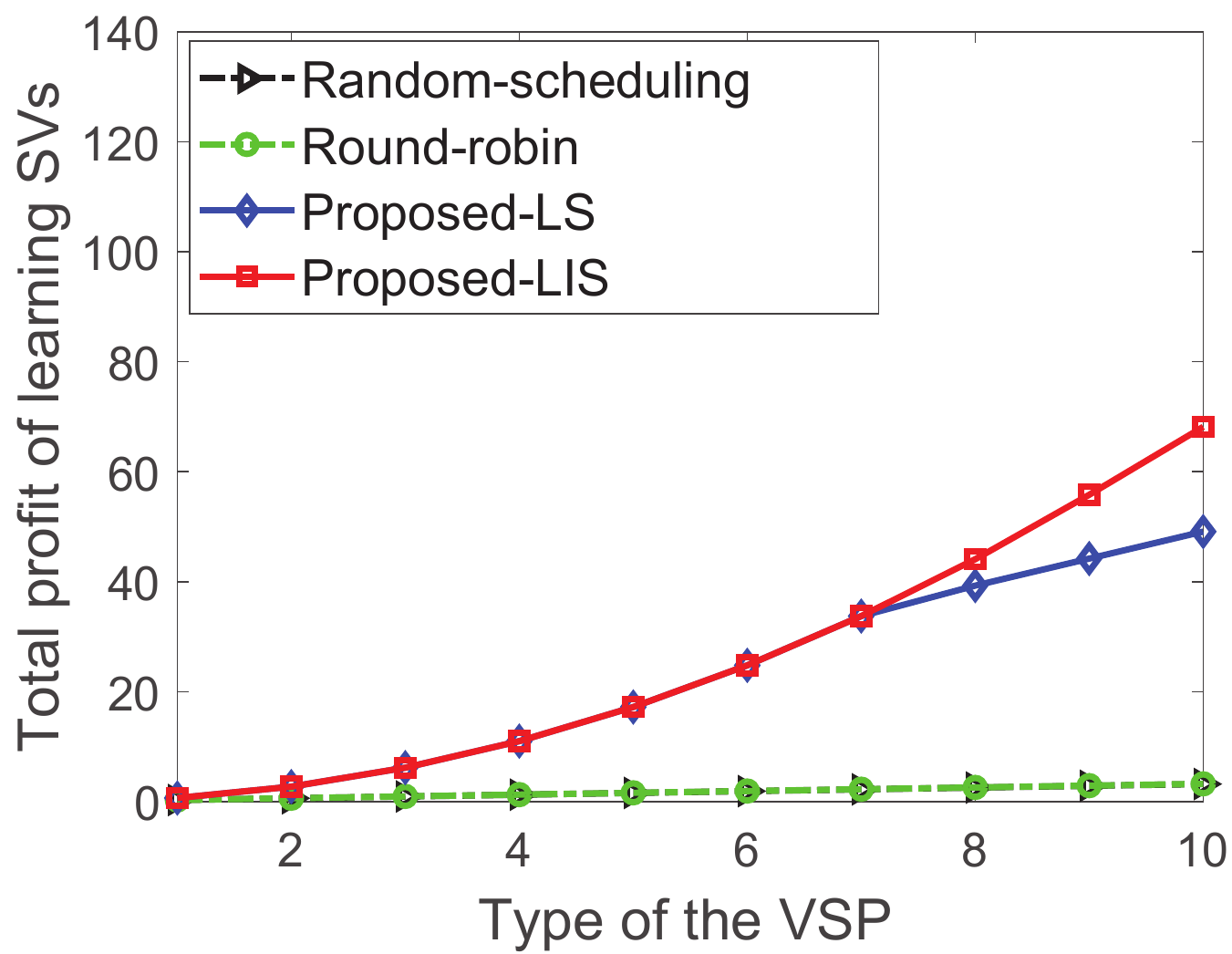} &
		\hspace*{-.3cm}
		\epsfxsize=1.65 in \epsffile{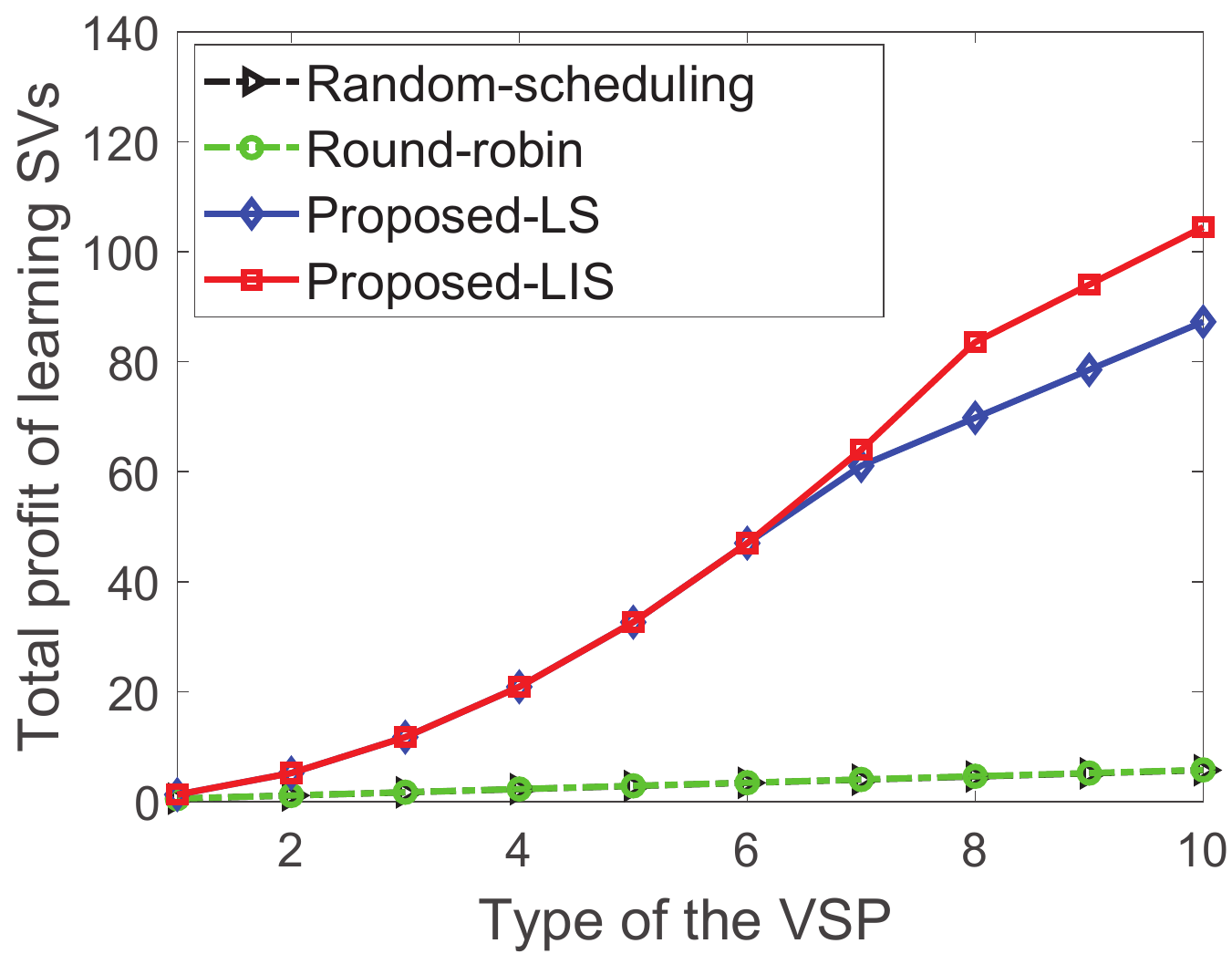} \\ [-0.1cm]
		\text{\footnotesize (a) Total profit of 5 SVs} & \text{\footnotesize (b) Total profit of 10 SVs} \\ [0.2cm]
		\vspace*{-0cm}
		\epsfxsize=1.65 in \epsffile{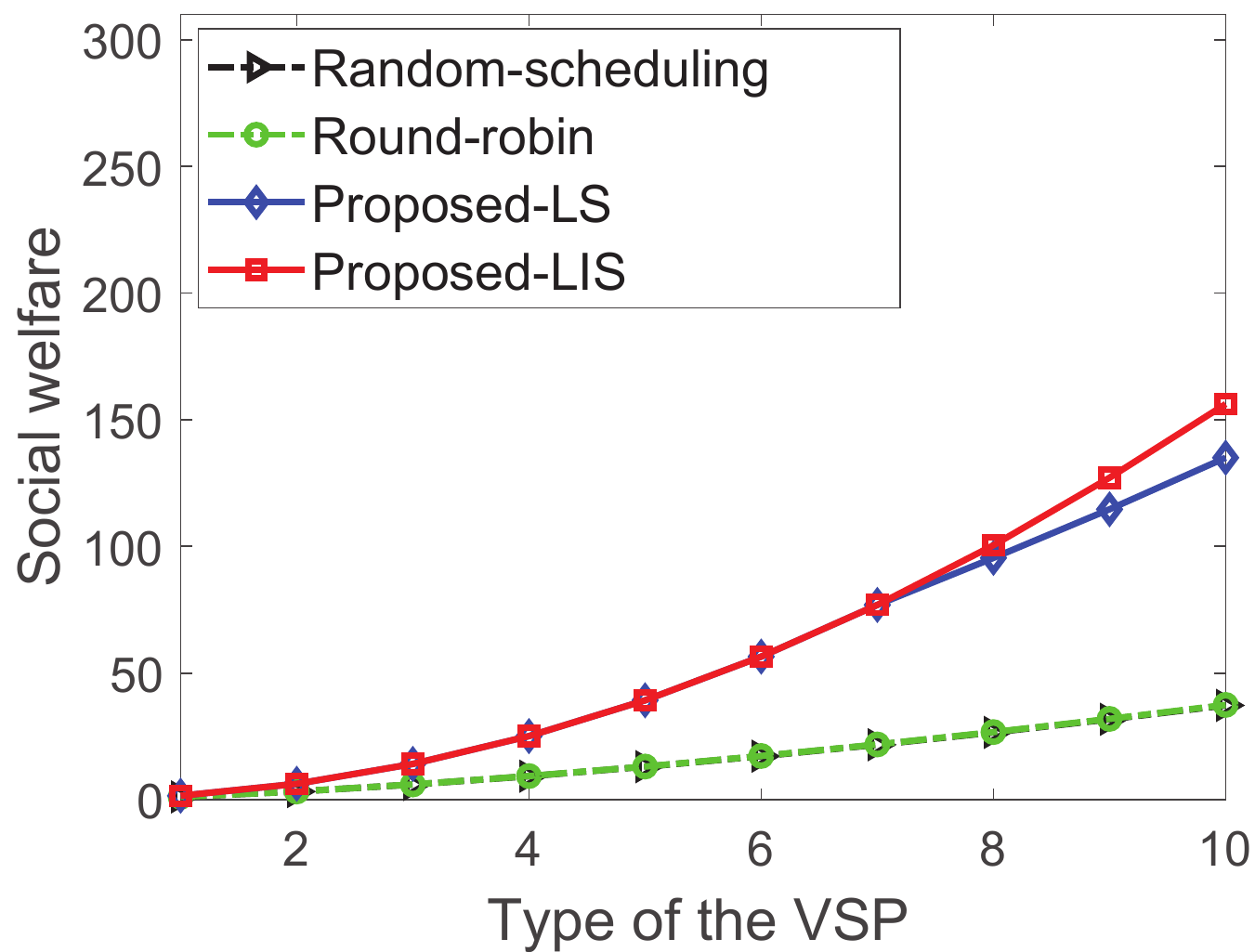} &
		\hspace*{-.3cm}
		\epsfxsize=1.65 in \epsffile{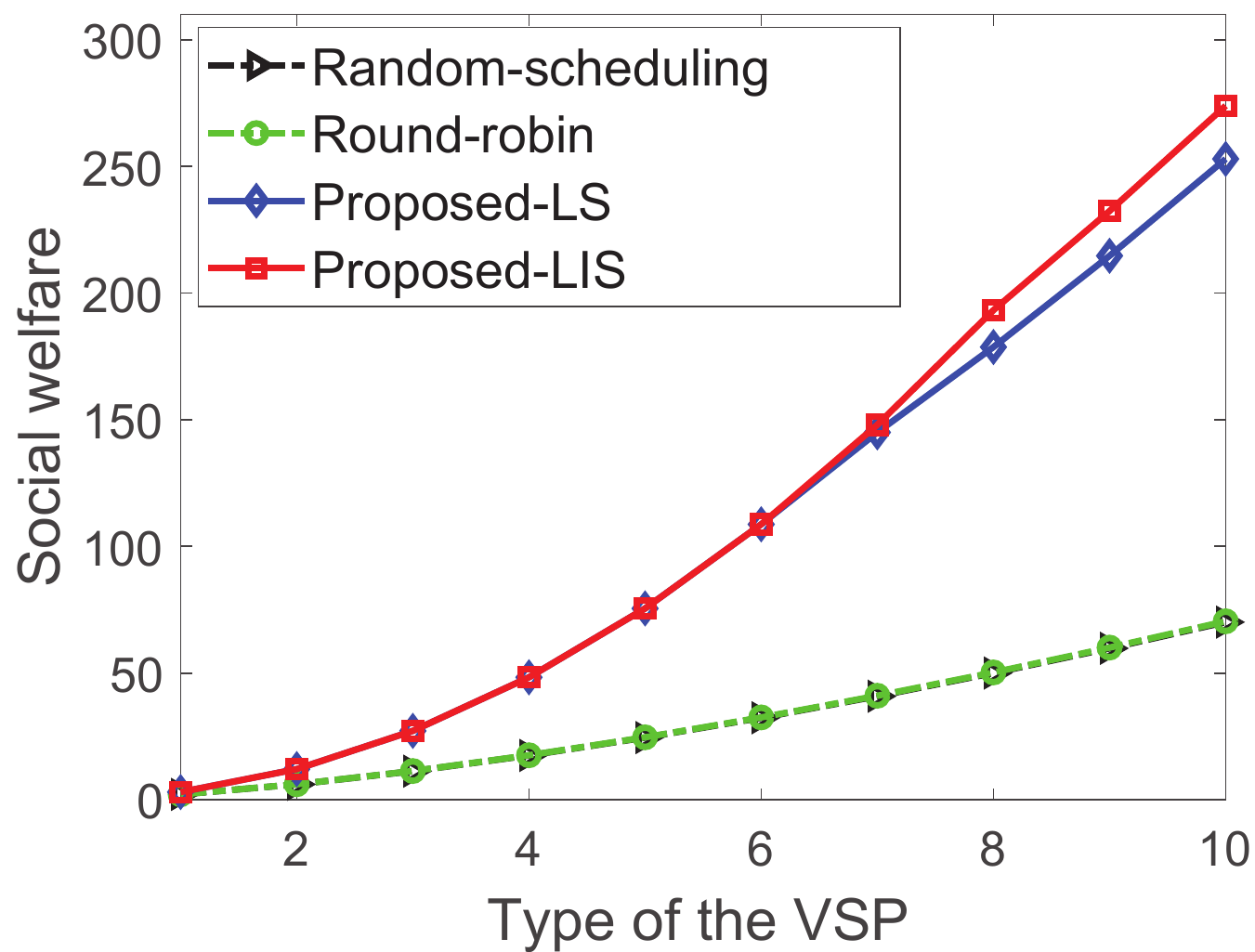} \\ [-0.1cm]
		\text{\footnotesize (c) Social welfare with 5 SVs} & \text{\footnotesize (d) Social welfare with 10 SVs} \\ [0.2cm]
		\end{array}$
		\vspace*{-0.3cm}
		\caption{The total profit of learning SVs and social welfare for i.i.d scenario.}
		\vspace{-0.5em}
		\label{fig:SV_profit_iid}
	\end{center}
\end{figure}

\begin{figure}[!]
	\begin{center}
		$\begin{array}{cc} 
		\epsfxsize=1.65 in \epsffile{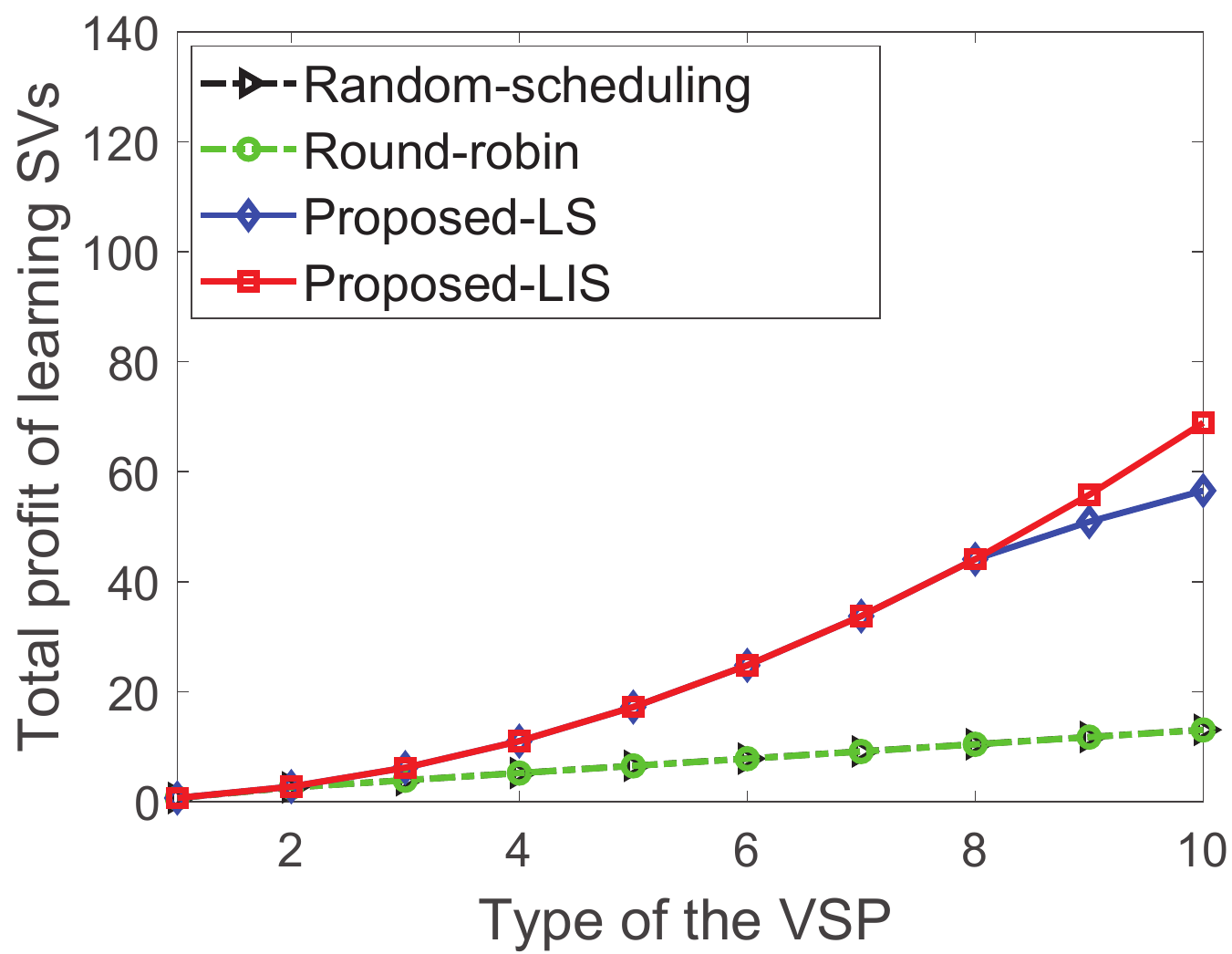} &
		\hspace*{-.3cm}
		\epsfxsize=1.65 in \epsffile{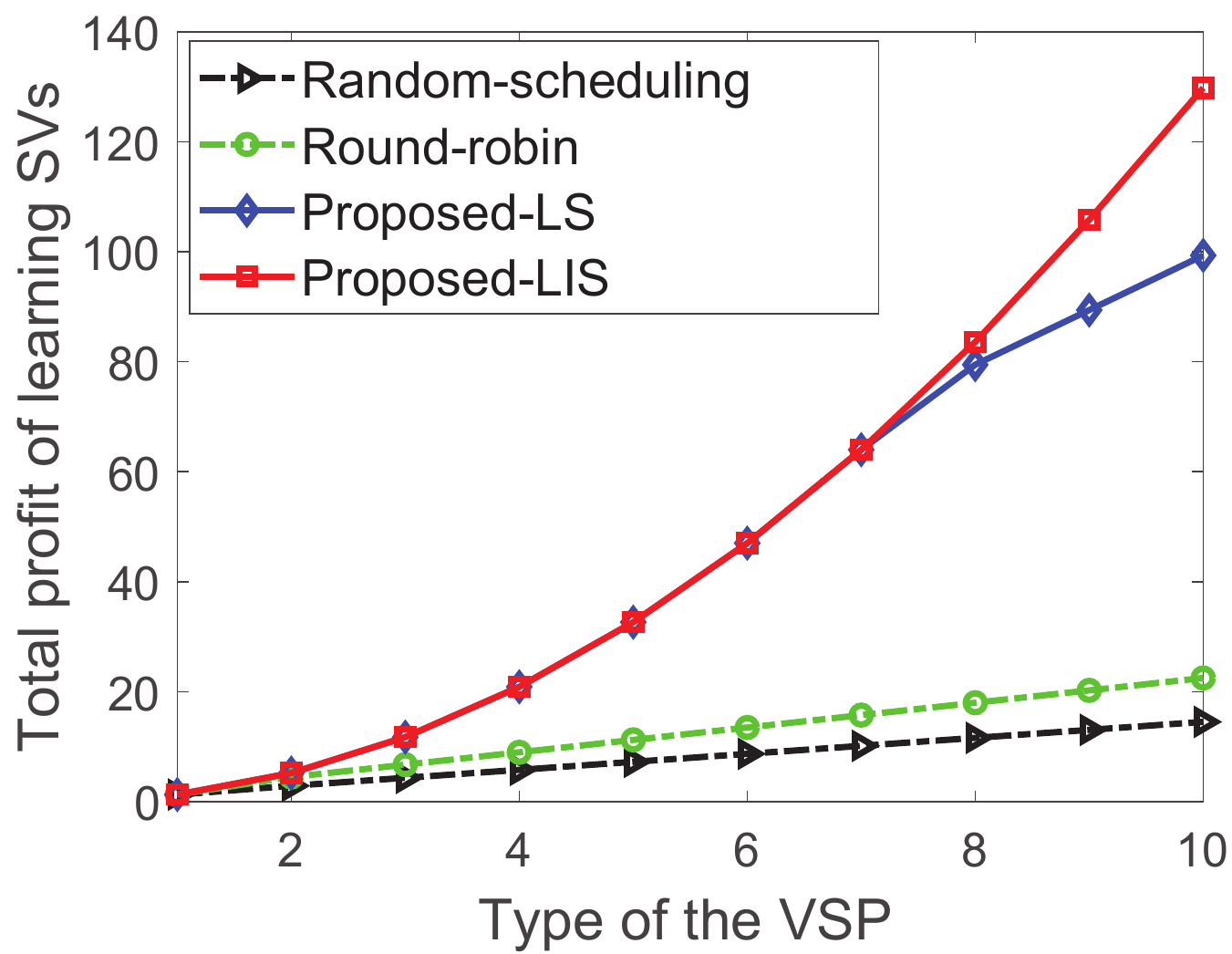} \\ [-0.1cm]
		\text{\footnotesize (a) Total profit of 5 SVs} & \text{\footnotesize (b) Total profit of 10 SVs} \\ [0.2cm]
		\vspace*{-0cm}
		\epsfxsize=1.65 in \epsffile{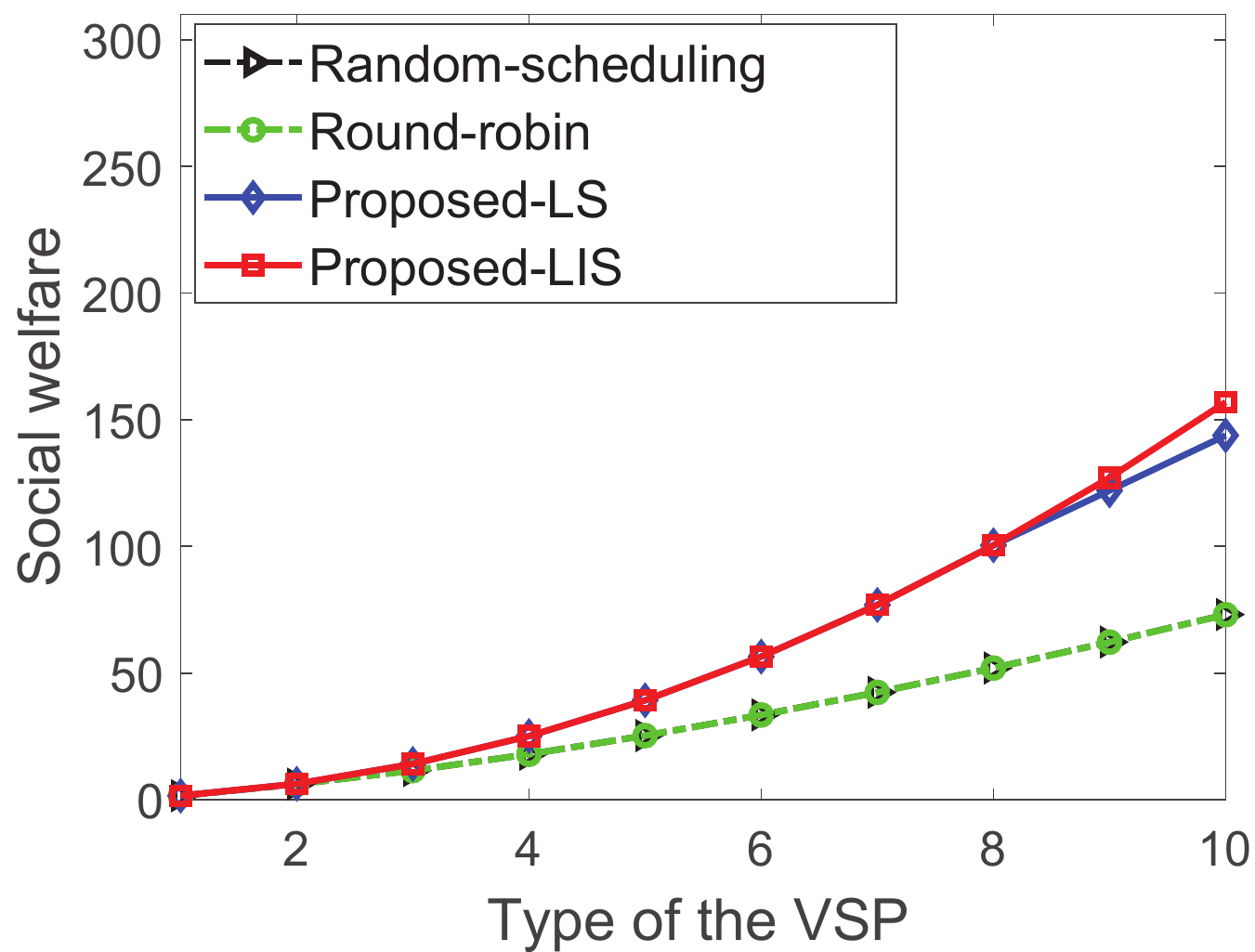} &
		\hspace*{-.3cm}
		\epsfxsize=1.65 in \epsffile{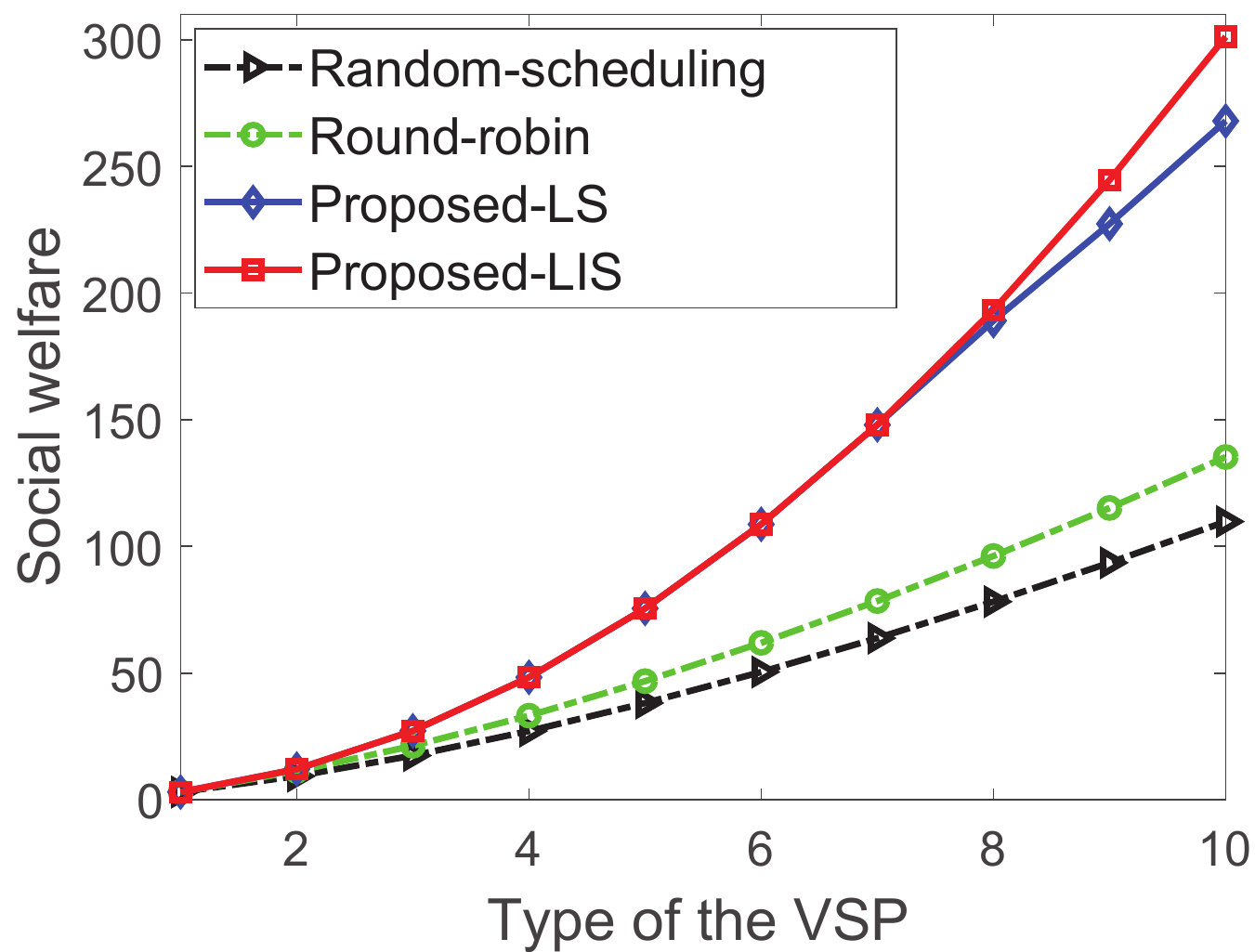} \\ [-0.1cm]
		\text{\footnotesize (c) Social welfare with 5 SVs} & \text{\footnotesize (d) Social welfare with 10 SVs} \\ [0.2cm]
		\end{array}$
		\vspace*{-0.3cm}
		\caption{The total profit of learning SVs and social welfare for non-i.i.d scenario.}
		\vspace{-0.5em}
		\label{fig:SV_profit_niid}
	\end{center}
\end{figure}

We then analyze the total profit of learning SVs and social welfare of the IoV network at the first round as shown in Fig.~\ref{fig:SV_profit_iid} and Fig.~\ref{fig:SV_profit_niid}. Similar to the VSP's profit performance, the proposed-LIS can achieve higher learning SVs' total profits up to 21 and 18.2 times for 5 and 10 learning SVs, respectively, in the i.i.d scenario as well as 5.27 and 8.93 times for 5 and 10 learning SVs, respectively, in the non-i.i.d scenario, compared with those of random and round-robin scheduling methods. The reason is that both methods are likely to train the dataset with very low data size and quality without considering the location-information significance. We also observe that although the proposed-LS can achieve the total profit of learning SVs close to that of proposed-LIS, the proposed-LIS can increase the performance gap by 1.38 and 1.31 times for i.i.d and non-i.i.d scenarios, respectively, when a higher type is employed by the VSP.
The above performance aligns with the social welfare of the network in Fig.~\ref{fig:SV_profit_iid}(c)-(d) and Fig.~\ref{fig:SV_profit_niid}(c)-(d). Specifically, the proposed-LS and proposed-LIS can improve the social welfare up to 3.63 and 4.2 times (i.i.d scenario) as well as 2.44 and 2.74 times (non-i.i.d scenario), respectively, compared with those of random and round-robin scheduling methods.

\subsection{Dynamic Federated Learning Accuracy Performance}


\subsubsection{I.i.d scenario}

\begin{figure}[!]
	\begin{center}
		$\begin{array}{cc} 
		\epsfxsize=1.65 in \epsffile{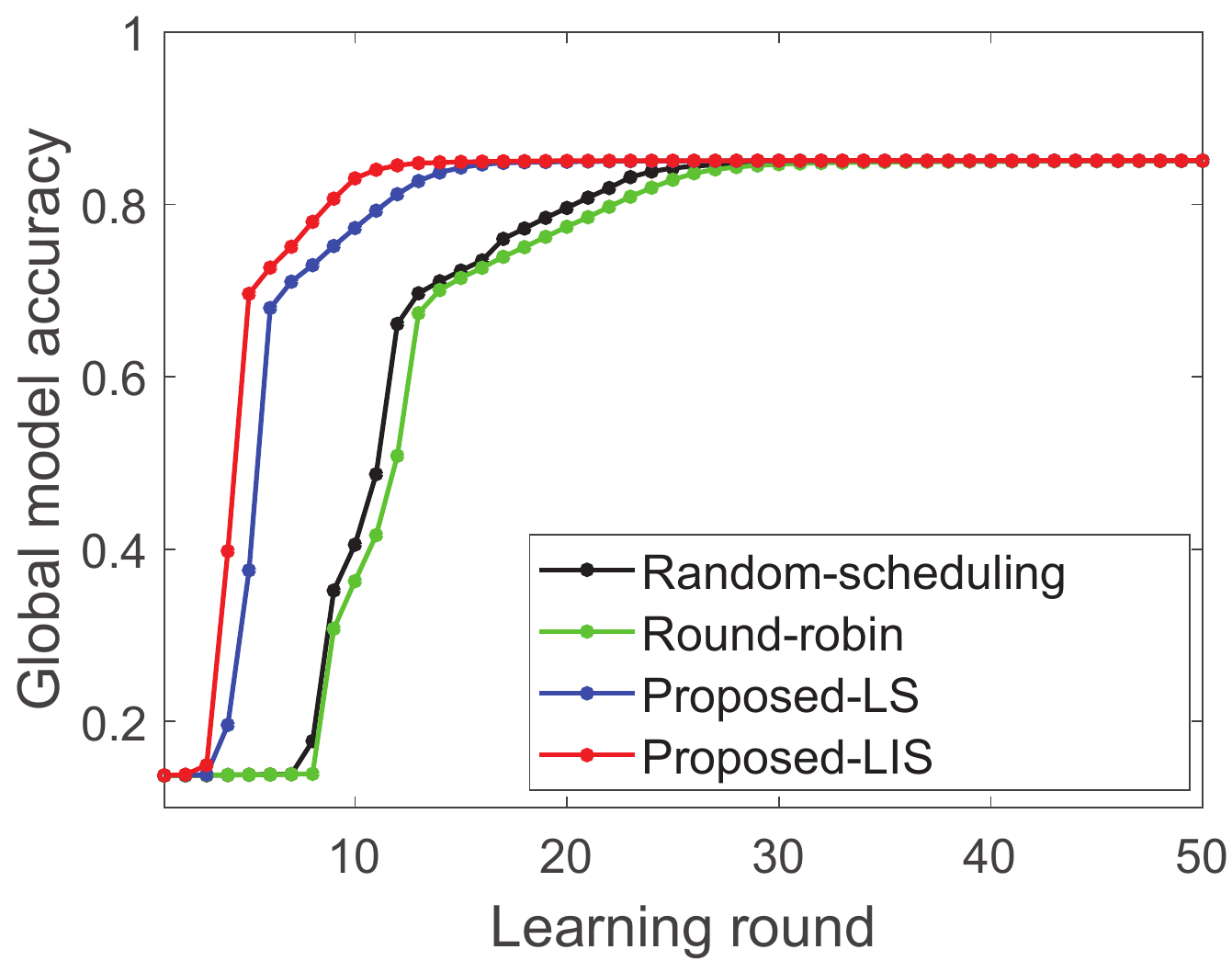} &
		\hspace*{-.3cm}
		\epsfxsize=1.65 in \epsffile{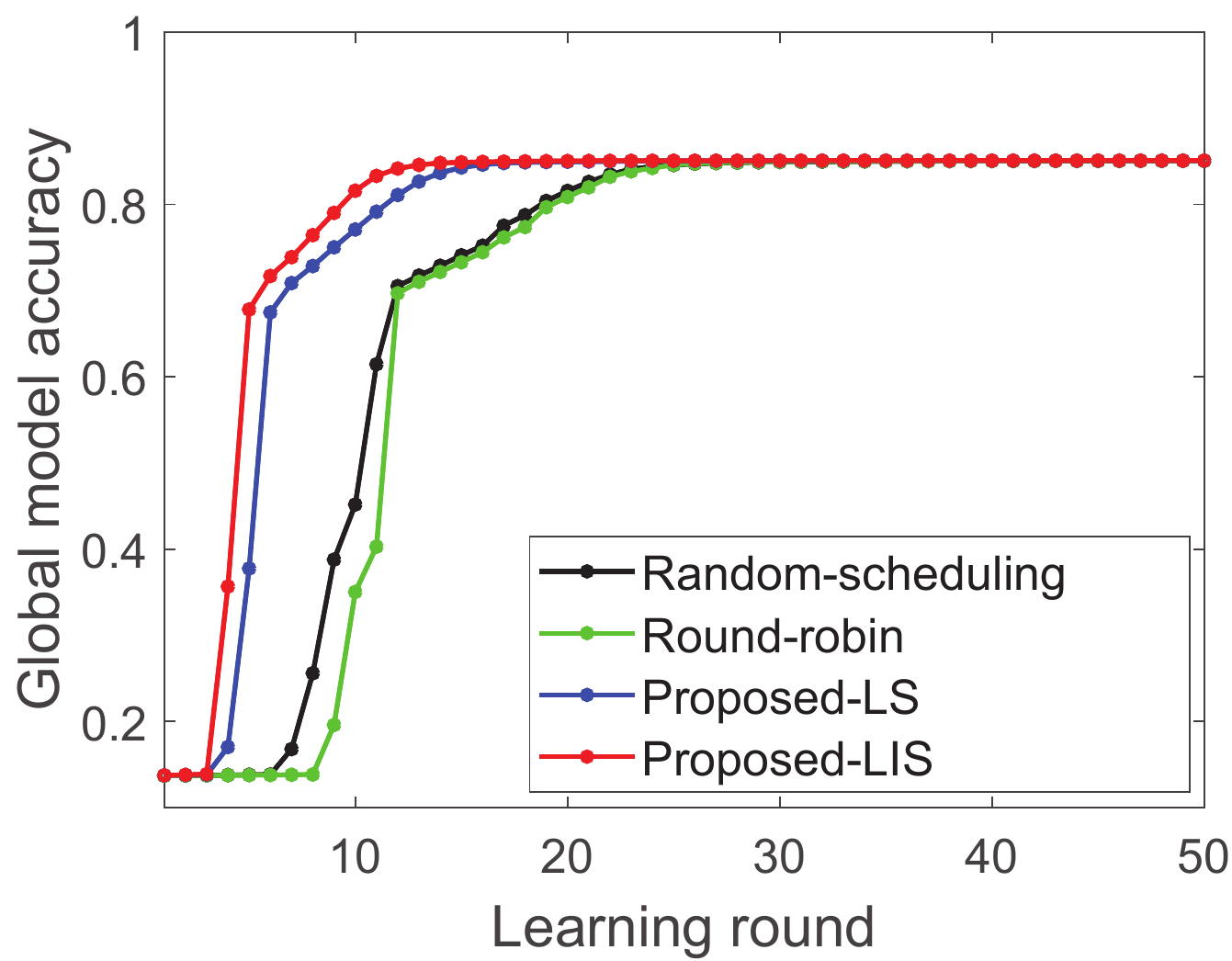} \\ [-0.1cm]
		\text{\footnotesize (a) Accuracy using 5 SVs} & \text{\footnotesize (b) Accuracy using 10 SVs}  \\ [0.2cm]
		\vspace*{-0cm}
		\epsfxsize=1.65 in \epsffile{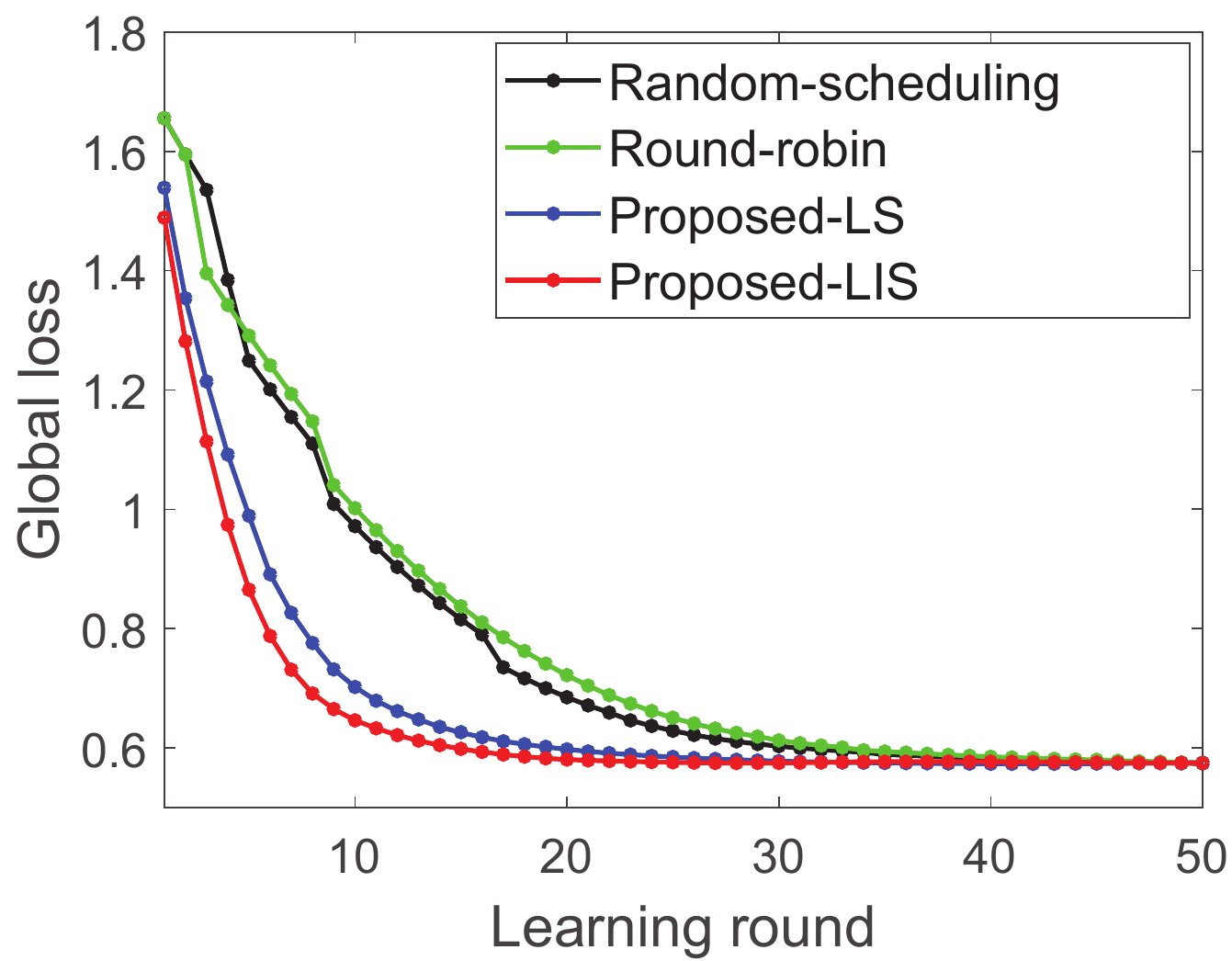} &
		\hspace*{-.3cm}
		\epsfxsize=1.65 in \epsffile{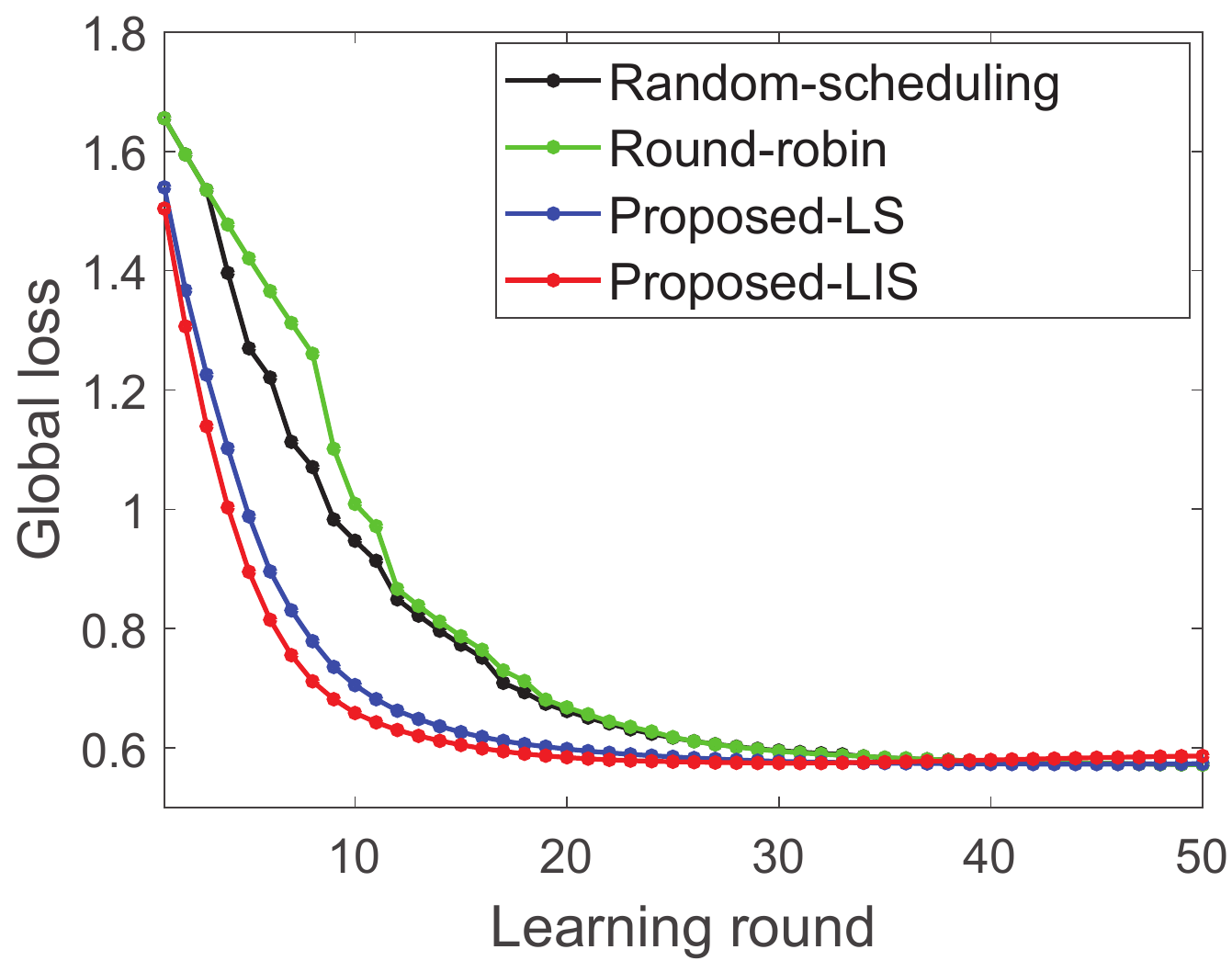} \\ [-0.1cm]
		\text{\footnotesize (c) Global loss using 5 SVs} & \text{\footnotesize (d) Global loss using 10 SVs} \\ [0.2cm]
		\end{array}$
		\vspace*{-0.3cm}
		\caption{The performance of proposed FL with SV selection using various learning SVs with i.i.d datasets.}
		\vspace{-0.5em}
		\label{fig:accuracy_iid}
	\end{center}
\end{figure}

We first investigate the global model accuracy for the i.i.d scenario when the number of learning SVs, i.e., $N$, increases and the VSP has type 10. In particular, when 5 SVs are used to perform the FL algorithm for each round as shown in Fig.~\ref{fig:accuracy_iid}(a), the proposed-LIS can converge 57\% (or 17 rounds) earlier than those of the random scheduling and round-robin with an accuracy of 85\%. Furthermore, the proposed-LIS can achieve the convergence 28\% (or 5 rounds) faster than that of the proposed-LS at the same accuracy level. This is because the proposed-LIS only selects an SV with high information significance within the significant areas. Meanwhile, the baseline FL methods select the learning SVs from the total active SVs without considering the location and information significance-based SV selection. In this case, there exists high probability that they only use \emph{l-SVs} most of the learning rounds, leading to a slow convergence. For the proposed-LS, it only selects SVs in the significant areas without information significance consideration. As a result, its convergence accuracy will be slightly delayed. These accuracy performances align with the global loss ones in Fig.~\ref{fig:accuracy_iid}(c) where the proposed-LIS can reach the fastest minimum convergence compared with other learning methods.

The interesting point of accuracy performance can be observed when 10 learning SVs are considered in Fig.~\ref{fig:accuracy_iid}(b). Specifically, the learning methods other than the proposed-LIS can improve the accuracy convergence speed by 18\% (or 6 rounds) because of the existence of more accurate/meaningful local models from more learning SVs. Nevertheless, this observation does not apply for the proposed-LIS. The reason is that when each learning SV contains i.i.d local dataset with high information significance, the aggregation of all local models at the VSP will generate the same model accuracy compared with the condition when one \emph{h-SV} is used. As such, the additional datasets with the same high information significance from other SVs do not further improve the convergence speed~\cite{Amiri:2020}.

\subsubsection{Non-i.i.d scenario}

\begin{figure}[!]
	\begin{center}
		$\begin{array}{cc} 
		\epsfxsize=1.65 in \epsffile{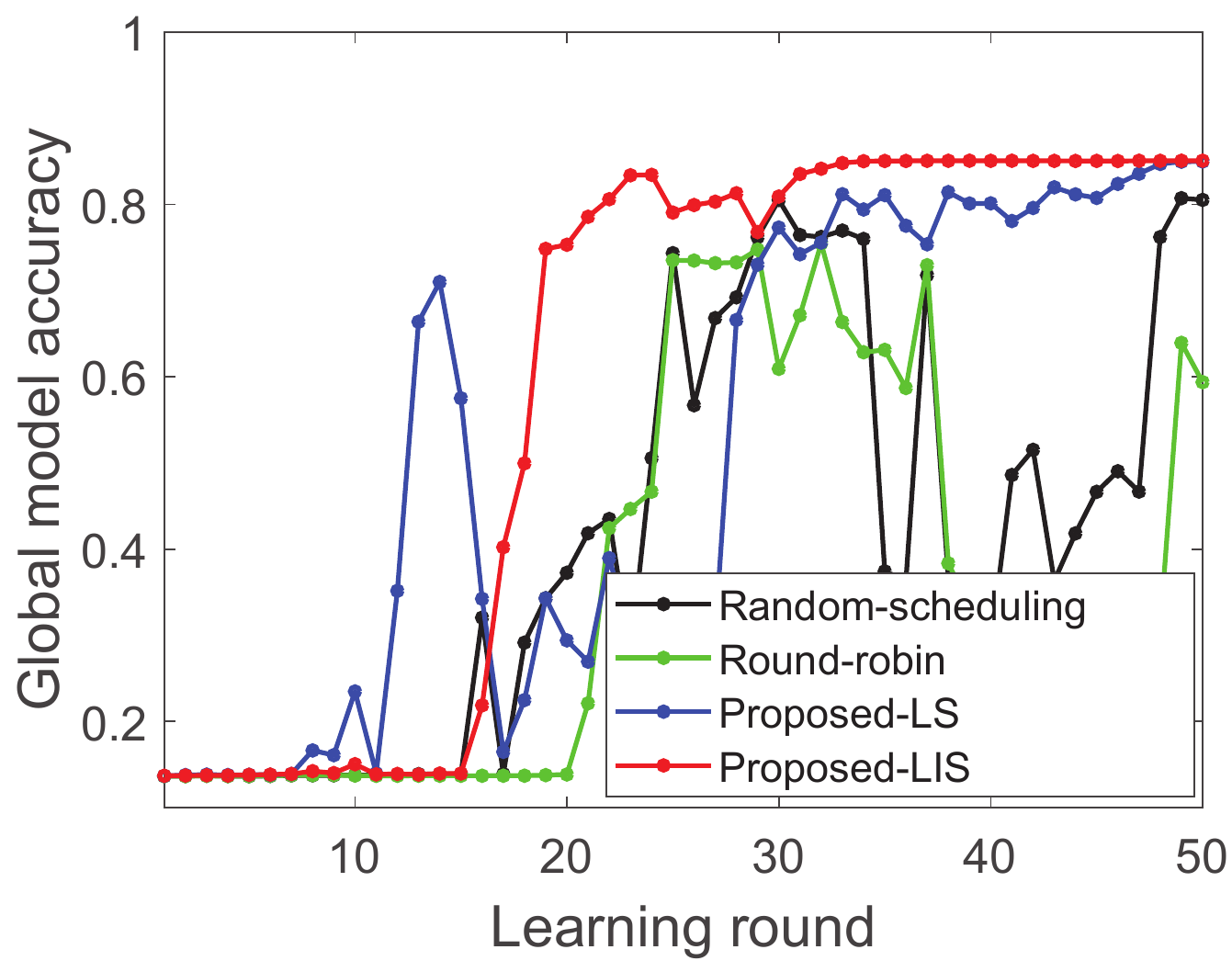} &
		\hspace*{-.3cm}
		\epsfxsize=1.65 in \epsffile{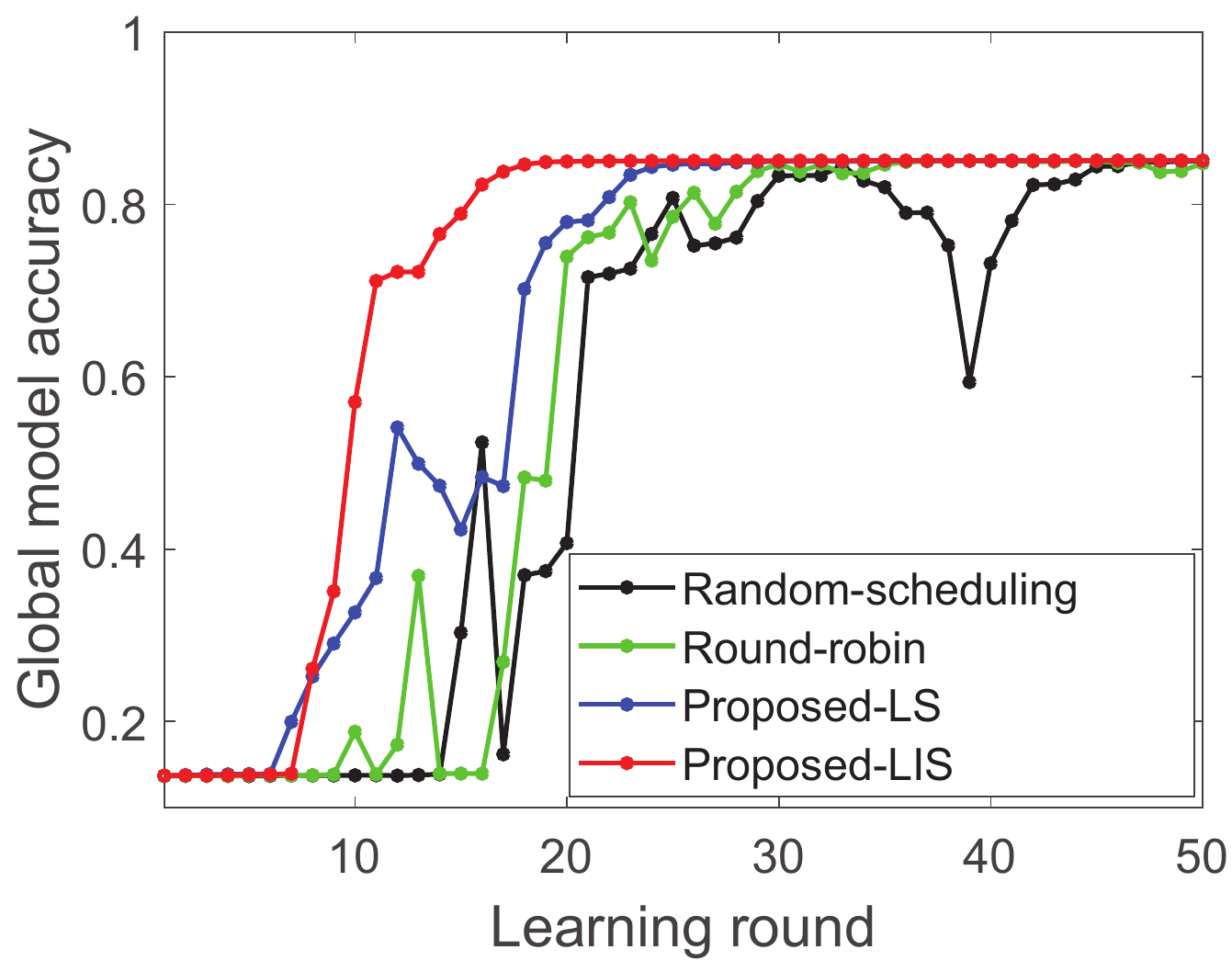} \\ [-0.1cm]
		\text{\footnotesize (a) Accuracy using 5 SVs} & \text{\footnotesize (b) Accuracy using 10 SVs} \\ [0.2cm]
		\vspace*{-0cm}
		\epsfxsize=1.65 in \epsffile{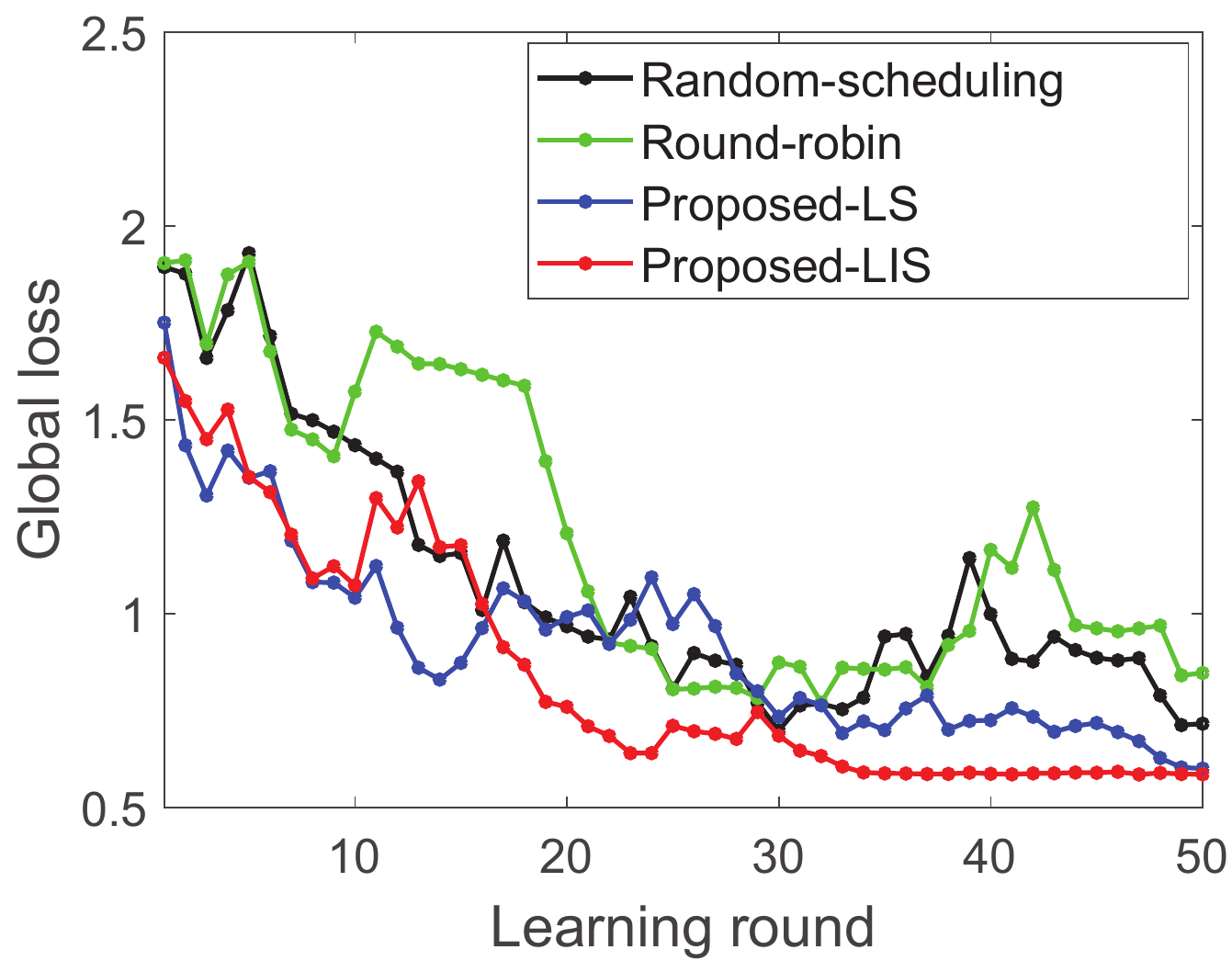} &
		\hspace*{-.3cm}
		\epsfxsize=1.65 in \epsffile{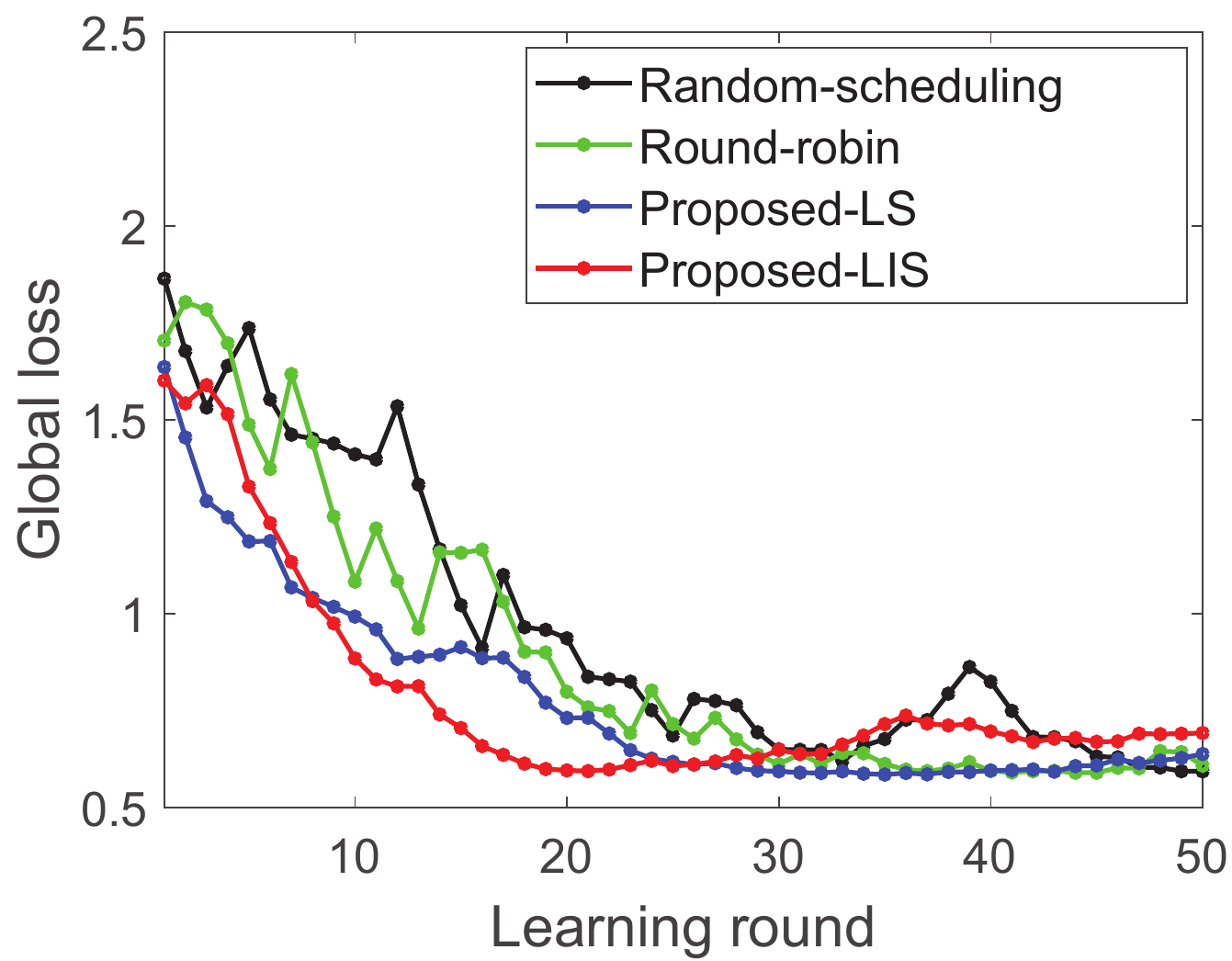} \\ [-0.1cm]
		\text{\footnotesize (c) Global loss using 5 SVs} & \text{\footnotesize (d) Global loss using 10 SVs} \\ [0.2cm]
		\end{array}$
		\vspace*{-0.3cm}
		\caption{The performance of proposed FL with SV selection using various learning SVs with non-i.i.d datasets.}
		\vspace{-0.5em}
		\label{fig:accuracy_niid}
	\end{center}
\end{figure}

In contrast to the i.i.d scenario, the convergence speed for the proposed-LIS gets faster when more learning SVs are selected at each round for the non-i.i.d scenario. As shown in Fig.~\ref{fig:accuracy_niid}(a), all FL methods suffer from the fluctuated/unstable learning performance when the VSP utilizes 5 learning SVs (due to biased local dataset from each active SV~\cite{Amiri:2020}). Nevertheless, the proposed-LIS eventually can achieve the convergence within 30 learning rounds with accuracy level 85\%. Specifically, the proposed-LIS still outperforms the proposed-LS up to 17.5\% (or 7 rounds) in terms of convergence speed due to the selection guarantee of \emph{h-SVs}. Moreover, other baseline methods cannot even reach the convergence because they likely to select \emph{l-SVs} with very biased local datasets. Similar to the i.i.d scenario, the global loss performance in Fig.~\ref{fig:accuracy_niid}(c) aligns with the accuracy one where the proposed-LS and proposed-LIS can reach the fastest minimum convergence compared with those of the random scheduling and round-robin methods. 

When more learning SVs are selected by the VSP, the accuracy performances for all FL methods can be improved. As observed in Fig.~\ref{fig:accuracy_niid}(b), the proposed-LIS can speed up the accuracy convergence by 45\% (or 15 rounds) when the VSP schedules 10 learning SVs for each round. The reason is that the proposed-LIS can train more \emph{h-SVs} and \emph{m-SVs} with less biased local datasets, thereby reducing the unfairness among different SVs. Furthermore, the proposed-LIS can achieve the convergence speed up to 62\% (or 29 rounds) faster than those of other methods when 10 learning SVs are used, respectively. These results can provide useful information for the VSP in practice to determine the best SV selection method for the FL process in terms of stability, robustness, and flexibility.

\subsection{Relationship Between Contract and Federated Learning Performance} 

\begin{figure}[!]
	\begin{center}
		$\begin{array}{cc} 
		\epsfxsize=1.65 in \epsffile{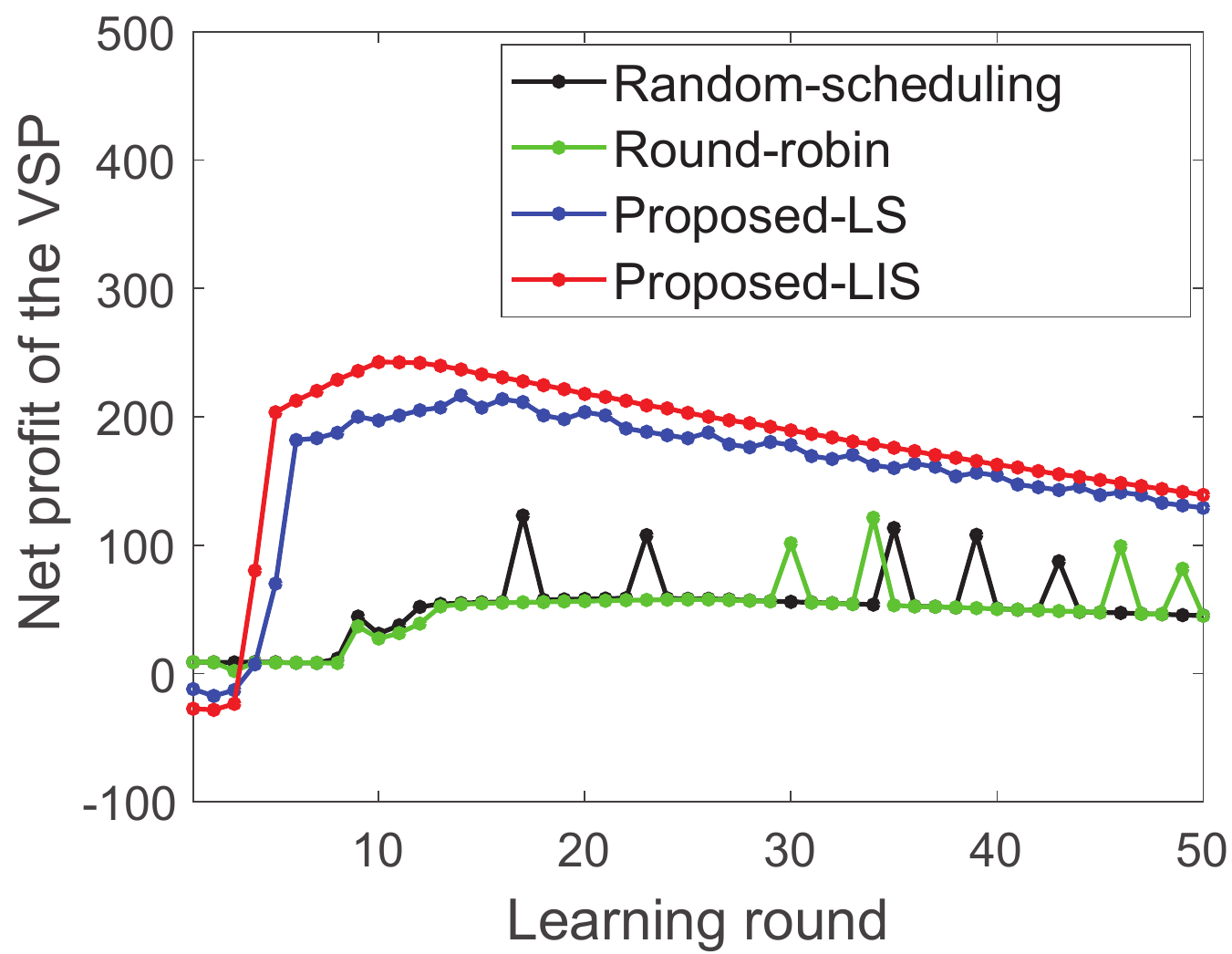} &
		\hspace*{-.3cm}
		\epsfxsize=1.65 in \epsffile{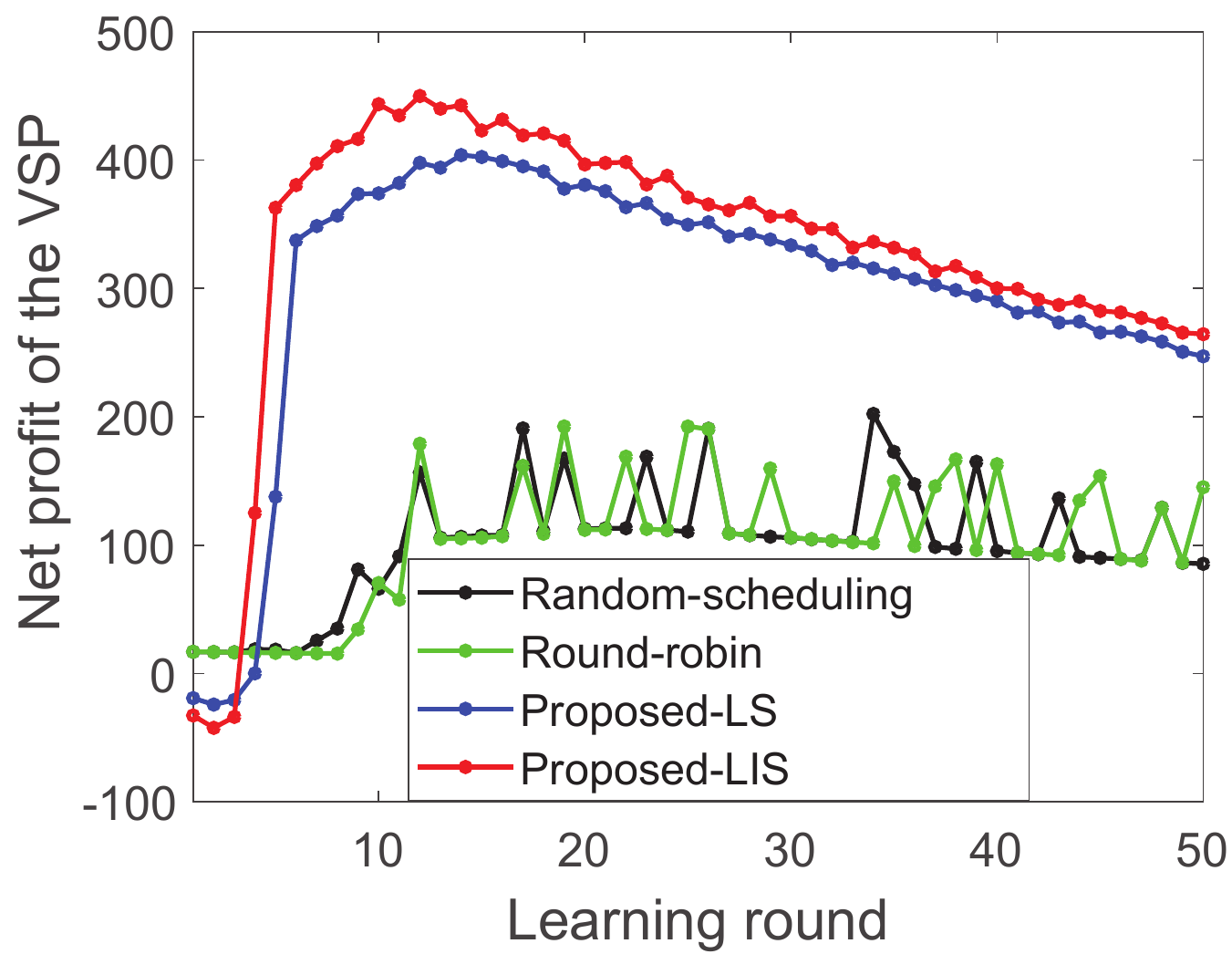} \\ [-0.1cm]
		\text{\footnotesize (a) VSP's net profit with 5 SVs} & \text{\footnotesize (b) VSP's net profit with 10 SVs} \\ [0.2cm]
		\vspace*{-0cm}
		\epsfxsize=1.65 in \epsffile{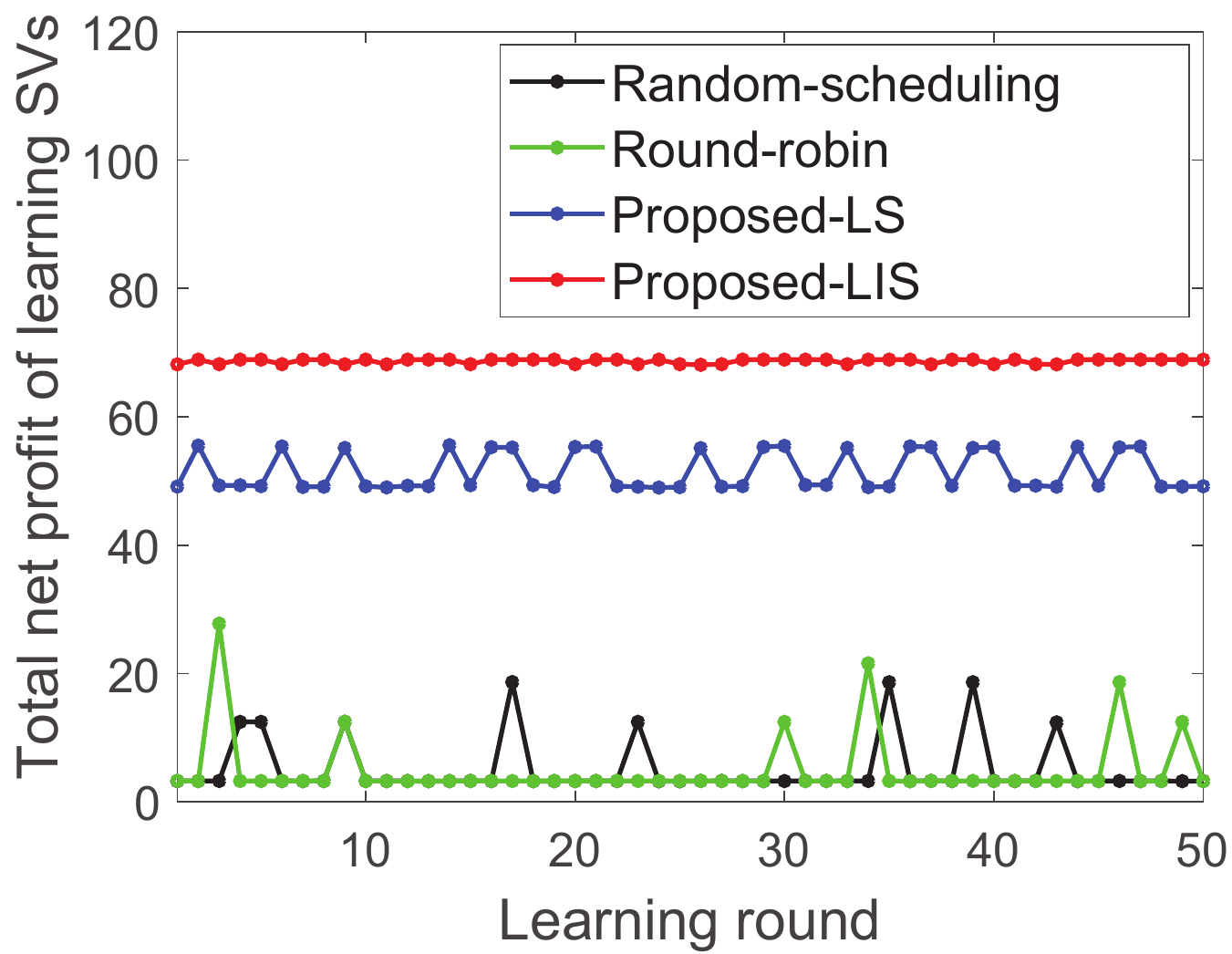} &
		\hspace*{-.3cm}
		\epsfxsize=1.65 in \epsffile{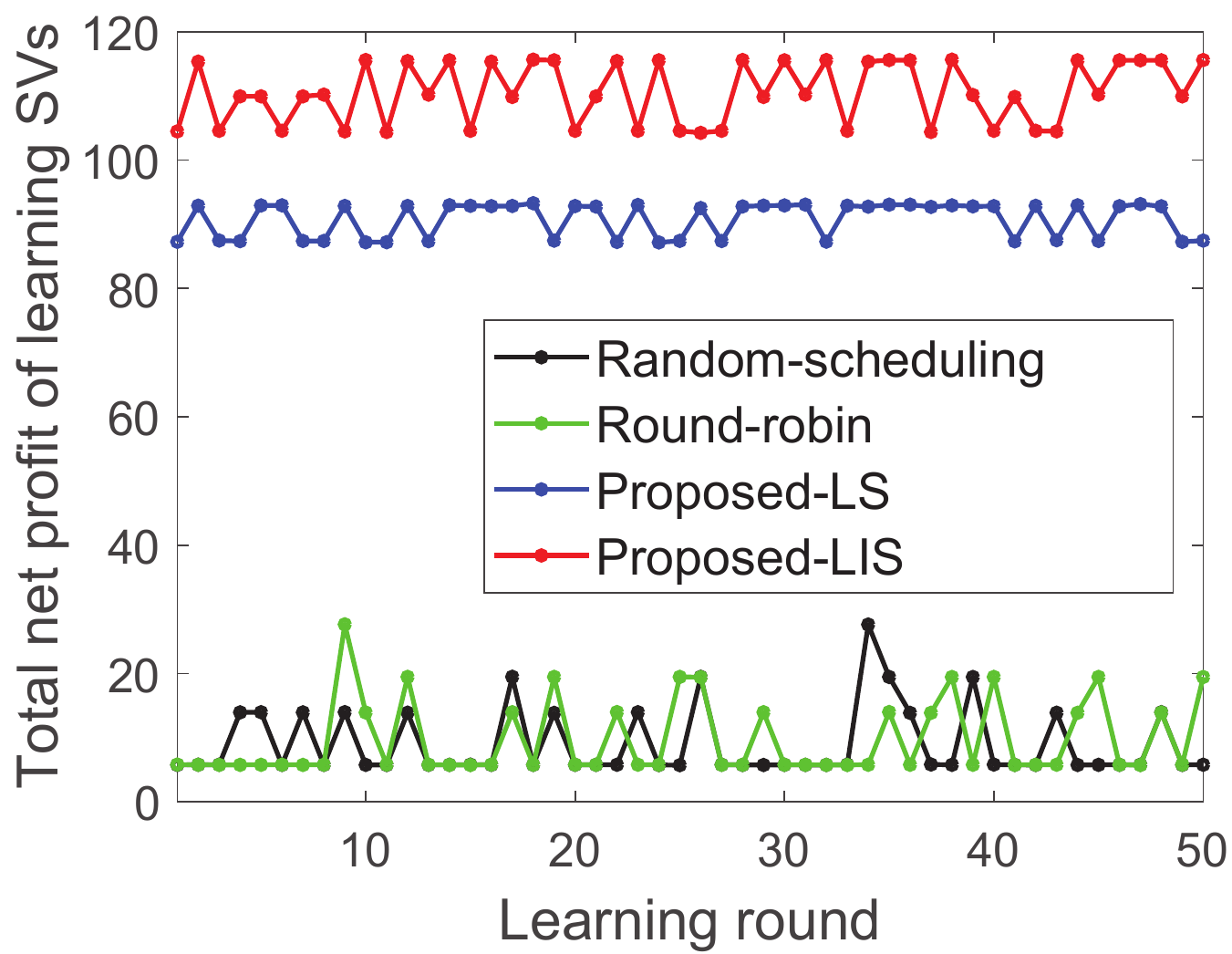} \\ [-0.1cm]
		\text{\footnotesize (c) Total net profit of 5 SVs} & \text{\footnotesize (d) Total net profit of 10 SVs} \\ [0.2cm]
		\end{array}$
		\vspace*{-0.3cm}
		\caption{Net profits of the VSP and learning SVs for all learning rounds when i.i.d datasets are trained.}
		\vspace{-0.5em}
		\label{fig:VSP_SV_profit_iid_allrounds}
	\end{center}
\end{figure}

\begin{figure}[!]
	\begin{center}
		$\begin{array}{cc} 
		\epsfxsize=1.65 in \epsffile{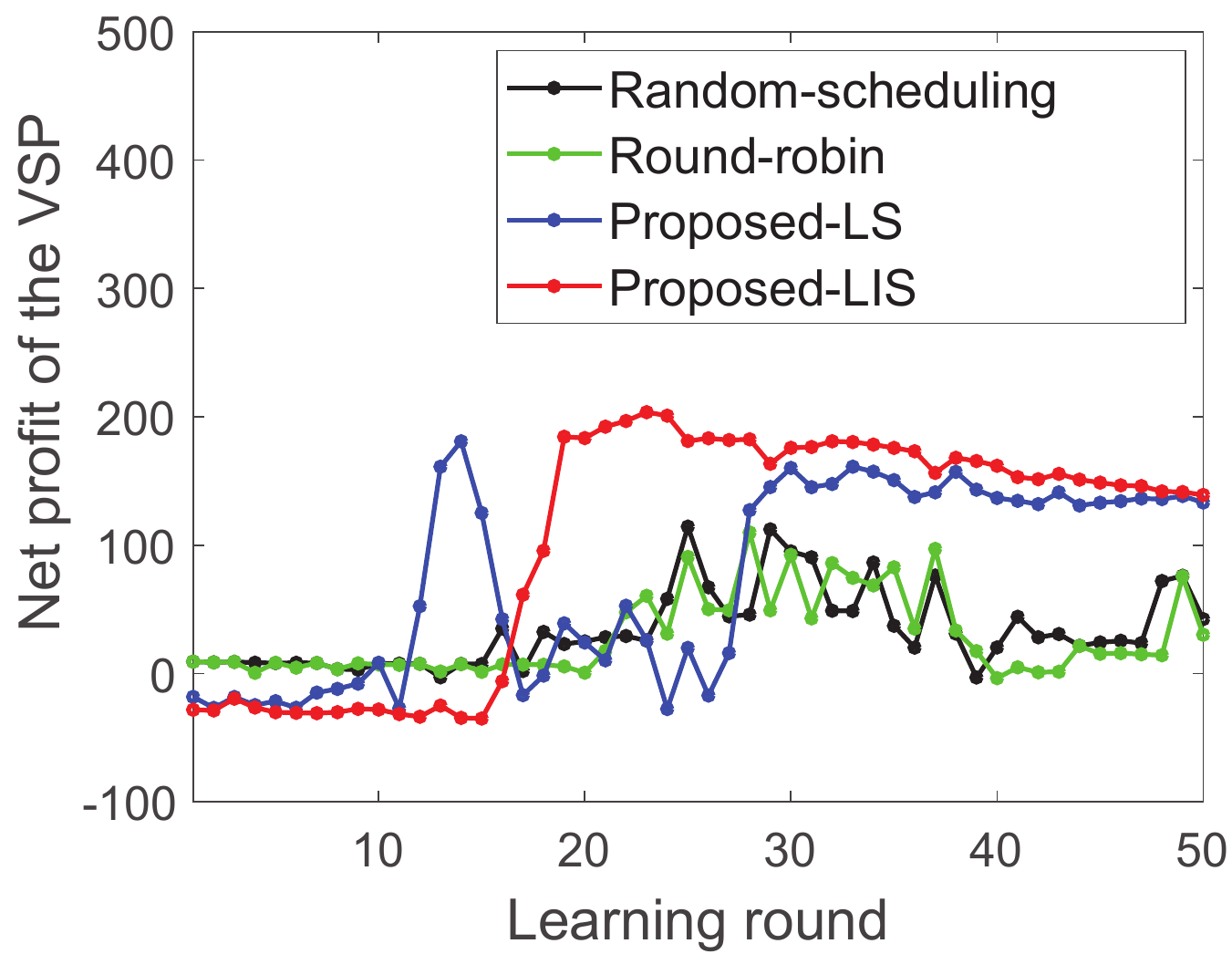} &
		\hspace*{-.3cm}
		\epsfxsize=1.65 in \epsffile{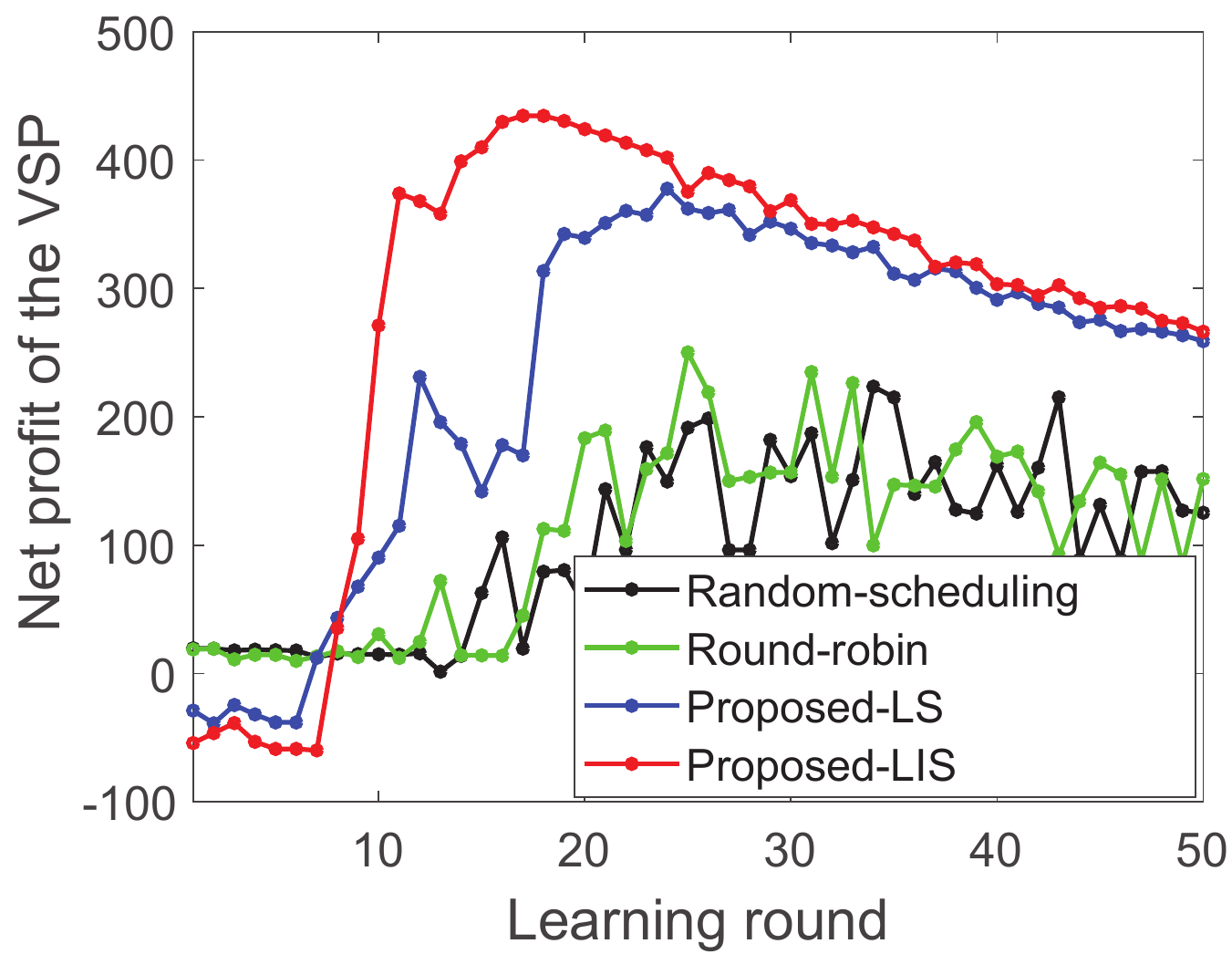} \\ [-0.1cm]
		\text{\footnotesize (a) VSP's net profit with 5 SVs} & \text{\footnotesize (b) VSP's net profit with 10 SVs} \\ [0.2cm]
		\vspace*{-0cm}
		\epsfxsize=1.65 in \epsffile{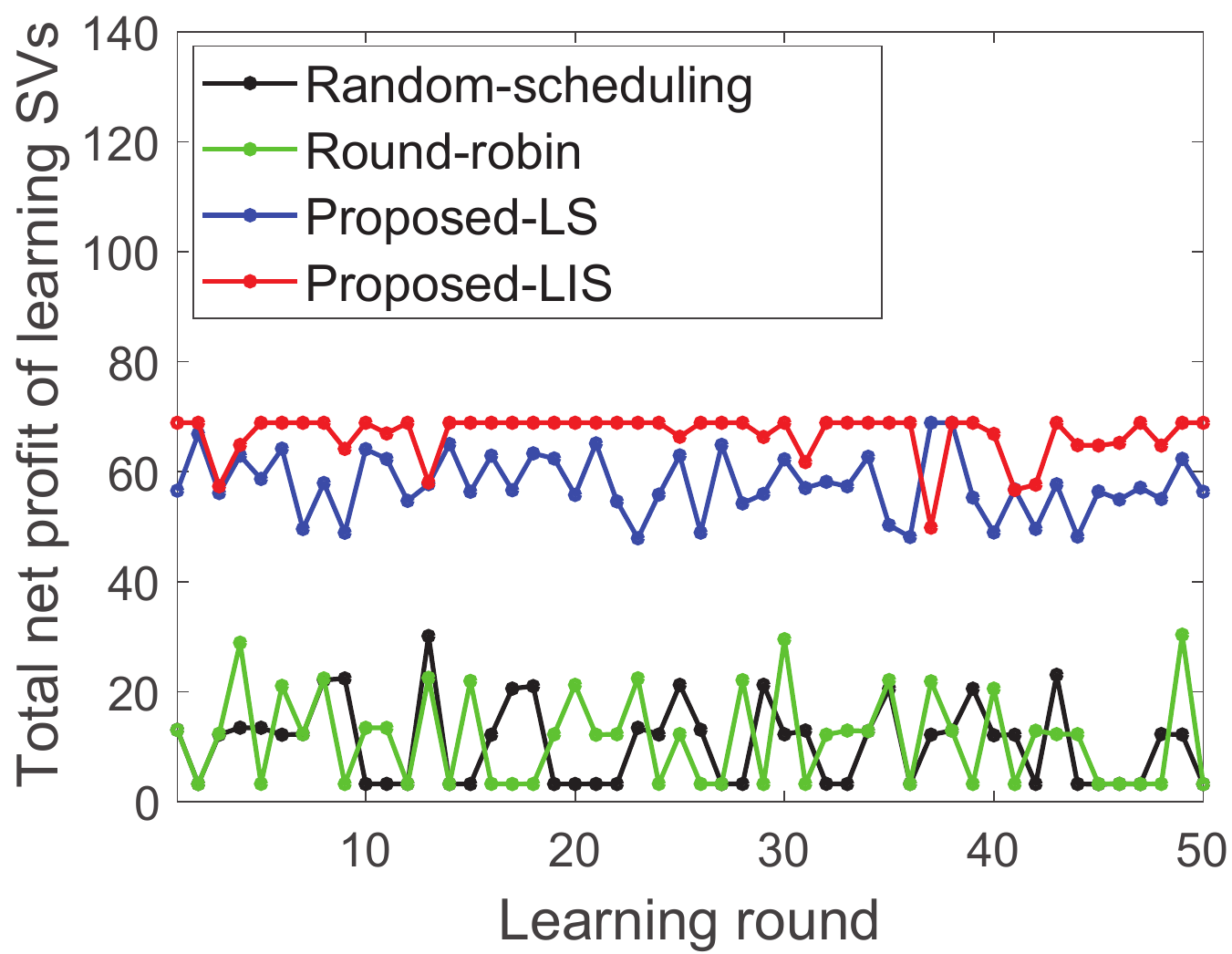} &
		\hspace*{-.3cm}
		\epsfxsize=1.65 in \epsffile{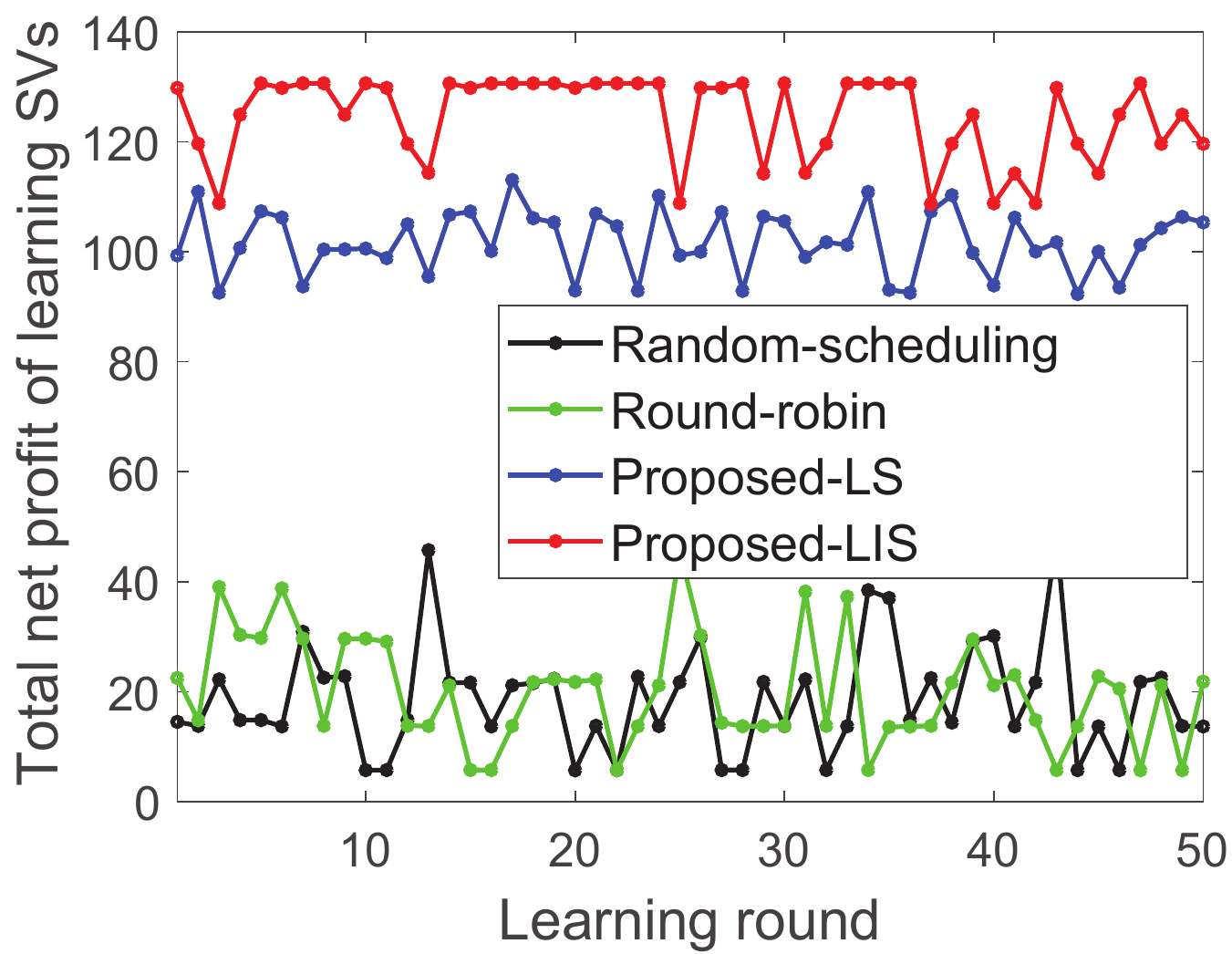} \\ [-0.1cm]
		\text{\footnotesize (c) Total net profit of 5 SVs} & \text{\footnotesize (d) Total net profit of 10 SVs} \\ [0.2cm]
		\end{array}$
		\vspace*{-0.3cm}
		\caption{Net profits of the VSP and learning SVs for all learning rounds when non-i.i.d datasets are trained.}
		\vspace{-0.5em}
		\label{fig:VSP_SV_profit_niid_allrounds}
	\end{center}
\end{figure}

\begin{figure}[!]
	\begin{center}
		$\begin{array}{cc} 
		\epsfxsize=1.65 in \epsffile{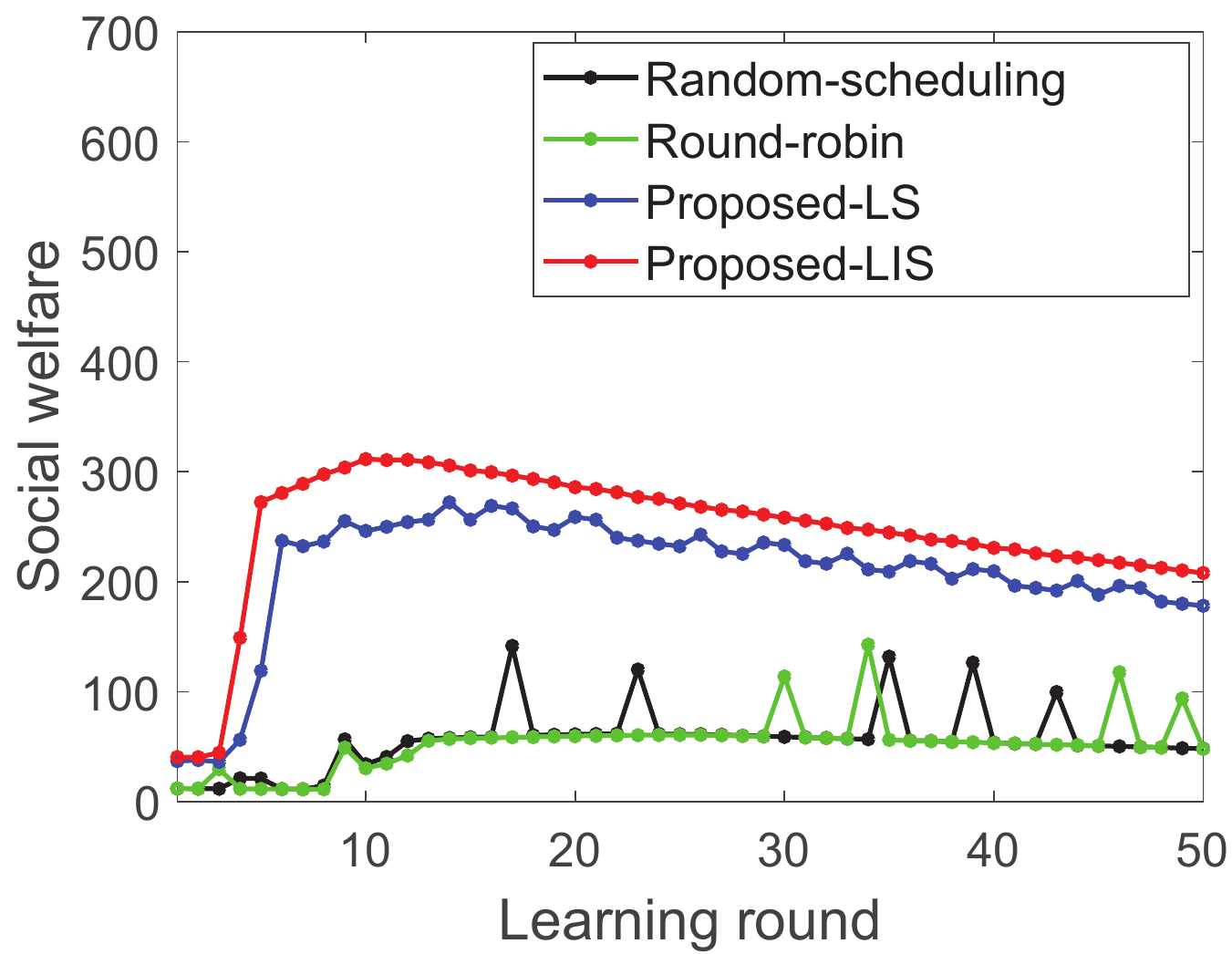} &
		\hspace*{-.3cm}
		\epsfxsize=1.65 in \epsffile{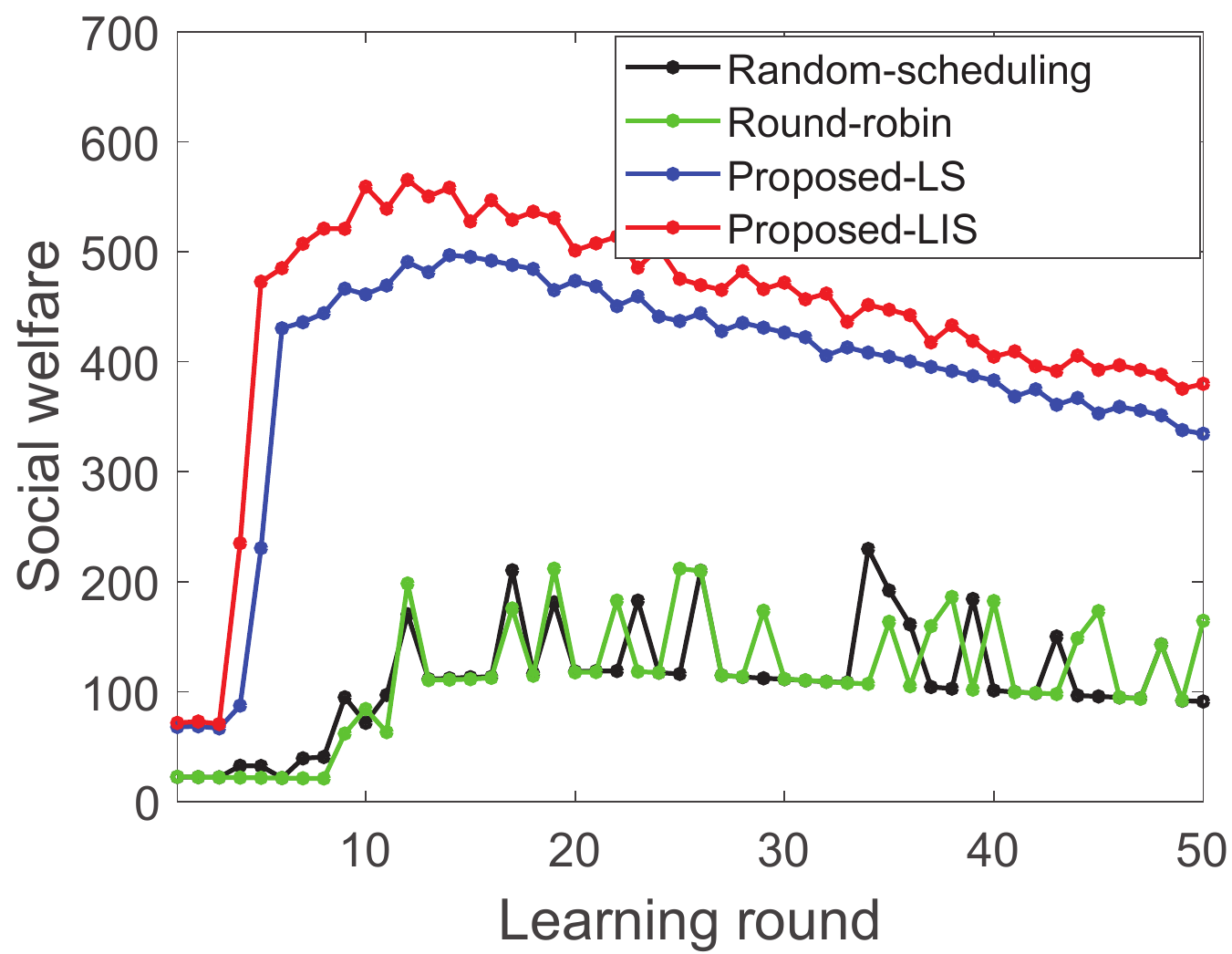} \\ [-0.1cm]
		\text{\footnotesize (a) 5 SVs (i.i.d)} & \text{\footnotesize (b) 10 SVs (i.i.d)} \\ [0.2cm]
		\vspace*{-0cm}
		\epsfxsize=1.65 in \epsffile{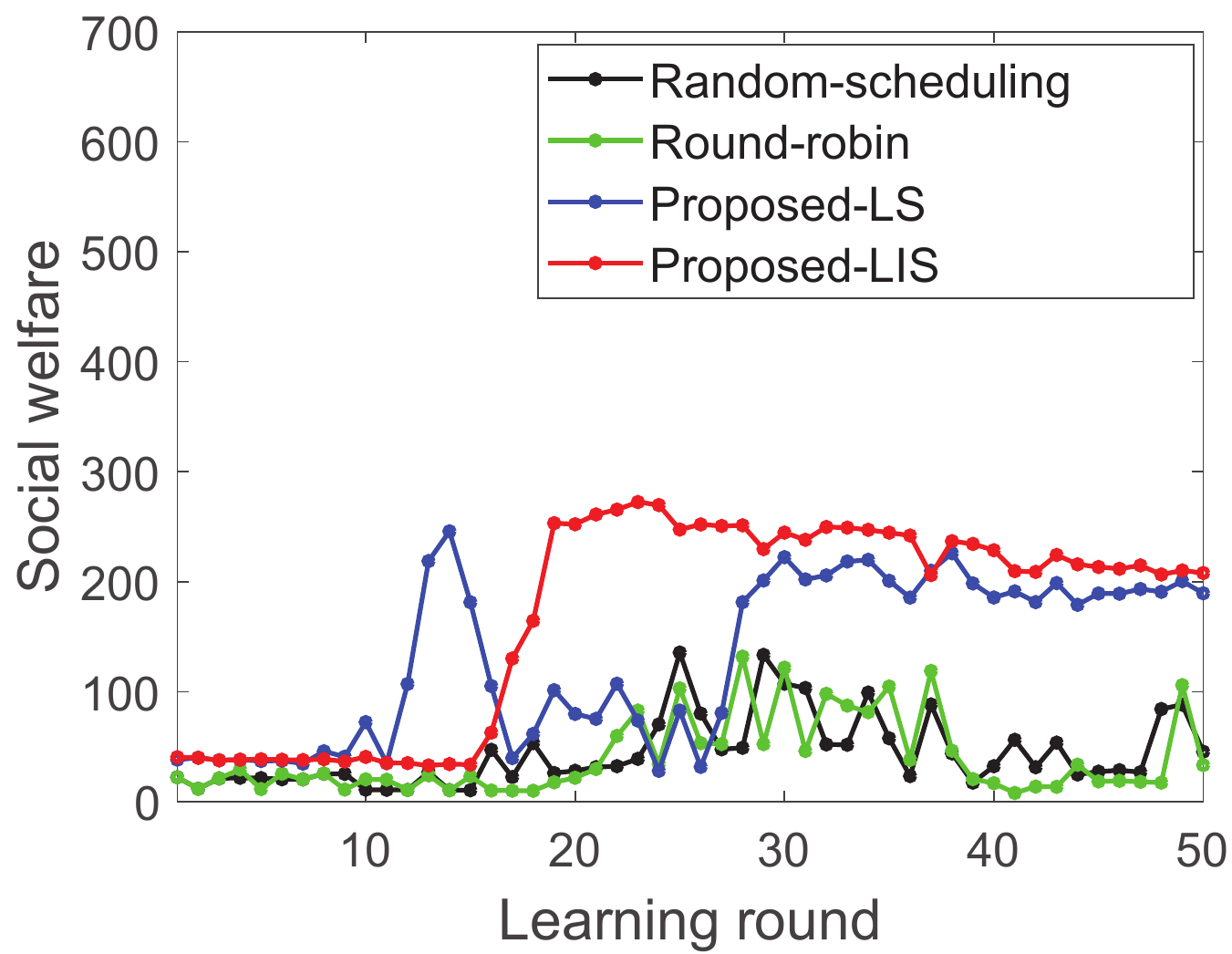} &
		\hspace*{-.3cm}
		\epsfxsize=1.65 in \epsffile{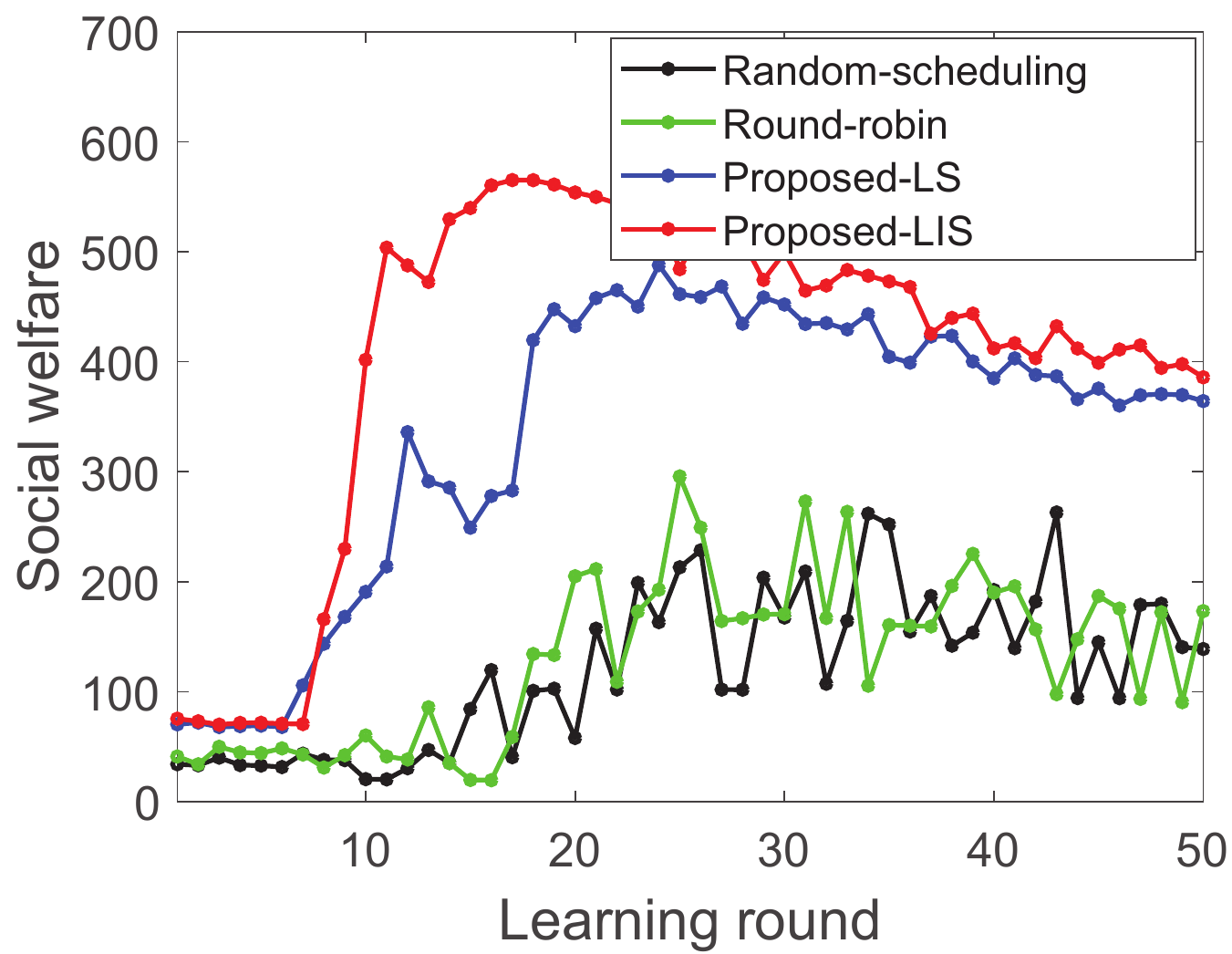} \\ [-0.1cm]
		\text{\footnotesize (c) 5 SVs (non-i.i.d)} & \text{\footnotesize (d) 10 SVs (non-i.i.d)} \\ [0.2cm]
		\end{array}$
		\vspace*{-0.3cm}
		\caption{Final social welfare for all learning rounds when i.i.d and non-i.i.d datasets are trained.}
		\vspace{-0.5em}
		\label{fig:SW_allrounds}
	\end{center}
\end{figure}

This section discusses how the global model accuracy and the global model freshness obtained from the learning process at each round can influence the profits of the VSP (with type 10) and learning SVs as well as the social welfare of the network obtained from the contract optimization. As observed in Fig.~\ref{fig:VSP_SV_profit_iid_allrounds}(a)-(b) when i.i.d datasets are used, the net profit of the VSP (when using the proposed-LS and proposed-LIS) first suffers from negative profit during the first three learning rounds for all cases. The reason is that although the VSP can obtain high information significance values from the learning SVs at these rounds, the obtained global model accuracy cannot compensate the satisfaction function due to the higher cost, i.e., high payment for the learning SVs with high QoI. As the learning round increases, the global model accuracy start to improve significantly by 60\%, thereby enhancing the net profit of the VSP (with the maximum profit between round 10 and 15). Due to the global model freshness impact and less accuracy improvement at further learning rounds, the profit of the VSP gradually decreases. Nonetheless, our proposed-LS and proposed-LIS can still outperform other methods up to 7.25 and 8.92 times in terms of the VSP's net profit, respectively. 

For the total net profit of learning SVs, the performance is not affected by the global model accuracy and freshness. Alternatively, it remains the same as the one obtained from the contract optimization. The reason is that the profits of learning SVs depend only on their current round's trained datasets with certain information significance values, and computation as well as memory costs at the learning SVs. As can be seen in Fig.~\ref{fig:VSP_SV_profit_iid_allrounds}(c)-(d), our proposed-LIS can achieve the highest total profit performance of learning SVs for most rounds due to the use of more \emph{h-SVs}. Similar trends with more fluctuated profit performance can be observed in Fig.~\ref{fig:VSP_SV_profit_niid_allrounds} when non-i.i.d datasets are used for the cases of 5 and 10 learning SVs. However, the use of 5 learning SVs in Fig.~\ref{fig:VSP_SV_profit_niid_allrounds}(a) and Fig.~\ref{fig:VSP_SV_profit_niid_allrounds}(c) triggers low net profits for both VSP and learning SVs of the proposed-LS and proposed-LIS during the first 16 and 27 learning rounds, respectively (due to the unstable accuracy performance in Fig.~\ref{fig:accuracy_niid}(a)). To support the performances in Fig.~\ref{fig:VSP_SV_profit_iid_allrounds} and Fig.~\ref{fig:VSP_SV_profit_niid_allrounds}, we show the final social welfare for both i.i.d and non-i.i.d scenarios in Fig.~\ref{fig:SW_allrounds}. Specifically, the trend of social welfare follows the one of the VSP's net profit with increased profit values due to addition of learning SVs' total profit. In this case, our proposed-LIS can achieve the social welfare up to 24.5 and 27.2 times for i.i.d and non-i.i.d scenarios, respectively, compared with those of other methods.  
To this end, we can summarize that the i.i.d scenario can provide more stable economic benefit performance throughout the learning rounds compared with that of the non-i.i.d scenario. However, the non-i.i.d scenario will provide more practical economic benefit performance because of the unique and biased dataset (generated from diverse mobile activities) at each learning SV in the practical IoV network.

\section{Conclusion}
\label{sec:Conc}

In this paper, we have proposed the novel economic framework for the IoV network to maximize the profits for the VSP and learning SVs in the dynamic FL process. Specifically, we have introduced the dynamic SV selection method to determine a set of active SVs based on the significance of their locations and information at each learning round. From the set of selected SVs, each SV can first collect on-road data and then offer a payment contract to the VSP with respect to its QoI. 
In this case, we have developed the MPOA-based contract optimization problem under the competition of learning SVs and the VSP's common constraints (i.e., the IR and IC constraints) as well as incomplete information of payment budget. To find the optimal contracts, we have transformed the problem into the equivalent low-complexity problem and then implemented the iterative contract algorithm which can achieve the equilibrium solution for all the learning SVs at each learning round. We have also analyzed the convergence of proposed FL and investigated on how the accuracy and freshness of the global prediction model obtained by the FL process at each round affect the net profits of the VSP and learning SVs as well as the social welfare of the network. Through the experimental results, we have shown that our proposed framework can significantly improve the FL convergence speed, profits of the VSP and learning SVs, and the social welfare of the network compared with other baseline FL methods.



\ifCLASSOPTIONcaptionsoff
\newpage
\fi

\vspace{0cm}

\clearpage
\appendices

	\section{Proof of Lemma 1}
	\label{appx:lemma1}
	
	We first prove that $\boldsymbol{\zeta}_j(t) \geq \boldsymbol{\zeta}_{j^*}(t)$ if and only if $\theta_j \geq \theta_{j^*}$ using the IC constraint in~(\ref{eqn:for7b}), which is
	\begin{align}
	&\theta_jS(\boldsymbol{\hat \varrho}_j(t),\boldsymbol{\zeta}_j(t)) - C(\boldsymbol{\hat \varrho}_j(t), \boldsymbol{\varphi}_j(t)) \geq \nonumber\\
	&\theta_jS(\boldsymbol{\hat \varrho}_{j^*}(t),\boldsymbol{\zeta}_{j^*}(t)) - C(\boldsymbol{\hat \varrho}_{j^*}(t), \boldsymbol{\varphi}_{j^*}(t)). \label{eqn:for7c}
	\end{align}
	If the VSP has type $j^*$, then we obtain
	\begin{align}
	&\theta_{j^*}S(\boldsymbol{\hat \varrho}_{j^*}(t),\boldsymbol{\zeta}_{j^*}(t)) - C(\boldsymbol{\hat \varrho}_{j^*}(t), \boldsymbol{\varphi}_{j^*}(t)) \geq \nonumber\\
	&\theta_{j^*}S(\boldsymbol{\hat \varrho}_j(t),\boldsymbol{\zeta}_{j}(t)) - C(\boldsymbol{\hat \varrho}_j(t), \boldsymbol{\varphi}_{j}(t)), \label{eqn:for7d}
	\end{align}
	where $j \neq {j^*}$ and $j, {j^*} \in \mathcal{J}$. From~(\ref{eqn:for7c}) and~(\ref{eqn:for7d}), we can obtain
	\begin{equation}
	\label{eqn:for7e}
	\begin{aligned}
	&\theta_jS(\boldsymbol{\hat \varrho}_j(t),\boldsymbol{\zeta}_j(t)) + \theta_{j^*}S(\boldsymbol{\hat \varrho}_{j^*}(t),\boldsymbol{\zeta}_{j^*}(t)) \geq \\ &\theta_jS(\boldsymbol{\hat \varrho}_{j^*}(t),\boldsymbol{\zeta}_{j^*}(t)) + \theta_{j^*}S(\boldsymbol{\hat \varrho}_j(t),\boldsymbol{\zeta}_{j}(t)),\\
	&\theta_jS(\boldsymbol{\hat \varrho}_j(t),\boldsymbol{\zeta}_j(t)) - \theta_{j^*}S(\boldsymbol{\hat \varrho}_j(t),\boldsymbol{\zeta}_{j}(t)) \geq \\ &\theta_jS(\boldsymbol{\hat \varrho}_{j^*}(t),\boldsymbol{\zeta}_{j^*}(t)) - \theta_{j^*}S(\boldsymbol{\hat \varrho}_{j^*}(t),\boldsymbol{\zeta}_{j^*}(t)), \\
	&S(\boldsymbol{\hat \varrho}_j(t),\boldsymbol{\zeta}_j(t))\big(\theta_j - \theta_{j^*}\big) \geq S(\boldsymbol{\hat \varrho}_{j^*}(t),\boldsymbol{\zeta}_{j^*}(t))\big(\theta_j - \theta_{j^*}\big).
	\end{aligned}
	\end{equation}
	By removing $(\theta_j - \theta_{j^*})$, we have $S(\boldsymbol{\hat \varrho}_j(t),\boldsymbol{\zeta}_j(t)) \geq S(\boldsymbol{\hat \varrho}_{j^*}(t),\boldsymbol{\zeta}_{j^*}(t))$. Since the satisfaction function in~(\ref{eqn:for30a}) is monotonically increasing in $\boldsymbol{\zeta}_j(t)$, thus if $S(\boldsymbol{\hat \varrho}_j(t),\boldsymbol{\zeta}_j(t)) \geq S(\boldsymbol{\hat \varrho}_{j^*}(t),\boldsymbol{\zeta}_{j^*}(t))$, then $\boldsymbol{\zeta}_j(t) \geq \boldsymbol{\zeta}_{j^*}(t)$ given $\boldsymbol{\hat \varrho}_{j} = \boldsymbol{\hat \varrho}_{j^*}$.
	
	Then, we prove that $\theta_j \geq \theta_{j^*}$ if and only if $S(\boldsymbol{\hat \varrho}_j(t),\boldsymbol{\zeta}_j(t)) \geq S(\boldsymbol{\hat \varrho}_{j^*}(t),\boldsymbol{\zeta}_{j^*}(t))$. From~(\ref{eqn:for7c})-(\ref{eqn:for7e}), we have
	\begin{equation}
	\label{eqn:for7f}
	\begin{aligned}
	&\theta_j \big(S(\boldsymbol{\hat \varrho}_j(t),\boldsymbol{\zeta}_j(t)) - S(\boldsymbol{\hat \varrho}_{j^*}(t),\boldsymbol{\zeta}_{j^*}(t))\big) \geq \\ &\theta_{j^*} \big(S(\boldsymbol{\hat \varrho}_j(t),\boldsymbol{\zeta}_j(t)) - S(\boldsymbol{\hat \varrho}_{j^*}(t),\boldsymbol{\zeta}_{j^*}(t))\big).
	\end{aligned}
	\end{equation}
	By eliminating both $S(\boldsymbol{\hat \varrho}_j(t),\boldsymbol{\zeta}_j(t)) - S(\boldsymbol{\hat \varrho}_{j^*}(t),\boldsymbol{\zeta}_{j^*}(t))$, we obtain $\theta_j \geq \theta_{j^*}$ considering $S(\boldsymbol{\hat \varrho}_j(t),\boldsymbol{\zeta}_j(t)) - S(\boldsymbol{\hat \varrho}_{j^*}(t),\boldsymbol{\zeta}_{j^*}(t)) \geq 0$ since $S(\boldsymbol{\hat \varrho}_j(t),\boldsymbol{\zeta}_j(t)) \geq S(\boldsymbol{\hat \varrho}_{j^*}(t),\boldsymbol{\zeta}_{j^*}(t))$. As a result, we prove that if $S(\boldsymbol{\hat \varrho}_j(t),\boldsymbol{\zeta}_j(t)) \geq S(\boldsymbol{\hat \varrho}_{j^*}(t),\boldsymbol{\zeta}_{j^*}(t))$, then $\theta_j \geq \theta_{j^*}$. This concludes the proof.

	\section{Proof of Proposition 2}
	\label{appx:prop1}
	
Based on~(\ref{eqn:for7c}) and (\ref{eqn:for7d}), we can derive the following expressions:
\begin{equation}
\label{eqn:for1401a}
\begin{aligned}
&C(\boldsymbol{\hat \varrho}_j(t), \boldsymbol{\varphi}_j(t))- C(\boldsymbol{\hat \varrho}_{j^*}(t), \boldsymbol{\varphi}_{j^*}(t)) \leq \\
&\theta_j \Big(S(\boldsymbol{\hat \varrho}_j(t),\boldsymbol{\zeta}_j(t)) - S(\boldsymbol{\hat \varrho}_{j^*}(t),\boldsymbol{\zeta}_{j^*}(t))\Big),
\end{aligned}
\end{equation}
\begin{equation}
\label{eqn:for1401b}
\begin{aligned}
&C(\boldsymbol{\hat \varrho}_j(t), \boldsymbol{\varphi}_j(t))- C(\boldsymbol{\hat \varrho}_{j^*}(t), \boldsymbol{\varphi}_{j^*}(t)) \geq \\ &\theta_{j^*} \Big(S(\boldsymbol{\hat \varrho}_j(t),\boldsymbol{\zeta}_j(t)) - S(\boldsymbol{\hat \varrho}_{j^*}(t),\boldsymbol{\zeta}_{j^*}(t))\Big).
\end{aligned}
\end{equation}
Given $\boldsymbol{\hat \varrho}_{j} = \boldsymbol{\hat \varrho}_{j^*}$, if $S(\boldsymbol{\hat \varrho}_j(t),\boldsymbol{\zeta}_j(t)) \geq S(\boldsymbol{\hat \varrho}_{j^*}(t),\boldsymbol{\zeta}_{j^*}(t))$, then we can obtain $C(\boldsymbol{\hat \varrho}_j(t), \boldsymbol{\varphi}_j(t)) \geq C(\boldsymbol{\hat \varrho}_{j^*}(t), \boldsymbol{\varphi}_{j^*}(t))$ from~(\ref{eqn:for1401b}), and thus $\boldsymbol{\varphi}_j(t) \geq \boldsymbol{\varphi}_{j^*}(t)$ since the cost function~(\ref{eqn:for30a2}) is monotonically increasing in $\boldsymbol{\varphi}_j(t)$. Moreover, if $C(\boldsymbol{\hat \varrho}_j(t), \boldsymbol{\varphi}_j(t)) \geq C(\boldsymbol{\hat \varrho}_{j^*}(t), \boldsymbol{\varphi}_{j^*}(t))$, then we can obtain from~(\ref{eqn:for1401a}) that $S(\boldsymbol{\hat \varrho}_j(t),\boldsymbol{\zeta}_j(t)) \geq S(\boldsymbol{\hat \varrho}_{j^*}(t),\boldsymbol{\zeta}_{j^*}(t))$, and thus $\boldsymbol{\zeta}_j(t)) \geq \boldsymbol{\zeta}_{j^*}(t))$.

	\section{Proof of Proposition 3}
	\label{appx:prop2}
	
From Lemma~\ref{lemma1} and Proposition~\ref{prop1}, we have $\boldsymbol{\zeta}_j(t)) \geq \boldsymbol{\zeta}_{j^*}(t))$ and $\boldsymbol{\varphi}_j(t) \geq \boldsymbol{\varphi}_{j^*}(t)$. If $\theta_j \geq \theta_{j^*}$, then
\begin{align}
&\theta_jS(\boldsymbol{\hat \varrho}_j(t),\boldsymbol{\zeta}_j(t)) - C(\boldsymbol{\hat \varrho}_j(t), \boldsymbol{\varphi}_j(t)) \nonumber \geq \\ 
&\theta_jS(\boldsymbol{\hat \varrho}_{j^*}(t),\boldsymbol{\zeta}_{j^*}(t)) - C(\boldsymbol{\hat \varrho}_{j^*}(t), \boldsymbol{\varphi}_{j^*}(t))  \label{eqn:for1402} \geq \\
&\theta_{j^*}S(\boldsymbol{\hat \varrho}_{j^*}(t),\boldsymbol{\zeta}_{j^*}(t)) - C(\boldsymbol{\hat \varrho}_{j^*}(t), \boldsymbol{\varphi}_{j^*}(t)) \geq 0. \nonumber
\end{align}
Thus, $\theta_jS(\boldsymbol{\hat \varrho}_j(t),\boldsymbol{\zeta}_j(t)) - C(\boldsymbol{\hat \varrho}_j(t), \boldsymbol{\varphi}_j(t)) \geq  \theta_{j^*}S(\boldsymbol{\hat \varrho}_{j^*}(t),\boldsymbol{\zeta}_{j^*}(t)) - C(\boldsymbol{\hat \varrho}_{j^*}(t), \boldsymbol{\varphi}_{j^*}(t))$ when $\theta_j \geq \theta_{j^*}$.

	\section{Proof of Lemma 2}
	\label{appx:lemma2}
	
	We first define downward ICs (DICs) and upward ICs (UICs) as the IC contraints between the VSP with type indices $j$ and $j^*$, $\forall j^* \in \{1, \ldots, j-1\}$, and between the VSP with type indices $j$ and $j^*$, $\forall j^* \in \{j+1, \ldots, J\}$, respectively, in which
	\begin{align}
	&\theta_jS(\boldsymbol{\hat \varrho}_j(t),\boldsymbol{\zeta}_j(t)) - C(\boldsymbol{\hat \varrho}_j(t), \boldsymbol{\varphi}_j(t)) \geq \nonumber\\
	&\theta_jS(\boldsymbol{\hat \varrho}_{j^*}(t),\boldsymbol{\zeta}_{j^*}(t)) - C(\boldsymbol{\hat \varrho}_{j^*}(t), \boldsymbol{\varphi}_{j^*}(t)). \label{eqn:for7c1}
	\end{align}
	Then, we prove that DICs can be simplified into two consecutive types called local DICs (LDICs), which are the IC constraints between the VSP with type indices $j$ and $j-1$. Consider that $\theta_{j-1} < \theta_j < \theta_{j+1}, j \in \{2, \ldots, J-1\}$, we can derive
	\begin{align}
	&\theta_{j+1}S(\boldsymbol{\hat \varrho}_{j+1}(t),\boldsymbol{\zeta}_{j+1}(t)) - C(\boldsymbol{\hat \varrho}_{j+1}(t), \boldsymbol{\varphi}_{j+1}(t)) \geq  \nonumber\\
	&\theta_{j+1}S(\boldsymbol{\hat \varrho}_j(t),\boldsymbol{\zeta}_{j}(t)) - C(\boldsymbol{\hat \varrho}_j(t), \boldsymbol{\varphi}_{j}(t)), \label{eqn:for7c2}
	\end{align}
	\begin{align}
	&\theta_{j}S(\boldsymbol{\hat \varrho}_j(t),\boldsymbol{\zeta}_{j}(t)) - C(\boldsymbol{\hat \varrho}_j(t), \boldsymbol{\varphi}_{j}(t)) \geq \nonumber\\ 
	&\theta_{j}S(\boldsymbol{\hat \varrho}_{j-1}(t),\boldsymbol{\zeta}_{j-1}(t)) - C(\boldsymbol{\hat \varrho}_{j-1}(t), \boldsymbol{\varphi}_{j-1}(t)). \label{eqn:for7c3}
	\end{align}
	From Lemma~\ref{lemma1}, we have proven that $S(\boldsymbol{\hat \varrho}_j(t),\boldsymbol{\zeta}_j(t)) \geq S(\boldsymbol{\hat \varrho}_{j^*}(t),\boldsymbol{\zeta}_{j^*}(t))$ if $\theta_j \geq \theta_{j^*}$. Then, we can obtain
	\begin{align}
	&\theta_{j+1} (S(\boldsymbol{\hat \varrho}_j(t),\boldsymbol{\zeta}_j(t)) - S(\boldsymbol{\hat \varrho}_{j-1}(t),\boldsymbol{\zeta}_{j-1}(t))) \geq \nonumber\\ 
	&\theta_{j} (S(\boldsymbol{\hat \varrho}_j(t),\boldsymbol{\zeta}_j(t)) - S(\boldsymbol{\hat \varrho}_{j-1}(t),\boldsymbol{\zeta}_{j-1}(t))) \geq \nonumber\\ 
	&C(\boldsymbol{\hat \varrho}_j(t), \boldsymbol{\varphi}_j(t))- C(\boldsymbol{\hat \varrho}_{j-1}(t), \boldsymbol{\varphi}_{j-1}(t)), 	\label{eqn:for7c4}
	\end{align}
	and
	\begin{align}
	&\theta_{j+1}S(\boldsymbol{\hat \varrho}_{j+1}(t),\boldsymbol{\zeta}_{j+1}(t)) - C(\boldsymbol{\hat \varrho}_{j+1}(t), \boldsymbol{\varphi}_{j+1}(t)) \geq \nonumber\\
	&\theta_{j+1}S(\boldsymbol{\hat \varrho}_j(t),\boldsymbol{\zeta}_{j}(t)) - C(\boldsymbol{\hat \varrho}_j(t), \boldsymbol{\varphi}_{j}(t)) \geq \nonumber\\ &\theta_{j+1}S(\boldsymbol{\hat \varrho}_{j-1}(t),\boldsymbol{\zeta}_{j-1}(t)) - C(\boldsymbol{\hat \varrho}_{j-1}(t), \boldsymbol{\varphi}_{j-1}(t)). \label{eqn:for7c5}
	\end{align}
	As $\theta_{j+1}S(\boldsymbol{\hat \varrho}_{j+1}(t),\boldsymbol{\zeta}_{j+1}(t)) - C(\boldsymbol{\hat \varrho}_{j+1}(t), \boldsymbol{\varphi}_{j+1}(t)) \geq \theta_{j+1}S(\boldsymbol{\hat \varrho}_{j-1}(t),\boldsymbol{\zeta}_{j-1}(t)) - C(\boldsymbol{\hat \varrho}_{j-1}(t), \boldsymbol{\varphi}_{j-1}(t))$,
	it can be expanded from the VSP with type index $j-1$ to type index $1$ to show that all the DICs satisfy, i.e.,
	\begin{align}
	&\theta_{j+1}S(\boldsymbol{\hat \varrho}_{j+1}(t),\boldsymbol{\zeta}_{j+1}(t)) - C(\boldsymbol{\hat \varrho}_{j+1}(t), \boldsymbol{\varphi}_{j+1}(t)) \geq \label{eqn:for7c7}\\
	&\theta_{j+1}S(\boldsymbol{\hat \varrho}_{j-1}(t),\boldsymbol{\zeta}_{j-1}(t)) - C(\boldsymbol{\hat \varrho}_{j-1}(t), \boldsymbol{\varphi}_{j-1}(t)) \geq \ldots
	\geq \nonumber \\ &\theta_{j+1}S(\boldsymbol{\hat \varrho}_1(t),\boldsymbol{\zeta}_{1}(t)) - C(\boldsymbol{\hat \varrho}_1(t), \boldsymbol{\varphi}_{1}(t)), \forall j \in \{1,\ldots,J-1\} \nonumber.
	\end{align}
	As a result, we can conclude that by using the LDICs, all the DICs satisfy, and then the DICs can be reduced. Likewise, we can prove that all the UICs hold considering the local UICs (LUICs) and Lemma~\ref{lemma1}.

	\section{Proof of Theorem 1}
	\label{appx:theorem1}

	Consider $\mathbb{C}$ has $\Big(\boldsymbol{\zeta}^{-n}_{(\varsigma+1)}(t),\boldsymbol{\varphi}^{-n}_{(\varsigma+1)}(t)\Big)$ and $\Big(\boldsymbol{\zeta}^{-n}_{(\varsigma)}(t),\boldsymbol{\varphi}^{-n}_{\varsigma)}(t)\Big)$ such that
	\begin{equation}
	\label{eqn:for16a}
	\begin{aligned}
	\Big(\boldsymbol{\zeta}^{-n}_{(\varsigma+1)}(t),\boldsymbol{\varphi}^{-n}_{(\varsigma+1)}(t)\Big) > \Big(\boldsymbol{\zeta}^{-n}_{(\varsigma)}(t),\boldsymbol{\varphi}^{-n}_{\varsigma)}(t)\Big).
	\end{aligned}
	\end{equation}
	Given that ${\hat \Phi}_{(\varsigma+1)}^n\Big(\boldsymbol{\zeta}^{-n}_{(\varsigma)}(t),\boldsymbol{\varphi}^{-n}_{\varsigma)}(t)\Big) = \max \phantom{1} \Phi_{(\varsigma+1)}^n\Big(\boldsymbol{\zeta}^{-n}_{(\varsigma)}(t),\boldsymbol{\varphi}^{-n}_{\varsigma)}(t)\Big)$, we can show that the choice ${\hat \Phi}^n$ in $\Phi^n$ at SV-$n$ is increasing, which is
	\begin{equation}
	\label{eqn:for16b}
	\begin{aligned}
	{\hat \Phi}_{(\varsigma+2)}^n\Big(\boldsymbol{\zeta}_{(\varsigma+1)}^{-n}(t), \boldsymbol{\varphi}_{(\varsigma+1)}^{-n}(t)\Big) \geq {\hat \Phi}_{(\varsigma+1)}^n(t)\Big(\boldsymbol{\zeta}_{(\varsigma)}^{-n}(t), \boldsymbol{\varphi}_{(\varsigma)}^{-n}(t)\Big).
	\end{aligned}
	\end{equation}
	Using $\Big(\boldsymbol{\zeta}_{(\varsigma+1)}^{-n}(t), \boldsymbol{\varphi}_{(\varsigma+1)}^{-n}(t)\Big) > \Big(\boldsymbol{\zeta}_{(\varsigma)}^{-n}(t), \boldsymbol{\varphi}_{(\varsigma)}^{-n}(t)\Big)$ and ${\hat \Phi}_{(\varsigma+2)}^n\Big(\boldsymbol{\zeta}_{(\varsigma+1)}^{-n}(t), \boldsymbol{\varphi}_{(\varsigma+1)}^{-n}(t)\Big) < {\hat \Phi}_{(\varsigma+1)}^n\Big(\boldsymbol{\zeta}_{(\varsigma)}^{-n}(t), \boldsymbol{\varphi}_{(\varsigma)}^{-n}(t)\Big)$ to derive a contradiction, we can obtain
	\begin{align}
	&{\mu}_n\bigg({\hat \Phi}_{(\varsigma+1)}^n\Big(\boldsymbol{\zeta}_{(\varsigma)}^{-n}(t), \boldsymbol{\varphi}_{(\varsigma)}^{-n}(t)\Big),\boldsymbol{\zeta}_{(\varsigma+2)}^{-n}(t), \boldsymbol{\varphi}_{(\varsigma+2)}^{-n}(t)\bigg) + \nonumber\\ &{\mu}_n\bigg({\hat \Phi}_{(\varsigma+2)}^n\Big(\boldsymbol{\zeta}_{(\varsigma+1)}^{-n}(t), \boldsymbol{\varphi}_{(\varsigma+1)}^{-n}(t)\Big),\boldsymbol{\zeta}_{(\varsigma+1)}^{-n}(t), \boldsymbol{\varphi}_{(\varsigma+1)}^{-n}(t)\bigg) \geq \nonumber \\ &{\mu}_n\bigg({\hat \Phi}_{(\varsigma+2)}^n\Big(\boldsymbol{\zeta}_{(\varsigma+1)}^{-n}(t), \boldsymbol{\varphi}_{(\varsigma+1)}^{-n}(t)\Big),\boldsymbol{\zeta}_{(\varsigma+2)}^{-n}(t), \boldsymbol{\varphi}_{(\varsigma+2)}^{-n}(t)\bigg) +  \nonumber\\ &{\mu}_n\bigg({\hat \Phi}_{(\varsigma+1)}^n\Big(\boldsymbol{\zeta}_{(\varsigma)}^{-n}(t), \boldsymbol{\varphi}_{(\varsigma)}^{-n}(t)\Big),\boldsymbol{\zeta}_{(\varsigma+1)}^{-n}(t), \boldsymbol{\varphi}_{(\varsigma+1)}^{-n}(t)\bigg). \label{eqn:for16c}
	\end{align}
	Based on the definition of $\Phi_{(\varsigma+1)}^n$, ${\hat \Phi}_{(\varsigma+1)}^n\Big(\boldsymbol{\zeta}_{(\varsigma)}^{-n}(t), \boldsymbol{\varphi}_{(\varsigma)}^{-n}(t)\Big) \in \Phi_{(\varsigma+1)}^n\Big(\boldsymbol{\zeta}_{(\varsigma)}^{-n}(t), \boldsymbol{\varphi}_{(\varsigma)}^{-n}(t)\Big)$ describes that
	\begin{equation}
	\label{eqn:for16d}
	\begin{aligned}
	&{\mu}_n\bigg({\hat \Phi}_{(\varsigma+1)}^n\Big(\boldsymbol{\zeta}_{(\varsigma)}^{-n}(t), \boldsymbol{\varphi}_{(\varsigma)}^{-n}(t)\Big),\boldsymbol{\zeta}_{(\varsigma+1)}^{-n}(t), \boldsymbol{\varphi}_{(\varsigma+1)}^{-n}(t)\bigg) \geq \\ &{\mu}_n\bigg({\hat \Phi}_{(\varsigma+2)}^n\Big(\boldsymbol{\zeta}_{(\varsigma+1)}^{-n}(t), \boldsymbol{\varphi}_{(\varsigma+1)}^{-n}(t)\Big),\boldsymbol{\zeta}_{(\varsigma+1)}^{-n}(t), \boldsymbol{\varphi}_{(\varsigma+1)}^{-n}(t)\bigg).
	\end{aligned}
	\end{equation}
	Then, from~(\ref{eqn:for16c}) and~(\ref{eqn:for16d}), we have
	\begin{equation}
	\label{eqn:for16e}
	\begin{aligned}
	&{\mu}_n\bigg({\hat \Phi}_{(\varsigma+1)}^n\Big(\boldsymbol{\zeta}_{(\varsigma)}^{-n}(t), \boldsymbol{\varphi}_{(\varsigma)}^{-n}(t)\Big),\boldsymbol{\zeta}_{(\varsigma+2)}^{-n}(t), \boldsymbol{\varphi}_{(\varsigma+2)}^{-n}(t)\bigg) \geq \\ &{\mu}_n\bigg({\hat \Phi}_{(\varsigma+2)}^n\Big(\boldsymbol{\zeta}_{(\varsigma+1)}^{-n}(t), \boldsymbol{\varphi}_{(\varsigma+1)}^{-n}(t)\Big),\boldsymbol{\zeta}_{(\varsigma+2)}^{-n}(t), \boldsymbol{\varphi}_{(\varsigma+2)}^{-n}(t)\bigg),
	\end{aligned}
	\end{equation}
	and thus ${\hat \Phi}_{(\varsigma+1)}^n\Big(\boldsymbol{\zeta}_{(\varsigma)}^{-n}(t), \boldsymbol{\varphi}_{(\varsigma)}^{-n}(t)\Big) \in \Phi_{(\varsigma+2)}^n\Big(\boldsymbol{\zeta}_{(\varsigma+1)}^{-n}(t), \boldsymbol{\varphi}_{(\varsigma+1)}^{-n}(t)\Big)$. In other words, it can be noticed that ${\hat \Phi}_{(\varsigma+1)}^n\Big(\boldsymbol{\zeta}_{(\varsigma)}^{-n}(t),\boldsymbol{\varphi}_{(\varsigma)}^{-n}(t)\Big) > {\hat \Phi}_{(\varsigma+2)}^n\Big(\boldsymbol{\zeta}_{(\varsigma+1)}^{-n}(t), \boldsymbol{\varphi}_{(\varsigma+1)}^{-n}(t)\Big)$. Based on the definition of ${\hat \Phi}_{(\varsigma+1)}^n$, it can be observed that ${\hat \Phi}_{(\varsigma+2)}^n\Big(\boldsymbol{\zeta}_{(\varsigma+1)}^{-n}(t), \boldsymbol{\varphi}_{(\varsigma+1)}^{-n}(t)\Big) \geq {\hat \Phi}_{(\varsigma+1)}^n\Big(\boldsymbol{\zeta}_{(\varsigma)}^{-n}(t), \boldsymbol{\varphi}_{(\varsigma)}^{-n}(t)\Big)$ which contradicts with the above condition. As a result, ${\hat \Phi}^n$ is an increasing function. 
	
	Using the increasing function of ${\hat \Phi}^n$, the VSP can update the contract of SV-$n$ at iteration $\varsigma + 1$ 
	if the condition in~(\ref{eqn:for11c}) is satisfied. To this end, the use of $\gamma$ can be denoted as the optimality tolerance to stop the iterative algorithm. In particular, when all the SVs in $\mathcal{N}(t)$ do not hold the condition in~(\ref{eqn:for11c}) at iteration $\varsigma= \varkappa$, we have $\Big(\boldsymbol{\zeta}^n_{(\varkappa+1)}(t),\boldsymbol{\varphi}^n_{(\varkappa+1)}(t)\Big) = \Big(\boldsymbol{\zeta}^n_{(\varkappa)}(t),\boldsymbol{\varphi}^n_{(\varkappa)}(t)\Big), \forall n \in \mathcal{N}(t).$
	Hence, for the rest of $\varsigma$ values starting from $\varkappa$, the iterative algorithm provides the same ${\mu}_{n}\Big(\boldsymbol{\zeta}_{(\varsigma+1)}^n(t),\boldsymbol{\varphi}_{(\varsigma+1)}^n(t),\boldsymbol{\zeta}_{(\varsigma)}^{-n}(t), \boldsymbol{\varphi}_{(\varsigma)}^{-n}(t)\Big), \forall n \in \mathcal{N}(t)$. It means that the iterative algorithm converges under $\gamma$.
	
	\section{Proof of Theorem 2}
	\label{appx:theorem2}
	
	To prove that the iterative algorithm can converge to the equilibrium solution, we first show that an equilibrium exists by finding a fixed point of $\Phi$. Suppose that $\mathbb{C}^*$ contains $\big(\boldsymbol{\zeta}(t), \boldsymbol{\varphi}(t)\big) \in \mathbb{C}$. 
This $\mathbb{C}^*$ is a non-empty contract space because ${\hat \Phi}\big(\min \mathbb{C}\big) \geq \min \mathbb{C}$, ${\hat \Phi} \in \Phi$, and thus ${\hat \Phi}\big(\max \mathbb{C}^*\big) \geq \max \mathbb{C}^*$.
Since ${\hat \Phi}$ is an increasing function, we obtain ${\hat \Phi}\Big({\hat \Phi}\big(\max \mathbb{C}^*\big)\Big) \geq {\hat \Phi}\big(\max \mathbb{C}^*\big)$, and thus ${\hat \Phi}\big(\max \mathbb{C}^*\big) \in \mathbb{C}^*$. Then, we have ${\hat \Phi}\big(\max \mathbb{C}^*\big) \leq \max \mathbb{C}^*$ and obtain ${\hat \Phi}\big(\max \mathbb{C}^*\big) = \max \mathbb{C}^*$. Consequently, $\max \phantom{1} \mathbb{C}^*$ is a fixed point of $\Phi$ which contains $\Big(\boldsymbol{\hat \zeta}(t), \boldsymbol{\hat \varphi}(t)\Big)$ and indicates that the equilibrium exists.
Since the profits of all SV-$n$, $\forall n \in \mathcal{N}(t)$, follow the increasing function until they converge, no SVs can further improve the profits when their optimal contracts are obtained. In this case, it implies that the iterative algorithm must converge to the equilibrium contract solution $\Big(\boldsymbol{\hat \zeta}(t), \boldsymbol{\hat \varphi}(t)\Big)$, to guarantee that there exists no such SV-$n$ can improve its profit, i.e., $\mu_n\Big(\boldsymbol{\hat \zeta}^n(t), \boldsymbol{\hat \varphi}^n(t), \boldsymbol{\hat \zeta}^{-n}(t), \boldsymbol{\hat \varphi}^{-n}(t)\Big)$,  unilaterally~\cite{Bernheim:1986,Fraysse:1993}.

	\section{Proof of Theorem 3}
	\label{appx:theorem3}
	
	Let $n^\dagger \subset N$ denote the number of SVs in $\mathcal{N}(t)$ which complete the best response optimizations at iteration $\varsigma+1$, i.e., $\Phi_{(\varsigma+1)}^n\Big(\boldsymbol{\zeta}_{(\varsigma)}^{-n}(t), \boldsymbol{\varphi}_{(\varsigma)}^{-n}(t)\Big)$, without modifying their current contracts. We also define $U \in [0,1]$ as the current normalized candidate profit. Given that $R$ is the total number of contract policies, $n^\dagger$-th SV obtains $R-1$ new contract policies in $\mathbb{C}_{n^\dagger}$ (whose normalized candidate profits need to be compared with $U$) when the SV computes its $\Phi_{(\varsigma+1)}^{n^\dagger}\Big(\boldsymbol{\zeta}_{(\varsigma)}^{-n^\dagger}(t), \boldsymbol{\varphi}_{(\varsigma)}^{-n^\dagger}(t)\Big)$. In this case, the probability that all the $R-1$ policies cannot drive the next normalized candidate profit higher than $U$ is $U^{R-1}$. Thus, the $(n^\dagger+1)$-th SV can try to find its $\Phi_{(\varsigma+1)}^{n^\dagger+1}\Big(\boldsymbol{\zeta}_{(\varsigma)}^{-(n^\dagger+1)}(t), \boldsymbol{\varphi}_{(\varsigma)}^{-(n^\dagger+1)}(t)\Big)$. In contrast, the probability that one of the $R-1$ policies is the best response can be defined as $1-U^{R-1}$. As such, the normalized candidate profit changes from $U$ to a value larger than ${\hat \mu}_{n^\dagger}\Big(\boldsymbol{\zeta}_{(\varsigma)}^{n^\dagger}(t),\boldsymbol{\varphi}_{(\varsigma)}^{n^\dagger}(t),\boldsymbol{\zeta}_{(\varsigma)}^{-n^\dagger}(t), \boldsymbol{\varphi}_{(\varsigma)}^{-n^\dagger}(t)\Big) + \gamma$ with the probability $1-{\tilde \mu}_{n^\dagger}^{R-1}$, where ${\tilde \mu}_{n^\dagger} = {\hat \mu}_{n^\dagger}+\gamma$, and ${\hat \mu}_{n^\dagger} \in [0,1]$ is the normalized current profit of SV-$n^\dagger$ (we simplify the form of ${\hat \mu}_{n^\dagger}(.)$ into ${\hat \mu}_{n^\dagger}$). At this state, we set the number of SVs in $\mathcal {N}(t)$ that have finished the best response back to $1$. 

	Utilizing the Markov chain, let $\vartheta(U,n^\dagger)$ to be the number of steps of the Algorithm~\ref{ICEA}, when $n^\dagger$ SVs have finished the best response (without changing the contracts with normalized candidate profit $U$) before reaching the convergence. Then, we obtain
	\begin{equation}
	\label{eqn:for17a}
	\begin{aligned}
	\vartheta(U,n^\dagger) &= U^{R-1}\vartheta(U,n^\dagger+1)\\ &+\int_{U}^{1}(R-1){\tilde \mu}_{n^\dagger}^{R-2}(\vartheta({\tilde \mu}_{n^\dagger},1)+1)d{\tilde \mu}_{n^\dagger},
	\end{aligned}
	\end{equation}
	where the first component specifies that the $(n^\dagger+1)$-th SV does not modify its contract, i.e., $\vartheta(U,n^\dagger) = \vartheta(U,n^\dagger+1)$, with probability $U^{R-1}$. Additionally, the second component represents that the $(n^\dagger+1)$-th SV finds a new best response with normalized candidate profit ${\tilde \mu}_{n^\dagger}$ and probability density $(R-1){\tilde \mu}_{n^\dagger}^{R-2}$, and thus one additional step is produced, i.e., $\vartheta(U,n^\dagger) = \vartheta({\tilde \mu}_{n^\dagger},1)+1$. Given the boundary conditions $\vartheta(U,N) = 0$ and $\vartheta(1,n^\dagger) = 0$ for all $U$ and $n^\dagger$, respectively, and  $\Xi(U) = \overset{1}{\underset{U}{\int}}(R-1){\tilde \mu}_{n^\dagger}^{R-2}(\vartheta({\tilde \mu}_{n^\dagger},1)+1)d{\tilde \mu}_{n^\dagger}$, we have
	\begin{equation}
	\label{eqn:for17b}
	\begin{aligned}
	\left\{	\begin{array}{ll}
	\vartheta(U,1) &= U^{R-1}\vartheta(U,2) + \Xi(U),\\
	\vdots &= \vdots \\
	\vartheta(U,N-2) &= U^{R-1}\vartheta(U,N-1) + \Xi(U),\\
	\vartheta(U,N-1) &= \Xi(U).
	\end{array}	\right.\\
	\end{aligned}
	\end{equation}
	From (\ref{eqn:for17b}), we can obtain that $\vartheta(U,1) = \Xi(U)\Pi(U)$, where 
	\begin{equation}
	\label{eqn:for17b2}
	\begin{aligned}
	\Pi(U) =  1 + U^{R-1} + \ldots + U^{(R-1)(N-2)}.
	\end{aligned}
	\end{equation}
	As such, the differential equation of $\vartheta(U,1)$ with respect to $U$ is
	\begin{align}
	&\frac{d\vartheta(U,1)}{dU} + \Big((R-1)U^{R-2}\Pi(U) - \frac{1}{\Pi(U)}\frac{d\Pi(U)}{dU}\Big)\vartheta(U,1) = \nonumber\\ &-(R-1)U^{R-2}\Pi(U).
	\end{align}
	Given that the boundary condition $\vartheta(1,1) = 0$ and 
	\begin{equation}
	\label{eqn:for17e}
	\begin{aligned}
	\Omega(U) &= \int_{0}^{U}\Big((R-1){\tilde \mu}_{1}^{R-2}\Pi({\tilde \mu}_{1}) - \frac{1}{\Pi({\tilde \mu}_{1})}\frac{d\Pi({\tilde \mu}_{1})}{d{\tilde \mu}_{1}}\Big)d{\tilde \mu}_{1} \\
	&= \int_{0}^{U}(R-1){\tilde \mu}_{1}^{R-2}\Pi({\tilde \mu}_{1})d{\tilde \mu}_{1}  - \log(\Pi(U)),
	\end{aligned}
	\end{equation} 
	the function $\vartheta(U,1)$ becomes $\vartheta(U,1) = e^{-\Omega(U)}\int_{U}^{1}(R-1){\tilde \mu}_{1}^{R-2}\Pi({\tilde \mu}_{1})e^{\Omega({\tilde \mu}_{1})}d{\tilde \mu}_{1}.$
	As $\vartheta(U,1)$ is decreasing in $U$, the number of steps is upper-bounded by $\vartheta(0,1)$. Specifically, considering $\Omega(0) = 0$ from (\ref{eqn:for17e}) and $\Pi(0) = 1$ from (\ref{eqn:for17b2}), then
	\begin{equation}
	\label{eqn:for17f}
	\begin{aligned}
	\vartheta(0,1) &= \int_{0}^{1}(R-1){\tilde \mu}_{1}^{R-2}e^{\sum_{n=0}^{N-2}\frac{{\tilde \mu}_{1}^{(R-1)(n+1)}}{n+1}}d{\tilde \mu}_{1} \\
	&= \int_{0}^{1}e^{\sum_{n=0}^{N-2}\frac{{\tilde \mu}_{1}^{(R-1)(n+1)}}{n+1}}d{\tilde \mu}_{1}^{(R-1)} \\
	&= \int_{0}^{1 - \frac{1}{N}}e^{\sum_{n=1}^{N-1}\frac{{\tilde \mu}_{1}^{(R-1)n}}{n}}d{\tilde \mu}_{1}^{(R-1)} + \\&\int_{1 - \frac{1}{N}}^{1}e^{\sum_{n=1}^{N-1}\frac{{\tilde \mu}_{1}^{(R-1)n}}{n}}d{\tilde \mu}_{1}^{(R-1)} \\
	&\leq \int_{0}^{1 - \frac{1}{N}}e^{\sum_{n=1}^{\infty}\frac{{\tilde \mu}_{1}^{(R-1)n}}{n}}d{\tilde \mu}_{1}^{(R-1)} + \frac{1}{N}e^{\sum_{n=1}^{N-1}\frac{1}{n}} \\
	&= \int_{0}^{1 - \frac{1}{N}}\frac{d{\tilde \mu}_{1}^{(R-1)}}{1 - {\tilde \mu}_{1}^{(R-1)}} + e^z + O(1/N)\\
	&= \log(N) + e^z +O(1/N)),
	\end{aligned}
	\end{equation}
	where $z \approx 0.5772$ is the Euler constant. 
	From (\ref{eqn:for17f}), we can see that the number of steps of the Algorithm~\ref{ICEA} is bounded above to the polynomial complexity $\log(N) + e^z +O(1/N)$~\cite{Durand:2016}.

\section{Proof of Theorem 4}
\label{appx:theorem4}

We first consider that the local loss function for all learning SVs, i.e., $\epsilon_n, \forall n \in \mathcal{N}(t)$, are all $\delta_1$-smooth and $\delta_2$-strongly convex~\cite{Amiri:2020}, i.e.,
\begin{equation}
\label{eqn:for18a}
\begin{aligned}
\epsilon_n(\mathbf{W}(t)) - \epsilon_n(\mathbf{\hat W}(t)) &\leq \langle\mathbf{W}(t) - \mathbf{\hat W}(t), \nabla\epsilon_n(\mathbf{\hat W}(t))\rangle \\&+ 
\frac{\delta_1}{2}\|\mathbf{W}(t) - \mathbf{\hat W}(t)\|_2^2, \forall n \in \mathcal{N}(t),
\end{aligned}
\end{equation}
and
\begin{equation}
\label{eqn:for18b}
\begin{aligned}
\epsilon_n(\mathbf{W}(t)) - \epsilon_n(\mathbf{\hat W}(t)) &\geq \langle\mathbf{W}(t) - \mathbf{\hat W}(t), \nabla\epsilon_n(\mathbf{\hat W}(t))\rangle \\&+ 
\frac{\delta_2}{2}\|\mathbf{W}(t) - \mathbf{\hat W}(t)\|_2^2, \forall n \in \mathcal{N}(t),
\end{aligned}
\end{equation}
respectively. Then, given that $0 < \kappa(t) \leq \min(1,\frac{1}{\delta_2\tau_{\emph{\mbox{th}}}}), \forall t$, where $\delta_2$ is a positive constant and $\kappa(t) = \kappa_1^{(\tau_{\emph{\mbox{th}}})}(t) = \ldots = \kappa_N^{(\tau_{\emph{\mbox{th}}})}(t)$ from~(\ref{eqn3d}) with fixed $\beta_{p_n}^{(\tau_{\emph{\mbox{th}}})}(t)$ and $\beta_{q_n}^{(\tau_{\emph{\mbox{th}}})}(t)$, $\forall n \in \mathcal{N}(t)$, we can obtain the expected squared L2-norm global model gap~\cite{Amiri:2020} as follows:
\begin{equation}
\label{eqn18c}
\begin{aligned}
\mathbb{E}[\|\mathbf{W}(t) - \mathbf{W}^*\|_2^2] &\leq \bigg(\prod_{t^*=0}^{t-1}f_1(t^*)\bigg)\|\mathbf{W}(0) - \mathbf{W}^*\|_2^2 \\&+ \sum_{t^\dagger=0}^{t-1}f_2(t^\dagger)\prod_{t^*=t^\dagger+1}^{t-1}f_1(t^*),
\end{aligned}
\end{equation}
where
\begin{equation}
\label{eqn18d}
\begin{aligned}
f_1(t^*) \triangleq 1 - \delta_2\kappa(t^*)(\tau_{\emph{\mbox{th}}} - \kappa(t^*)(\tau_{\emph{\mbox{th}}}-1)),
\end{aligned}
\end{equation}
\begin{equation}
\label{eqn18e}
\begin{aligned}
&f_2(t^*) \triangleq \frac{(I-N)\kappa^2(t^*)\tau^2_{\emph{\mbox{th}}}\Upsilon^2}{N(I-1)} + 2\kappa(t^*)(\tau_{\emph{\mbox{th}}} - 1)\Lambda + \\
&(1+\delta_2(1-\kappa(t^*)))\kappa^2(t^*)\Upsilon^2\frac{\tau_{\emph{\mbox{th}}}(\tau_{\emph{\mbox{th}}}-1)(2\tau_{\emph{\mbox{th}}}-1)}{6} + \\
&\kappa^2(t^*)(\tau^2_{\emph{\mbox{th}}} + \tau_{\emph{\mbox{th}}} - 1)\Upsilon^2,
\end{aligned}
\end{equation}
\begin{equation}
\label{eqn18f}
\begin{aligned}
\Lambda \triangleq \Psi^*(\mathbf{W}^*) - \frac{1}{I}\sum_{i=1}^I\epsilon_i(\mathbf{W}^*) \geq 0,
\end{aligned}
\end{equation}
and $\Upsilon = \max \|\mathbf{W}(t)\|$ is the maximum global model for all learning rounds. In this case, the first component $\bigg(\prod_{t^*=0}^{t-1}f_1(t^*)\bigg)\|\mathbf{W}(0) - \mathbf{W}^*\|_2^2$ in~(\ref{eqn18c}) indicates the global model gap in the initial round. Meanwhile, the second component $\sum_{t^\dagger=0}^{t-1}f_2(t^\dagger)\prod_{t^*=t^\dagger+1}^{t-1}f_1(t^*)$ in~(\ref{eqn18c}) implies the cumulate impact of the proposed SV selection method on the global model convergence.

Using the $\delta$-smoothness function in (\ref{eqn:for18a}) for the global loss $\Psi(.)$, we have the below expression after $t^\diamond$ rounds, i.e., 
\begin{equation}
\label{eqn3l}
\begin{aligned}
&\mathbb{E}[\Psi(\mathbf{W}(t^\diamond))] - \Psi^*(\mathbf{W}^*) \leq \frac{\delta}{2}\mathbb{E}[\|\mathbf{W}(t^\diamond) - \mathbf{W}^*\|_2^2] \\
&\leq \frac{\delta}{2}\bigg(\prod_{t^*=0}^{t^\diamond-1}f_1(t^*)\bigg)\|\mathbf{W}(0) - \mathbf{W}^*\|_2^2 + \\ &\frac{\delta}{2}\sum_{t^\dagger=0}^{t^\diamond-1}f_2(t^\dagger)\prod_{t^*=t^\dagger+1}^{t^\diamond-1}f_1(t^*).
\end{aligned}
\end{equation}
Equation (\ref{eqn3l}) shows that the global loss gap is upper bounded by $\frac{\delta}{2}\mathbb{E}[\|\mathbf{W}(t^\diamond) - \mathbf{W}^*\|_2^2]$.
Then, due to the decreasing learning rate $\lim_{t\rightarrow\infty}\kappa(t) = 0$, we can observe that $\lim_{t^\diamond\rightarrow\infty}\big[\mathbb{E}[\Psi(\mathbf{W}(t^\diamond))] - \Psi^*(\mathbf{W}^*)\big] = 0$, which implies that the global loss gap eventually will reach zero. This concludes the proof.

\end{document}